\let\olddefinition\definition
\renewcommand{\definition}{\olddefinition\normalfont}	%
\DeclareMathOperator{\const}{const}
\DeclareMathOperator{\Lab}{Lab}
\DeclareMathOperator{\free}{free}
\DeclareMathOperator{\wt}{wt}
\DeclareMathOperator{\MSO}{MSO}
\newcommand{\call}{\mathrm{call}}
\newcommand{\ret}{\mathrm{ret}}
\newcommand{\intt}{\mathrm{int}}
\newcommand{\pcall}{\mathrm{pcall}}
\newcommand{\push}{\mathrm{push}}
\newcommand{\pop}{\mathrm{pop}}
\newcommand{\shift}{\mathrm{shift}}
\DeclareMathOperator{\acc}{acc}
\DeclareMathOperator{\Fin}{Fin}
\DeclareMathOperator{\Tree}{Tree}
\DeclareMathOperator{\Next}{Next}
\DeclareMathOperator{\Succ}{Succ}
\DeclareMathOperator{\myIf}{IF}
\DeclareMathOperator{\myThen}{THEN}
\DeclareMathOperator{\myElse}{ELSE}
\newcommand{\avv}{\curvearrowright }
\newcommand{\thickhline}{%
	\noalign {\ifnum 0=`}\fi \hrule height 1pt
	\futurelet \reserved@a \@xhline
}
\newcolumntype{"}{@{\hskip\tabcolsep\vrule width 1pt\hskip\tabcolsep}}
\begin{document}
\title{Weighted Operator Precedence Languages}
	\author{Manfred Droste\inst{1} \and
		Stefan D\"uck\inst{1}\thanks{supported by Deutsche Forschungsgemeinschaft (DFG) Graduiertenkolleg 1763 (QuantLA).} \and
		Dino Mandrioli\inst{2} \and
		Matteo Pradella\inst{2,3}
		}
	\institute{Institute of Computer Science, Leipzig University, D-04109 Leipzig, Germany
		\email{\{droste,dueck\}@informatik.uni-leipzig.de}
			\and
			Dipartimento di Elettronica, Informazione e Bioingegneria (DEIB), Politecnico di Milano, Piazza Leonardo Da Vinci 32, 20133 Milano, Italy
		\email{\{dino.mandrioli,matteo.pradella\}@polimi.it}
        \and 
        IEIIT, Consiglio Nazionale delle Ricerche, via Ponzio 34/5, 20133 Milano, Italy
	}
\authorrunning{M. Droste, S. D\"uck, D. Mandrioli, and
	M. Pradella}
\toctitle{Weighted Automata and Logics for Operator Precedence Languages}
\tocauthor{Manfred Droste, Stefan D\"uck, Dino Mandrioli, and
	Matteo Pradella}
\maketitle
\begin{abstract}
In the last years renewed investigation of operator precedence languages (OPL) led to discover important properties thereof: OPL are closed with respect to all major operations, are characterized, besides the original grammar family, in terms of an automata family and an MSO logic; furthermore they significantly generalize the well-known visibly pushdown languages (VPL).
In another area of research, quantitative models of systems are also
greatly in demand. In this paper, we lay the foundation to marry these two research fields. We introduce weighted operator precedence automata and show how they are both strict extensions of OPA and weighted visibly pushdown automata.
We prove a Nivat-like result which shows that quantitative OPL can be described by unweighted OPA and very particular weighted OPA.
In a B\"uchi-like theorem, we show that weighted OPA are expressively equivalent to a weighted MSO-logic for OPL.
\end{abstract}
\keywords{
	quantitative automata, operator precedence languages, VPL, quantitative logic %
}
	\section{Introduction}
In the long history of formal languages the family of regular languages (RL), those that are recognized by finite state machines (FSM) or are generated by regular grammars, has always played a major role: thanks to its simplicity and naturalness it enjoys properties that are only partially extended to larger families. Among the many positive results that have been achieved for RL (e.g., expressiveness, decidability, minimization, ...), those of main interest in this paper are the following:
\begin{itemize}\renewcommand{\labelitemi}{$\bullet$}
	\item RLs have been characterized in terms of various mathematical logics. The pioneering papers are due to B\"uchi, Elgot, and Trakhtenbrot \cite{Bue60,Elg61,Tra61}  who independently developed a monadic second order (MSO) logic defining exactly the RL family. This work too has been followed by many further results; in particular those that exploited weaker but simpler logics such as first-order, propositional, and temporal ones which culminated in the breakthrough of model checking to support automatic verification \cite{McNaughtPap71,Emerson90,DBLP:journals/acta/ChoffrutMMP12}. 
	\item Weighted RLs have been introduced by Sch\"utzenberger in his pioneering paper \cite{Sch61}: by assigning a weight in a suitable algebra to each language word,
 we may specify several attributes of the word, e.g., relevance, probability, etc. Much research then followed and extended the original Sch\"utzenberger's work in various directions, cf. the books \cite{BR88,Eil74,KS86,SS78,DKV09}.
\end{itemize} 
Unfortunately, all families with greater expressive power than RL --typically context-free languages (CFL), which are the most widely used family in practical applications-- pay a price in terms of properties and, consequently, of possible tools supporting their automatic analysis. For instance, for CFL, the containment problem is undecidable and they are not closed under complement.

What was not possible for general CFL, however, has been possible for important subclasses of this family, which together we call \emph{structured CFL}. Informally, with this term we denote those CFLs where the syntactic tree-structure of their words is immediately ``visible'' in the words themselves. A first historical example of such families is that of parenthesis languages, introduced by McNaughton in another seminal paper \cite{McNaughton67}, which are generated by grammars whose right hand sides are enclosed within pairs of parentheses; not surprisingly an equivalent formalism of parenthesis grammars was soon defined, namely tree-automata which generalize the basics of FSM to tree-like structures instead of linear strings \cite{Tha67}. Among the many variations and generalizations of parenthesis languages the recent family of \emph{input-driven languages (IDL)} \cite{DBLP:conf/icalp/Mehlhorn80,Input-driven}, alias \emph{visibly pushdown languages (VPL)} \cite{AM09}, have received much attention in recent literature. For most of these structured CFL, including in particular IDL, all of the algebraic properties of RL still hold \cite{AM09}. One of the most noticeable results of this research field has been a characterization of IDL/VPL in terms of a MSO logic that is a fairly natural extension of the original B\"uchi's one for RL \cite{Lautemann94,AM09}.

This fact has suggested to extend the investigation of weighted RL to various cases of structured languages. The result of such a fertile approach is a rich collection of \emph{weighted logics}, first studied by Droste and Gastin \cite{DG07}, associated with \emph{weighted tree automata} \cite{DV06} and \emph{weighted VPAs} the automata recognizing VPLs, also called weighted NWAs \cite{Mat10,DD16}. %
\par
In an originally unrelated way \emph{operator precedence languages (OPL)} have been defined and studied in two phases temporally separated by four decades. In his seminal work \cite{Floyd1963} Floyd was inspired by the precedence of multiplicative operations over additive ones in the execution of arithmetic expressions and extended such a relation to the whole input alphabet in such a way that it could drive a deterministic parsing algorithm that builds the syntax tree of any word that reflects the word's semantics; Fig. \ref{fig:exp} and Section \ref{section:preliminaries} give an intuition of how an OP grammar generates arithmetic expressions and assigns them a natural structure. After a few further studies \cite{RMM78}, OPL's theoretical investigation has been abandoned due to the advent of LR grammars which, unlike OPL grammars, generate all deterministic CFL. %

OPL, however, enjoy a distinguishing property which we can intuitively describe as "\emph{OPL are input driven but not visible}". They can be claimed as \emph{input-driven} since the parsing actions on their words --whether to push or to pop their stack-- depend exclusively on the input alphabet and on the relation defined thereon, but their structure is \emph{not visible} in their words: e.g, they can include unparenthesized arithmetic expressions where the precedence of multiplicative operators over additive ones is explicit in the syntax trees but hidden in their frontiers (see Fig. \ref{fig:exp}). Furthermore, unlike other structured CFL, OPL include deterministic CFL that are not real-time \cite{LMPP15}. 

This remark suggested to resume their investigation systematically at the light of the recent technological advances and related challenges. Such a renewed investigation led to prove their closure under all major language operations \cite{Crespi-ReghizziM12} and to characterize them, besides the original Floyd's grammars, in terms of an appropriate class of pushdown automata (OPA) and in terms of a MSO logic which is a fairly natural but not trivial extension of the previous ones defined to characterize RL and VPL \cite{LMPP15}. Thus, OPL enjoy the same nice properties of RL and many structured CFL but considerably extend their applicability by breaking the barrier of visibility and real-time push-down recognition.

In this paper we put together the two above research fields, namely we introduce \emph{weighted OPL} and show that they are able to model system behaviors that cannot be specified by means of less powerful weighted formalisms such as weighted VPL. For instance, one might be interested in the behavior of a system which handles calls and returns but is subject to some emergency interrupts. Then it is important to evaluate how critically the occurrences of interrupts affect the normal system behavior, e.g., by counting the number of pending calls that have been preempted by an interrupt.
As another example consider a system logging all hierarchical calls and returns over words where this structural information is hidden. Depending on changing exterior factors like energy level, such a system could decide to log the above information in a selective way.
  \par Our main contributions in this paper are the following.
\begin{itemize}\renewcommand{\labelitemi}{$\bullet$}
\item The model of \emph{weighted OPA}, which have semiring weights at their transitions, significantly increases the descriptive power of previous weighted extensions of VPA, and has desired closure and robustness properties.
	\item For arbitrary semirings, there is a relevant difference in the expressive power of the model depending on whether it permits assigning weights to pop transitions or not.
	For commutative semirings, however, weights on pop transitions do not increase the expressive power of the automata. The difference in descriptive power between weighted OPA with arbitrary weights and without weights at pop transitions is due to the fact that OPL may be non-real-time and therefore OPA may execute several pop moves without advancing their reading heads.
	\item An extension of the classical result of Nivat \cite{Niv68} to weighted OPL. This robustness result shows that the behaviors of weighted OPA without weights at pop transitions are exactly those that can be constructed from weighted OPA with only one state, intersected with OPL, and applying projections which preserve the structural information. 
	\item A weighted MSO logic and, for arbitrary semirings, a Büchi-Elgot-Trakhtenbrot-Theorem proving its expressive equivalence to weighted OPA without weights at pop transitions.
	As a corollary, for commutative semirings this weighted logic is equivalent to weighted OPA including weights at pop transitions.
\end{itemize}
\par
	\section{Preliminaries} \label{section:preliminaries}

We start with an example to provide an intuition of the idea by which R. Floyd made the hidden precedences between symbols occurring in a grammar explicit in parse trees \cite{Floyd1963}: consider arithmetic expressions with two
operators, an additive one and a multiplicative one that takes precedence over the other one, in the sense that, during the interpretation of the expression, multiplications must be executed before sums. Parentheses are used to force different precedence hierarchies.  
Figure~\ref{fig:exp} (left) presents a grammar and (center) the derivation tree of the expression 
$n + n \times (n + n)$; all nonterminals are axioms.

Notice that the structure of the syntax tree (uniquely) corresponding to the input expression reflects
the precedence order which drives computing the value attributed to the expression. 
This structure, however, is not immediately visible in the expression;
if we used a parenthesis grammar, it would produce the string $(n+ (n \times ( n + n) ) )$ 
instead of the previous one, and the structure of the corresponding tree would be immediately visible. 
For this reason we say that such grammars ``hide'' the structure associated with a sentence, whereas parenthesis grammars and other input-driven ones make the structure explicit in the sentences they generate.

\begin{figure}
\begin{center}
\begin{tabular}{m{0.3\textwidth}m{0.3\textwidth}m{0.3\textwidth}}
\quad 
$\begin{array}{l}
E\to E + T \mid T\\ 
T\to T \times F \mid F\\
F\to n \mid ( E )
\end{array}$
&
\begin{tikzpicture}[scale=0.4]
        \node {$E$}
	child {
		node {$E$}
		child{ 
			node{$T$}
			child{ 
						node{$F$}
						child{ 
									node{$n$}
								}
					}
				}	
		}
	child {
		node {$+$}
	}
	child {
		node {$T$}
		child {
			node {$T$}
			child{ 
				node{$F$}
				child {
					node {$n$}
			                edge from parent %
			              }
					}
				}
		child{
			node {$\times$}
		}
		child{
			node {$F$}
			child{ 
						node{$($}
					}
			child{ 
						node{$E$}
						child{ 
									node{$E$}
									child{ 
												node{$T$}
												child{ 
															node{$F$}
															child{ 
																		node{$n$}
																	}
														}
									}			
								}
						child{ 
									node{$+$}
								}
						child{ 
									node{$T$}
									child{ 
												node{$F$}
												child{ 
															node{$n$}
														}	
											}	
								}						
					}
			child{ 
						node{$)$}
					}
				}	
		}
;
\end{tikzpicture}
&
$
\begin{array}{c|cccccc}
    &+ &\times & ( & ) & n  \\
\hline
+ & \gtrdot &\lessdot &\lessdot &\gtrdot &\lessdot \\
\times  & \gtrdot &\gtrdot &\lessdot &\gtrdot &\lessdot \\
( & \lessdot &\lessdot &\lessdot &\doteq &\lessdot \\
) & \gtrdot &\gtrdot & &\gtrdot \\
n & \gtrdot &\gtrdot & &\gtrdot \\
\end{array}
$
\end{tabular}
\end{center}
\caption{A grammar generating arithmetic expressions (left), an example derivation tree (center), and the precedence matrix (right).}
\label{fig:exp}
\end{figure}
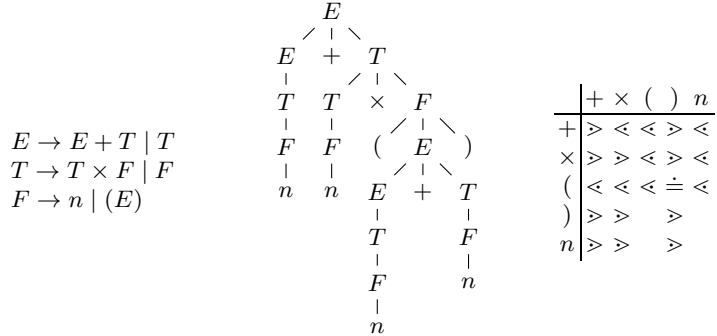

\par To model this hierarchical structure and make it accessible, we introduce the \emph{chain relation} $\curvearrowright$.
This new relation can be compared with the \emph{nesting} or \emph{matching relation} of \cite{AM09}, as it also is a non-crossing relation, going always forward and originating from additional information on the alphabet. However, it also features significant differences: Instead of adding unary information to symbols, which partition the alphabet into three disjoint parts (calls, internals, and returns), we add a binary relation for every pair of symbols denoting their precedence relation. Therefore, in contrast to the nesting relation, the same symbol can be either call or return depending on its context. %
Furthermore, the same position can be part of multiple chain relations.
\par
More precisely, we define an \emph{OP alphabet} as a pair $(\Sigma,M)$, where $\Sigma$ is an alphabet and $M$, the \emph{operator precedence matrix (OPM)} is a $|\Sigma\cup\{\#\}|^2$ array describing for each ordered pair of symbols at most one (operator precedence) relation, that is, every entry of $M$ is either $\lessdot$ (\emph{yields precedence}), $\doteq$ (\emph{equal in precedence}, $\gtrdot$ (\emph{takes precedence}), or empty (no relation).

\par
We use the symbol $\#$ to mark the beginning and the end of a word and let always be $\# \lessdot a$ and $a \gtrdot \#$ for all $a \in \Sigma$.
As an example, Figure~\ref{fig:exp} (right) depicts the OPM of the grammar reported on its left, omitting the standard relations for $\#$.

\par
Let $w=(a_1...a_n) \in \Sigma^+$ be a word. We say $a_0=a_{n+1}=\#$ and define a new relation $\curvearrowright$ on the set of all positions of $\#w\#$, inductively, as follows. Let $i,j \in \{0,1,...,n+1\}$, $i < j$.
Then, we write $i \curvearrowright j$
	if there exists a sequence of positions ${k_1}...{k_m}$ such that $i=k_1<...<k_m=j$,
	$a_{k_1} \lessdot a_{k_2} \doteq ... \doteq a_{k_{m-1}} \gtrdot a_{k_m}$, and either $k_s + 1 = k_{s+1}$ or $k_s \curvearrowright k_{s+1}$
	for each $s \in \{1,...,m-1\}$.
In particular, $i \curvearrowright j$ holds if $a_i \lessdot a_{i+1} \doteq ... \doteq a_{j-1} \gtrdot a_j$.
\par
We say $w$ is \emph{compatible} with $M$ if for $\#w\#$ we have $0 \curvearrowright n+1$.
In particular, this forces $M_{{a_i}{a_j}} \neq \emptyset$ for all $i+1=j$ and for all $i \curvearrowright j$.
We denote by $(\Sigma,M)^+$ the set of all non-empty
words
over $\Sigma$ which are compatible with $M$.
For a \emph{complete} OPM $M$, i.e. one without empty entries, this is $\Sigma^+$.
\par
We recall the definition of an operator precedence automaton from \cite{LMPP15}.
\begin{definition} \label{def:OPA}
	A \emph{(nondeterministic) operator precedence automaton (OPA)} $\mathcal{A}$ over an OP alphabet $(\Sigma, M)$ %
    is a tuple $\mathcal{A}=(Q,I,F,\delta)$, where $\delta = (\delta_{\shift},\delta_{\push},\delta_{\pop})$, consisting of 
	\begin{itemize}
		\item Q, a finite set of states, 
		\item $I \subseteq Q$, the set of initial states,
		\item $F \subseteq Q$, a set of final states, and 
		\item the transition relations $\delta_{\shift},\delta_{\push} \subseteq Q\times \Sigma \times Q$, and $\delta_{\pop} \subseteq Q\times Q \times Q$.
	\end{itemize}
\end{definition}
	Let $\Gamma=\Sigma\times Q$. A \emph{configuration} of $\mathcal{A}$ is a triple $C= \langle \Uppi, q, w\# \rangle$, where $\Uppi \in \bot \Gamma^*$ represents a stack, $q\in Q$ the current state, and $w$ the remaining input to read.
	\par
	A \emph{run} of $\mathcal{A}$ on $w=a_1...a_n$ is a finite sequence of configurations $C_0 \vdash ... \vdash C_m$ such that every transition $C_i \vdash C_{i+1}$ has one of the following forms, where %
    $a$ is the topmost alphabet symbol of $\Uppi$ and $b$ is the next symbol of the input to read:
    \begin{center}$\begin{array}{rrll}
      \textit{push move}:&
          \langle \Uppi, q, bx\rangle
          &\vdash
          \langle \Uppi[b,q], r, x\rangle
          &\text{ if } a\lessdot b \text{ and } (q,b,r)\in \delta_{\push},\\
      \textit{shift move}:&
          \langle \Uppi[a,p], q, bx\rangle
          &\vdash
          \langle \Uppi[b,p], r, x\rangle
          &\text{ if } a\doteq b \text{ and } (q,b,r)\in \delta_{\shift},\\
      \textit{pop move}:&
          \langle \Uppi[a,p], q, bx\rangle
          &\vdash
          \langle \Uppi, r, bx\rangle
          &\text{ if } a\gtrdot b \text{ and } (q,p,r)\in \delta_{\pop} .
    \end{array}$\end{center}
	An \emph{accepting run} of $\mathcal{A}$ on $w$ is a run from $\langle \bot, q_I, w\#\rangle$ to $\langle \bot, q_F, \#\rangle$, where $q_I \in I$ and $q_F \in F$.
     The \emph{language accepted by $\mathcal{A}$}, denoted $L(\mathcal{A})$, consists of all words over $(\Sigma,M)^+$ which have an accepting run on $\mathcal{A}$.
	We say that $L\subseteq(\Sigma,M)^+$ is an \emph{OPL} if $L$ is accepted by an OPA over $(\Sigma,M)$.
	As proven by \cite{LMPP15}, the deterministic variant of an OPA, using a single initial state instead of $I$ and transition functions instead of relations, is equally expressive to nondeterministic OPA.

   An example automaton is depicted in Figure~\ref{fig:exprAut}: with the OPM of Figure~\ref{fig:exp} (right), it accepts the same language as the grammar of Figure~\ref{fig:exp} (left).
\begin{figure}
\begin{center}
\begin{tikzpicture}[every edge/.style={draw,solid}, node distance=4cm, auto, 
                    every state/.style={draw=black!100,scale=0.6}, >=latex] %

\node[initial by arrow, initial text=,state] (q0) {{\huge $0$}};
\node[state] (q1) [right of=q0, xshift=0cm, accepting] {{\huge $1$}};
\node[state] (q2) [below of=q0, xshift=0cm] {{\huge $2$}};
\node[state] (q3) [right of=q2, xshift=0cm, accepting] {{\huge $3$}};

\path[->]
(q0) edge [below] node {$n$} (q1)
(q0) edge [bend right, right]  node {$($} (q2)
(q1) edge [loop above, double, right]  node {$\ 0, 1$} (q1)
(q1) edge [bend right, above]  node {$+, \times$} (q0)

(q2) edge [below ] node {$n$} (q3)
(q2) edge [loop left, left] node {$($} (q2)
(q3) edge [loop above, right, double]  node {$\ 0, 1, 2, 3$} (q3) 
(q3) edge [bend right, above ]  node {$+, \times$} (q2)
(q3) edge [loop right, right, dashed] node {$)$} (q3);
\end{tikzpicture}
\caption{Automaton for the language of the grammar of Figure~\ref{fig:exp}. Shift, push and pop transitions are denoted by dashed, normal and double arrows, respectively.}\label{fig:exprAut}
\end{center}
\end{figure}
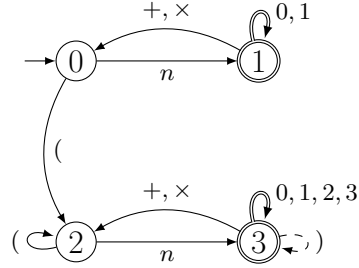

\begin{definition}%
	The logic $\MSO(\Sigma,M)$, short $\MSO$, is defined as
	\begin{align*}
		\beta&::=\Lab_a(x)~|~x\leq y~|~x\curvearrowright y~|~x \in X~|~\neg \beta~|~\beta \vee \beta~|	~\exists x.\beta~|~\exists X.\beta
	\end{align*}
	where $a\in \Sigma \cup \{\#\}$, $x,y$ are first-order variables; and $X$ is a second order variable.
\end{definition}
We define the natural semantics for this (unweighted) logic as in \cite{LMPP15}. The relation $\curvearrowright$ refers to the chain relation %
introduced above.
\begin{theorem}[\cite{LMPP15}]
	\label{thm:OPLMSO}
	A language $L$ over $(\Sigma,M)$ is an OPL iff it is MSO-definable.
\end{theorem}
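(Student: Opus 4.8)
The plan is to prove both implications of this Büchi-type equivalence, following the classical template of Büchi, Elgot and Trakhtenbrot while replacing the standard successor/matching primitives by the chain relation $\curvearrowright$. For the direction from OPA to MSO, I would describe the language of an OPA $\mathcal{A}=(Q,I,F,\delta)$ by an existential second-order sentence that guesses an accepting run and then checks its local legality with a first-order kernel. Concretely, for each state $q\in Q$ I introduce a set variable $X_q$ and force the family $(X_q)_{q\in Q}$ to encode, at each position, a distinguished state of $\mathcal{A}$ along a run (e.g.\ the state entered just after the move consuming that position). Push and shift moves are then constrained using the linear order on neighbouring positions together with the labels $\Lab_a$, while pop moves — the delicate point, since an OPA may pop several times without advancing — are handled through $\curvearrowright$: whenever $i\curvearrowright j$ holds, the states attached to $i$ and $j$ must be linked by a matching pair from $\delta_{\push}$ and $\delta_{\pop}$, with the OPM entries selecting which move applies. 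Initiality and finality are expressed by asserting the appropriate memberships at the first and last positions. I would argue that iterating the chain relation faithfully reproduces the stack discipline, so that models of the sentence correspond exactly to accepting runs.

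For the converse, from MSO to OPA, I proceed by structural induction on the formula, working over the extended OP alphabet $\Sigma_V=\Sigma\times\{0,1\}^V$, where $V$ collects the free variables and the $\{0,1\}^V$-component records a candidate variable assignment; the OPM on $\Sigma_V$ is inherited from $M$ by ignoring the second component, so that compatibility and the relation $\curvearrowright$ are unaffected. In the base cases I build OPA for $\Lab_a(x)$, $x\le y$ and $x\in X$ — essentially finite-state checks lifted to the OP setting — and, most importantly, for $x\curvearrowright y$, where the automaton must verify that the two marked positions stand in the chain relation; here I exploit that an OPA can precisely realise the matched push/pop structure underlying $\curvearrowright$. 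For the inductive step, disjunction uses closure under union, the existential quantifiers (first- and second-order) use closure under the projection $\Sigma_V\to\Sigma_{V\setminus\{x\}}$ — which preserves the OP structure precisely because the OPM ignores the annotation — and negation uses closure under complement; all these closure properties of OPL are available from \cite{Crespi-ReghizziM12}. One also intersects with the OPA enforcing well-formed encodings (each first-order variable marks exactly one position), again via closure under intersection, which makes the full induction including negation go through without a separate EMSO argument.

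The step I expect to be the main obstacle is twofold. First, in the automaton-to-logic direction, correctly formalising the stack discipline of a non-real-time OPA purely through $\curvearrowright$: one must ensure that repeated pop moves at a single reading position are accounted for by the transitive structure of the chain relation rather than by explicit successor reasoning, which is exactly where the extra difficulty over visibly pushdown matching appears. Second, in the logic-to-automaton direction, the base case $x\curvearrowright y$ together with the reliance on complementation: building an OPA that recognises the chain relation between two designated positions, and invoking Boolean closure, are where the specific combinatorics of operator precedence must be controlled. The remaining base and inductive cases are routine once the extended-alphabet bookkeeping and the closure toolkit of \cite{Crespi-ReghizziM12} are in place.
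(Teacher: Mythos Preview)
The paper does not prove this theorem; it is quoted from \cite{LMPP15}. However, the weighted construction in Proposition~\ref{prop:RecToFormula} reproduces (and attributes to \cite{LMPP15}) the automaton-to-logic direction, so one can compare your plan to that.

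Your MSO $\to$ OPA direction is fine and matches the standard route: extended alphabet $\Sigma_\mathcal{V}$ with the OPM ignoring the annotation, base automata for the atoms (the nontrivial one being $x\curvearrowright y$), and the Boolean and projection closures of OPL from \cite{Crespi-ReghizziM12}.

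The gap is in the OPA $\to$ MSO direction. Encoding a run by one state per position, via variables $X_q$ ``the state entered just after the move consuming that position'', is the classical device for \emph{real-time} machines (finite automata, VPA/NWA). For OPA it is not enough: between consuming $a_i$ and consuming $a_{i+1}$ the automaton may execute an unbounded sequence of pops $q_0\Rightarrow q_1\Rightarrow\cdots\Rightarrow q_m$, and the intermediate states $q_1,\dots,q_{m-1}$ are simply not recorded anywhere in your encoding. Saying that ``whenever $i\curvearrowright j$ holds, the states attached to $i$ and $j$ must be linked by a matching pair from $\delta_{\push}$ and $\delta_{\pop}$'' does not resolve this: the state just before a given pop is in general not the state attached to any single position, but the result of a previous pop in the same cascade. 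You correctly flag this as the main obstacle, but your sketch does not supply the missing idea.

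The fix used in \cite{LMPP15}, and visible in this paper's Proposition~\ref{prop:RecToFormula}, is to index the second-order variables by \emph{transitions} rather than states: $X^{\push}_{p,a,q}$, $X^{\shift}_{p,a,q}$, and crucially $X^{\pop}_{p,q,r}$. A pop is then anchored not at the reading position where it occurs, but at the position $v$ of the symbol currently on top of the stack; since each stack cell is popped exactly once, this yields at most one pop-variable per position (the \emph{Unique} clause). The predicate $\Tree_{p,q}(x,z,v,y)$, built from $\curvearrowright$, identifies exactly when such a pop fires and what the current state $p$ and the stacked state $q$ are, while $\Next_r$ and $Q_i$ propagate state information through pop cascades. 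This transition-indexed encoding is what makes the non-real-time behaviour expressible; with only state-indexed $X_q$ you would need to invent an equivalent mechanism, and your proposal does not.
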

\section{Weighted OPA and Their Connection to Weighted VPA}
\label{section:weightedOPA}
In this section, we introduce a weighted extension of operator precedence automata. We show that weighted OPL include weighted VPL and give examples showing how these weighted automata can express behaviors which were not expressible before.
Let $\mathbb{K}=(K,+,\cdot,0,1)$ be a \emph{semiring}, i.e., $(K,+,0)$ is a commutative monoid, $(K,\cdot,1)$ is a monoid, $(x+y)\cdot z = x\cdot z + y\cdot z$, $x\cdot (y+z) = x\cdot y + x\cdot z$, and $0 \cdot x = x \cdot 0 = 0$ for all $x,y,z \in K$. $\mathbb{K}$ is called \emph{commutative} if $(K,\cdot,1)$ is commutative. \par
Important examples of commutative semirings cover the Boolean semiring $\mathbb{B}=(\{0,1\},\vee,\wedge,0,1)$, the semiring of the natural numbers $\mathbb{N}=(\mathbb{N},+,\cdot,0,1)$, or the tropical semirings $\mathbb{R}_{\max}=(\mathbb{R} \cup \{-\infty\}, \max, +, -\infty, 0)$ and $\mathbb{R}_{\min}=(\mathbb{R} \cup \{\infty\}, \min, +, \infty, 0)$. Non-commutative semirings are given by $n\times n$-matrices over semirings $\mathbb{K}$ with matrix addition and multiplication as usual ($n \geq 2$), or the semiring $(\mathcal{P}(\Sigma^*),\cup, \cdot, \emptyset, \{\varepsilon\})$ of languages over $\Sigma$.
\begin{definition}
	A \emph{weighted OPA (wOPA)} $\mathcal{A}$ over an OP alphabet $(\Sigma,M)$ and a semiring $\mathbb{K}$ is a tuple
	$\mathcal{A}=(Q,I,F,\delta,\wt)$, where $\wt=(\wt_{\shift},\wt_{\push},\wt_{\pop})$, consisting of
	\begin{itemize}
		\item an OPA $\mathcal{A}'=(Q,I,F,\delta)$ over $(\Sigma,M)$ and 
		\item the weight functions $\wt_{op}:\delta_{op} \rightarrow K$, $op \in \{\shift,\push,\pop\}$.
	\end{itemize}
	We call a wOPA \emph{restricted}, denoted by \emph{rwOPA}, if $\wt_{\pop} \equiv 1$, i.e. $\wt_\pop(q,p,r) = 1$ for each $(q,p,r)\in \delta_\pop$.
\end{definition}	
	A \emph{configuration} of a wOPA is a tuple $C= \langle \Uppi, q, w\#, k \rangle$, 
    where $(\Uppi,q,w\#)$ is a configuration of the OPA $\mathcal{A}'$ %
    and $k\in \mathbb{K}$.
	A \emph{run} of $\mathcal{A}$ is a again a sequence of configurations $C_0 \vdash C_1 \ldots \vdash C_m$ satisfying the previous conditions and, additionally, the weight of a configuration is updated by multiplying with the weight of the encountered transition, as follows. As before, we denote with $a$ %
    the topmost symbol of $\Uppi$ and with $b$ the next symbol of the input to read:
    \begin{center}$\begin{array}{rll}
          \langle \Uppi, q, bx, k\rangle
          &\vdash
          \langle \Uppi[b,q], r, x, k \cdot \wt_{\push}(q,b,r)\rangle
          &\text{ if } a\lessdot b \text{ and } (q,b,r)\in \delta_{\push},\\
          \langle \Uppi[a,p], q, bx, k\rangle
          &\vdash
          \langle \Uppi[b,p], r, x, k \cdot \wt_{\shift}(q,b,r) \rangle
          &\text{ if } a\doteq b \text{ and } (q,b,r)\in \delta_{\shift},\\
          \langle \Uppi[a,p], q, bx, k\rangle
          &\vdash
          \langle \Uppi, r, bx, k \cdot \wt_{\pop}(q,p,r) \rangle
          &\text{ if } a\gtrdot b \text{ and } (q,p,r)\in \delta_{\pop} .
    \end{array}$\end{center}
We call a run $\rho$ \emph{accepting} if it goes from $\langle \bot, q_I, 1, w\#\rangle$ to $\langle \bot, q_F, k, \#\rangle$, where $q_I \in I$ and $q_F \in F$.
For such an accepting run, the \emph{weight of} $\rho$ is defined as $\wt(\rho)=k$.
We denote by $\acc(\mathcal{A},w)$ the set of all accepting runs of $\mathcal{A}$ on $w$.
\par
Finally, the \emph{behavior of} $\mathcal{A}$ is a function $\llbracket\mathcal{A}\rrbracket:(\Sigma,M)^+ \rightarrow K$, defined as
\begin{align*}
	\llbracket A \rrbracket(w)=\sum_{\rho \in \acc(\mathcal{A},w)}\wt(\rho)\enspace.
\end{align*}
Every function $S:(\Sigma,M)^+ \rightarrow K$ is called an \emph{OP-series} (short: \emph{series}, also \emph{weighted language}).
A wOPA $\mathcal{A}$ \emph{recognizes} or \emph{accepts} a series $S$ if $\llbracket\mathcal{A}\rrbracket = S$.
A series $S$ is called \emph{regular} or a \emph{wOPL} if there exists an wOPA $\mathcal{A}$ accepting it. $S$ is \emph{strictly regular} or an \emph{rwOPL} if there exists an rwOPA $\mathcal{A}$ accepting it.
%\todo{S: added wOPL, rwOPL, cf short version. Here I think we have the freedom to use different notions}

\begin{example} \label{example:penalty}
	Let us resume, in a simplified version, an example presented in \cite{LMPP15} (Example 8) which exploits the ability of OPA to pop many items from the stack without advancing the input head: in this way we can model a system that manages calls and returns in a traditional LIFO policy but discards all pending calls if an interrupt occurs\footnote{A similar motivation inspired the recent extension of VPL as colored nested words by \cite{AF16}.}.
    The weighted automaton of Figure \ref{figure:wOPA1} attaches weights to the OPA's transitions in such a way that the final weight of a string is 1 only if no pending call is discarded by any interrupt; otherwise, the more calls are discarded the lower the ``quality'' of the input as measured by its weight.
	\par
	More precisely, we define $\Sigma=\{\call,\ret,\intt\}$ and the precedence matrix $M$ as a subset of the matrix of Example 8 of \cite{LMPP15}, i.e., $\call\lessdot\call$, $\call\doteq\ret$, $\call\gtrdot\intt$, $\intt\lessdot\intt$, $\intt \gtrdot \call$, and $\ret \gtrdot a$ for all $a\in \Sigma$.
	\par
	By adopting the same graphical notation as in \cite{LMPP15} pushes are normal arrows, shifts are dashed, pops are double arrows; weights are given in brackets at transitions.
	\begin{figure}
		\begin{center}
		\begin{tikzpicture}[->,>=latex, auto, node distance = 3.5cm]
			\node [state, initial left, initial text =, accepting ] (A) {$q_0$};
			\path
			(A) edge [out=145,in=115,looseness=8] node [above] {$\call(\frac{1}{2})$~} (A)
			(A) edge [dashed, out=65,in=35,looseness=8] node [above] {~$\ret(2)$} (A)
			(A) edge [double, out=285,in=255,looseness=8] node [below] {$q_0(1)$} (A)
			(A) edge [out=0,in=330,looseness=8] node {$\intt(1)$} (A)
			;
		\end{tikzpicture}
		\end{center}
		\caption{The weighted OPA $\mathcal{A}_{\text{penalty}}$ penalizing unmatched calls}
		\label{figure:wOPA1}
	\end{figure}
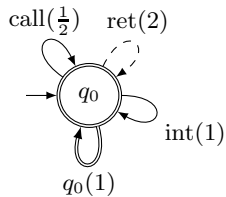
	Let $\#\pcall(w)$ be the number of pending calls of $w$, i.e., calls which are never answered by a return. Then the behavior of the automaton $\mathcal{A}_{\text{penalty}}$ over	$(\Sigma,M)$ and the semiring $(\mathbb{N},+,\cdot,0,1)$ given in Figure \ref{figure:wOPA1} is
	$\llbracket\mathcal{A}_{\text{penalty}}\rrbracket(w)
			=(\frac{1}{2})^{\#\pcall(w)}$.
	\par
	The example can be easily enriched by following the same path outlined in \cite{LMPP15}: we could add symbols specifying the serving of an interrupt, add different types of calls and interrupts with different priorities and more sophisticated policies (e.g., lower level interrupts disable new calls but do not discard them, whereas higher level interrupts reset the whole system, etc.)
\end{example}
\begin{example} \label{example:policy}
	The wOPA of Figure \ref{figure:wOPA1} is ``rooted'' in a deterministic OPA; thus the semiring of weights is exploited in a fairly trivial way since only the $\cdot$ operation is used.
    The automaton $\mathcal{A}_{\text{policy}}$ given in Figure \ref{figure:wOPA2}, instead, formalizes a more complex system where the penalties for unmatched calls may change nondeterministically within intervals delimited by the special symbol $\$$. Precisely, the symbols $\$$ mark intervals during which sequences of calls, returns, and interrupts occur; ``normally'' unmatched calls are not penalized, but there is a special, nondeterministically chosen interval during which they are penalized; the global weight assigned to an input sequence is the maximum over all nondeterministic runs that are possible when recognizing the sequence.
	\par
	Here, the alphabet %
	is $\Sigma=\{\call,\ret,\intt,\$\}$, and the OPM $M$, with $a \lessdot \$$ and $\$ \gtrdot a$, for all $a \in \Sigma$ is a natural extension of %
    the OPM of Example \ref{example:penalty}. As semiring, we take $\mathbb{R}_{\max}=(\mathbb{R} \cup \{-\infty\}, \max, +, -\infty, 0)$.
	Then, $\llbracket\mathcal{A}_{\text{policy}}\rrbracket(w)$ equals the maximal number of pending calls between two consecutive $\$$.
	\begin{figure}
	\begin{center}
		\begin{tikzpicture}[->,>=latex, auto, node distance = 3.5cm]
		\node [state, initial left, initial text = ] (X) {$q_0$};
		\node [state, right of = X ] (A) {$q_1$};
		\node [state, right of = A, accepting ] (Y) {$q_2$};
		\path
		(X) edge [out=145,in=115,looseness=8] node [above] {$\$(0),\intt(0),\call(0)$\qquad\qquad} (X)
		(X) edge [dashed, out=65,in=35,looseness=8] node [right] {~$\ret(0)$} (X)
		(X) edge [double,out=285,in=255,looseness=8] node [below] {$q_0(0)$} (X)
		(X) edge node [below] {$\$(0)$} (A)
		(A) edge [out=165,in=135,looseness=8] node [above] {$\call(1)$~} (A)
		(A) edge [dashed, out=105,in=75,looseness=8] node [above] {~$\ret(-1)$} (A)
		(A) edge [out=45,in=15,looseness=8] node [above] {$~~\intt(0)$} (A)
		(A) edge [double,out=285,in=255,looseness=8] node [below] {$q_0(0),q_1(0)$} (A)
		(A) edge node [below] {$\$(0)$} (Y)
		(Y) edge [out=145,in=115,looseness=8] node [above] {$\$(0),\call(0)$\quad} (Y)
		(Y) edge [dashed, out=65,in=35,looseness=8] node [right] {~$\ret(0)$} (Y)
		(Y) edge [loop right] node {$\intt(0)$} (Y)
		(Y) edge [double,out=285,in=255,looseness=8] node [below] {$q_0(0),q_1(0),q_2(0)$} (Y)
		;
		\end{tikzpicture}
	\end{center}
	\caption{The weighted OPA $\mathcal{A}_{\text{policy}}$ penalizing unmatched calls nondeterministically} \label{figure:wOPA2}
	\end{figure}
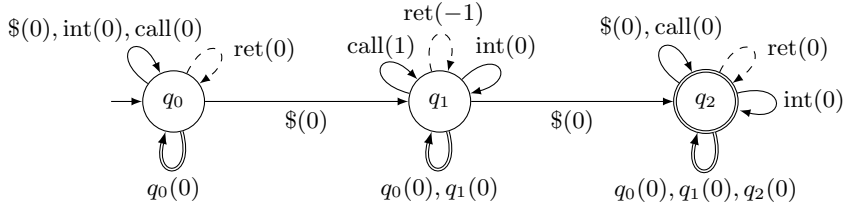
	Again, $\mathcal{A}_{\text{policy}}$ can be easily modified/enriched to formalize several variations of its policy: e.g., different policies could be associated with different intervals, different weights could be assigned to different types of calls and/or interrupts, different policies could also be defined by choosing different semirings, etc.
\end{example}
Note that both automata, %
$\mathcal{A}_{\text{penalty}}$ and $\mathcal{A}_{\text{policy}}$, do not use the weight assignment for pops.
\begin{example} \label{example:log}
	The next automaton $\mathcal{A}_{\text{log}}$, depicted in Figure \ref{figure:wOPA3} chooses non-deter\-ministi\-cally between logging everything and logging only `important' information, e.g., only interrupts (this could be a system dependent on energy, WiFi, ...). Notice that, unlike the previous examples, in this case assigning nontrivial weights to $\pop$ transitions is crucial.
	\par
    Let $\Sigma=\{\call,\ret,\intt\}$, and define $M$ as for $\mathcal{A}_{\text{penalty}}$.
	We employ the semiring $(\Fin_{\Sigma'},\cup,\circ,\emptyset,\{\varepsilon\})$
	of all finite languages over $\Sigma'=\{c,r,p,i\}$.
	Then, $\llbracket\mathcal{A}_{\text{log}}\rrbracket(w)$ yields all possible logs on $w$.	
	\begin{figure} 
		\begin{center}
		\begin{tikzpicture}[->,>=latex, auto, node distance = 4.0cm
			]
			\node [state, initial left, initial text =, accepting ] (X) {$q_0$};
			\node [state, right of = X, accepting] (A) {$q_1$};
			\path
			(X) edge [out=145,in=115,looseness=8] node
				[above,align=center] %
				{$\call(c)$ \\
				int$(i)$} (X)
			(X) edge [dashed, out=75,in=45,looseness=8] node [above] {$\ret(r)$} (X)
			(X) edge [double,out=285,in=255,looseness=8] node [below] {$q_0(p)$} (X)
			(X) edge node [above] {$\call(\varepsilon)$%
				}(A)
			(A) edge [bend left] node [below] {$\call(\varepsilon)$%
				}(X)
			(A) edge [out=145,in=115,looseness=8] node
				[above,align=center] %
				{$\call(\varepsilon)$ \\
				int$(i)$} (A)
			(A) edge [dashed, out=75,in=45,looseness=8] node [above] {$\ret(\varepsilon)$} (A)
			(A) edge [double,out=285,in=255,looseness=8] node [below] {$q_0(\varepsilon),q_1(\varepsilon)$} (A)
			;
		\end{tikzpicture}
		\end{center}
		\caption{The wOPA  $\mathcal{A}_{\text{log}}$ nondeterministically writes logs at different levels of detail.} \label{figure:wOPA3}
	\end{figure}
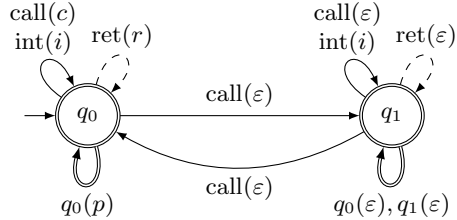	
\end{example}
As hinted at by our last example, the following proposition shows that in general, wOPA are more expressive than rwOPA.
\begin{proposition} There exists an OP alphabet $(\Sigma,M)$ and a semiring $\mathbb{K}$ such that there exists a weighted language $S$ %
which is regular but not strictly regular.
\end{proposition}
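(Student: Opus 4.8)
The plan is to exhibit a concrete witness over the OP alphabet and matrix of Example~\ref{example:penalty} and the (non\nobreakdash-commutative) semiring $\Fin_{\Sigma'}$ of finite languages over $\Sigma'=\{c,p\}$. I would take $S$ to be the behaviour of the single\nobreakdash-state wOPA $\mathcal A$ that emits $\{c\}$ on every $\call$\nobreakdash-push, $\{p\}$ on every pop, and $\{\varepsilon\}$ on every shift and every other push. Since $\mathcal A$ has one state, both initial and final, each compatible word has a unique run, which is accepting, so $S$ is genuinely a wOPL and the positive direction is immediate. The decisive family is $w_n=\call^n\intt$: here $\call\gtrdot\intt$ forces $\mathcal A$, after pushing the $n$ calls, to perform $n$ consecutive pop moves \emph{without} advancing its head, whence $S(w_n)=\{c^np^{\,n+1}\}$. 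The intuition is that the block $p^{\,n+1}$ is produced entirely by pops, which a restricted automaton is forbidden to weight.

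For the negative direction I would argue by contradiction. Suppose an rwOPA $\mathcal B=(Q,I,F,\delta,\wt)$ realizes $S$, and let $\ell$ bound the length of every word occurring in any transition weight. Fix a large $n$. As $S(w_n)=\{c^np^{\,n+1}\}$ is a singleton and $\llbracket\mathcal B\rrbracket(w_n)=\bigcup_{\rho\in\acc(\mathcal B,w_n)}\wt(\rho)$, some accepting run $\rho$ must have weight exactly $\{c^np^{\,n+1}\}$; since a product of finite languages is a singleton only when every factor is a singleton, each transition of $\rho$ carries a one\nobreakdash-word weight. Because $\mathcal B$ is restricted, all pop weights equal $\{\varepsilon\}$, so the output of $\rho$ is the concatenation $v_1\cdots v_n\,v_{\intt}=c^np^{\,n+1}$ of the (singleton) push outputs, read in call order followed by the interrupt push. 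As $|v_s|\le\ell$, the call-pushes whose output lies entirely inside the leading $c$\nobreakdash-block number at least $(n-\ell)/\ell=\Omega(n)$.

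The heart of the matter, and the step I expect to be the main obstacle, is a matched\nobreakdash-pair pumping adapted to OPA runs. Merely repeating a state during the push phase is insufficient: each pop transition $(r,p,r')$ consults the state $p$ stored when the matching call was pushed, so deleting pushes alone would desynchronise the pop phase. I would therefore record, for each level $\ell$, the pair $P_\ell=(q_\ell,r_{n-\ell})$, where $q_\ell$ is the state after the $\ell$\nobreakdash-th call\nobreakdash-push and $r_{n-\ell}$ is the state from which $\mathcal B$ pops that $\ell$\nobreakdash-th call, and apply the pigeonhole principle to $P_0,\dots,P_{|Q|^2}$, which all lie inside the $c$\nobreakdash-block once $n\ge\ell(|Q|^2+1)$. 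A coincidence $P_i=P_j$ with $i<j$ lets me excise the calls $i{+}1,\dots,j$ together with their pops: the push phase reconnects because $q_i=q_j$, the pop phase reconnects because $r_{n-i}=r_{n-j}$, and the interrupt part is untouched, so the result is a valid accepting run of $\mathcal B$ on $w_{\,n-(j-i)}$.

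Its output is obtained from $c^np^{\,n+1}$ by deleting the (possibly empty) block of $c$'s produced by the removed pushes, hence it still contains exactly $n+1$ letters $p$; yet it must belong to $S(w_{\,n-(j-i)})=\{c^{\,n-(j-i)}p^{\,n-(j-i)+1}\}$, whose unique word contains only $n-(j-i)+1<n+1$ letters $p$. This contradiction shows that $S$ is regular but not strictly regular. The only genuinely delicate point is the pumping lemma of the third paragraph; the rest is bookkeeping about the finite\nobreakdash-language semiring and the structural fact that in $w_n$ the interrupt triggers a burst of unweighted pops.
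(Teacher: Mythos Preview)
Your proposal is correct and follows essentially the same strategy as the paper: exhibit a witness over the finite-language semiring and derive a contradiction by a matched-pair pumping argument on the push/pop phases of an rwOPA. The paper uses the slightly leaner witness $\Sigma=\{c,r\}$ with $c\lessdot c$, $c\doteq r$ and series $S(c^nr)=\{a^nba^n\}$, and its pumping pairs are exactly your $(q,p)$-pairs indexed on the pop side; the logical core is identical.

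The one substantive difference is organisational. The paper, after cutting the two loops, performs a short case analysis on what the removed push-loop contributes (either $\varepsilon$, some $\{a^i\}$, or a block containing $b$), deriving a wrong weight in each case. You instead take $n$ large enough that the pigeonhole indices all fall inside the leading $c$-block, so the excised push weights are guaranteed to lie in $c^*$ and the $p$-count survives unchanged; this trades the case split for a length bound. Both are valid; yours gives a cleaner single contradiction, the paper's keeps the witness minimal. A tiny indexing quibble: your $P_0$ is not well-defined under the description ``the state from which $\mathcal B$ pops the $0$-th call''; use $P_1,\dots,P_{|Q|^2+1}$ (or define $r_n$ as the state after all call-pops) and adjust the threshold on $n$ accordingly.
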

\begin{proof}
	Let $\Sigma=\{c,r\}$, $c\lessdot c$, and $c\doteq r$.
	Consider the semiring $\Fin_{\{a,b\}}$ of all finite languages over $\{a,b\}$ together with union and concatenation.
	Let $n\in \mathbb{N}$ and $S:(\Sigma,M)^+\rightarrow \Fin_{\{a,b\}}$ be the following series
	\begin{align*}
		S(w) = \left \{ \begin{array}{ll}
					\{a^nba^n\} &,\text{ if } w=c^nr \\
					\emptyset &, \text{ otherwise}
				\end{array} \right . \enspace .
	\end{align*}
	Then, we can define a wOPA which only reads $c^nr$, assigns the weight $\{a\}$ to every push and pop, and the weight $\{b\}$ to the one shift, and therefore accepts $S$, as in Figure \ref{figure:wOPA4}.
	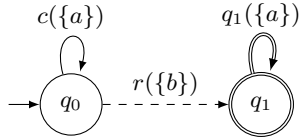
\begin{figure} 
		\begin{center}
		\begin{tikzpicture}[->,>=latex, auto, node distance = 2.5cm
			]
			\node [state, initial left, initial text = ] (X) {$q_0$};
			\node [state, right of = X, accepting] (A) {$q_1$};
			\path
			(X) edge [loop above, right] node[above] {$c(\{a\})$} (X)
			(X) edge [dashed] node [above] {$r(\{b\})$}(A)
			(A) edge [loop above, right, double] node [above] {$q_1(\{a\})$} (A)
			;
		\end{tikzpicture}
		\end{center}
		\caption{The wOPA recognizing $S(c^nr)=\{a^nba^n\}$ and $S(w)=0$, otherwise.} \label{figure:wOPA4}
	\end{figure}
	\par
	Now, we show with a pumping argument that there exists no rwOPA which recognizes $S$. Assume there is an rwOPA $\mathcal{A}$ with $\llbracket\mathcal{A}\rrbracket = S$. Note that for all $n\in \mathbb{N}$, the structure of $c^nr$ is fixed as $c \lessdot c \lessdot ... \lessdot c \doteq r$.
	Let $\rho$ be an accepting run of $\mathcal{A}$ on $c^nr$ with $\wt(\rho)=\{a^nba^n\}$.
	Then, the transitions of $\rho$ consist of $n$ pushes, followed by a shift, followed by $n$ pops and can be written as
	\[q_0 \stackrel{c}{\longrightarrow} q_1
	\stackrel{c}{\longrightarrow} ...
	\stackrel{c}{\longrightarrow} q_{n-1}
	\stackrel{c}{\longrightarrow} q_{n}
	\stackrel{r}{\dashrightarrow} q_{n+1} \stackrel{q_{n-1}}{\Longrightarrow} q_{n+2}
	\stackrel{q_{n-2}}{\Longrightarrow} ...
	\stackrel{q_{1}}{\Longrightarrow} q_{2n}
	\stackrel{q_{0}}{\Longrightarrow} q_{2n+1} \enspace.\]
	Both the number of states and the amount of pairs of states are bound. If $n$ is sufficiently large, there exists two pop transitions $\pop(q,p,r)$ and $\pop(q',p',r')$ in this sequence such that $q=q'$ and $p=p'$. This means that we have a loop in the pop transitions going from state $q$ to $q'=q$. Furthermore, the corresponding push to the first transition of this loop was invoked when the automaton was in state $p'$, while the corresponding push to the last pop was invoked in state $p$. Since $p=p'$, we also have a loop at the corresponding pushes.
	Then, the run where we skip both loops in the pops and in the pushes is an accepting run for $c^{n-k}r$, for some $k\in\mathbb{N}\setminus\{0\}$.
	\par
	Since the weight of all pops is trivial, the weight of the pop-loop is ${\varepsilon}$. If the weight of the push-loop is also ${\varepsilon}$, then we have an accepting run for $c^{n-k}r$ of weight $\{a^nba^n\}$, a contradiction. If the weight of the push-loop is not trivial, then by a simple case distinction it has to be either $\{a^i\}$ for some $i \in \mathbb{N}\setminus\{0\}$ or it has to contain the $b$. In the first case, the run without both loops has weight $\{a^{n-i}ba^n\}$ or $\{a^{n}ba^{n-i}\}$, in the second case it has weight $\{a^j\}$, for some $j \in \mathbb{N}$.  All these runs are not of the form $a^{n-k}ba^{n-k}$, a contradiction.
\qed
\end{proof}
	We notice that using the same arguments, we can show that also no weighted nested word automata as defined in \cite{Mat10,DPi14} can recognize this series.
	Even stronger, we can prove that restricted weighted OPLs are a
	generalization of weighted VPLs in the following sense.
	We shortly recall the important definitions.
	 Let $\Sigma=\Sigma_{\call} \sqcup \Sigma_{\intt} \sqcup \Sigma_{\ret}$ be a \emph{visibly pushdown alphabet}. A \emph{VPA} is a pushdown automata which uses a push and pop transitions whenever it reads a call or return symbol, respectively.
	 \par
	 In \cite{Crespi-ReghizziM12}, it was shown that using the complete OPM of Fig. \ref{figure:VPLMatrix}, for every VPA, there exists an equivalent operator precedence grammar which in turn can be transformed into an equivalent OPA.
	\begin{figure}
	\begin{center}	
		\bgroup
		\def\arraystretch{1.5}
		\begin{tabular}{c " c | c | c |}
			 & $~\Sigma_{\call}~$ & $~\Sigma_{\ret}~$ & $~\Sigma_{\intt}~$ \\
			\thickhline
			$~\Sigma_{\call}~$ & $\lessdot$ & $\doteq$ & $\lessdot$\\
			\hline
			$\Sigma_{\ret}$  & $\gtrdot$ & $\gtrdot$ & $\gtrdot$\\
			\hline
			$\Sigma_{\intt}$ & $\gtrdot$ & $\gtrdot$ & $\gtrdot$\\
			\hline
		\end{tabular}
		\egroup
	\end{center}
	\caption{OPM for VPL} \label{figure:VPLMatrix}
	\end{figure}
	\par
	In \cite{Mat10} and \cite{DPi14} weighted extensions of VPA were introduced (in the form of \emph{weighted nested word automata wNWA}). These add semiring weights at every transition again depending on the information what symbols are calls, internals, or returns. Note that every nested word has a representation as a word over a visibly pushdown alphabet $\Sigma$ and therefore can be seen as a compatible word of $(\Sigma,M)^+$, where $M$ is the OPM of Fig. \ref{figure:VPLMatrix}, i.e., we can interpret the behavior of a wNWA as an OP-series $(\Sigma,M)^+ \rightarrow \mathbb{K}$.

\begin{theorem} \label{thm:wNWArwOPA}
	Let $\mathbb{K}$ be a semiring, $\Sigma$ be a visibly pushdown alphabet, and $M$ be the OPM of Fig.~\ref{figure:VPLMatrix}.
	Then for every wNWA $\mathcal{A}$ defined as in \cite{DPi14}, there exists an rwOPA $\mathcal{B}$ %
	with $\llbracket\mathcal{A}\rrbracket(w) = \llbracket\mathcal{B}\rrbracket(w)$ for all $w\in(\Sigma,M)^+$.
\end{theorem}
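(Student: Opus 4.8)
The plan is to simulate the weighted nested word automaton $\mathcal{A}$ directly by an rwOPA $\mathcal{B}$, exploiting that under the OPM of Fig.~\ref{figure:VPLMatrix} the chain relation $\curvearrowright$, and hence the whole sequence of $\push$, $\shift$, and $\pop$ moves of any OPA on a word $w$, is completely determined by $w$ itself: calls and internals get pushed, a return $\doteq$-matches its call and is realized by a $\shift$, and any return or internal on top of the stack forces a $\pop$ on the following symbol (this is, on the unweighted level, essentially the simulation of \cite{Crespi-ReghizziM12}). Consequently the only freedom left in $\mathcal{B}$ lies in its state transitions, and I would arrange these so that the accepting runs of $\mathcal{B}$ correspond bijectively and weight-preservingly to those of $\mathcal{A}$; the novelty over the unweighted case is entirely the bookkeeping of weights under the restriction $\wt_{\pop}\equiv 1$.

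First I would fix $Q_{\mathcal{B}}=Q_{\mathcal{A}}\times(\Gamma\cup\{\bot_\Gamma\})$, where $\Gamma$ is the stack alphabet of $\mathcal{A}$ and the second component records the symbol pushed by the innermost currently pending call (or $\bot_\Gamma$ if none is pending); the initial and final states of $\mathcal{B}$ are the $(q,\bot_\Gamma)$ with $q$ initial resp. final in $\mathcal{A}$. A call $c$ with $\mathcal{A}$-transition $(q,c)\to(q',\gamma')$ of weight $k$ becomes a $\push$ of $\mathcal{B}$ from $(q,\gamma)$ to $(q',\gamma')$ of weight $k$, where the pushed stack item $(c,(q,\gamma))$ preserves the old second component. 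A matched return $r$, read while its matching call is on top of the stack, is realized by a $\shift$: since at that moment the second component of the current state equals the symbol $\gamma$ pushed by the matching call, I can place the weight of the corresponding $\mathcal{A}$-return transition (which reads exactly this $\gamma$ off the stack) onto the $\shift$. Finally, an internal $i$ of weight $k$, and symmetrically a return read at the bottom of the stack (no matching call), becomes a $\push$ of weight $k$.

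The remaining $\pop$ moves are precisely the ``cleanup'' pops that $\mathcal{B}$ is forced to perform after every $\shift$ and after every internal-$\push$, together with the pops that empty pending calls from the stack upon reading the final $\#$. I would make each such $\pop$ carry weight $1$, so that $\mathcal{B}$ is genuinely restricted, and act on states by keeping the $Q_{\mathcal{A}}$-component unchanged while resetting the $\Gamma$-component to the value stored in the popped item. This is the point on which correctness hinges: because the popped item $(a,(q,\gamma))$ remembers the second component from push time, the weight-$1$ cleanup pops restore the correct innermost pending call after each completed sub-computation. The invariant I would prove by induction on the nesting structure is therefore that whenever $\mathcal{B}$ reaches the $\shift$ for a return, its current state carries exactly the symbol pushed by the matching call; this is what justifies charging the return weight to the $\shift$.

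Finally I would establish the weight-preserving bijection between $\acc(\mathcal{A},w)$ and $\acc(\mathcal{B},w)$: the move-types are fixed by the OPM, the only genuine nondeterminism sits on pushes and shifts where transitions mirror the call/internal/return transitions of $\mathcal{A}$, and the cleanup pops act deterministically, so along corresponding runs the products of transition weights coincide, giving $\llbracket\mathcal{A}\rrbracket(w)=\llbracket\mathcal{B}\rrbracket(w)$. The main obstacle is exactly the restriction $\wt_{\pop}\equiv 1$: an $\mathcal{A}$-return weight naturally depends on the popped stack symbol, yet it may not be charged to an OPA $\pop$. The resolving idea is to charge it instead to the preceding $\shift$, which becomes possible only after enriching the state with the $\Gamma$-component and after verifying that the weight-$1$ cleanup pops thread this component correctly; proving that invariant is the technical heart of the argument.
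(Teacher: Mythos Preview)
Your construction is correct, but it differs from the paper's in an interesting way. The paper does \emph{not} thread the pushed stack symbol through the state; instead it uses a guess-and-verify trick. Its state set is $Q\cup(Q\times Q)$ (the paper tacitly assumes the NWA stack alphabet is $Q$ itself): a call is simulated by a $\push$ $(q,a,r)$, an internal by a $\push$ $(q,a,(q,r))$ followed by a weight-$1$ $\pop$ $((q,r),q,r)$, and a return by a $\shift$ $(q,a,(p,r))$ that \emph{nondeterministically guesses} the state $p$ stored by the matching call, charging the return weight $\wt_{\ret}(q,p,a,r)$ right there; the subsequent weight-$1$ $\pop$ $((p,r),p,r)$ then only exists when the guess was right, so wrong guesses die. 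Your approach instead enriches the state with the current top stack symbol, maintains it via the weight-$1$ cleanup pops, and reads it off deterministically at the $\shift$. Both yield a weight-preserving bijection of accepting runs; the paper's route avoids proving the threading invariant you identify as the ``technical heart'', at the price of introducing spurious nondeterminism that is killed one step later, while your route works verbatim for an arbitrary NWA stack alphabet $\Gamma$ and keeps the simulation essentially deterministic in its bookkeeping. One small point you leave implicit: you should state what the $\Gamma$-component does at a $\shift$ (keeping it unchanged is fine, since the immediately following cleanup pop restores the correct value anyway).
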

	We give an intuition for this result as follows.
	Note that although sharing some similarities, pushes, shifts, and pops are not the same thing as calls, internals, and returns. Indeed, a return of a (w)NWA reads and 'consumes' a symbol, while a pop of an (rw)OPA just pops the stack and leaves the next symbol untouched. %
	\par
	After studying Figure \ref{figure:VPLMatrix}, this leads to the important observation that every symbol of $\Sigma_{\ret}$ and therefore every return transition of an NWA is simulated not by a pop, but by a shift transition of an OPA (in the unweighted and weighted case).
	\par
	We give a short demonstrating example:
	Let $\Sigma_{\intt}=\{a\}$, $\Sigma_{\call}=\{\langle c\}$, $\Sigma_{\ret}=\{r\rangle\}$, $w= a\langle car\rangle$.
	Then every run of an NWA for this word looks like
	\[q_0 \xrightarrow{~~~\,~a~~~~} q_1 \xrightarrow{\,\langle c\,} q_2 \xrightarrow{~~~~~a~~~~} q_3 \xrightarrow{~~~\,~r\rangle~~~~} q_4 \enspace.\]
	Every run of an OPA (using the OPM of Fig. \ref{figure:VPLMatrix}) looks as follows:
	\[q_0 \stackrel{a}{\longrightarrow} q_1' \stackrel{}{\Rightarrow} q_1 \stackrel{\langle c}{\longrightarrow} q_2 \stackrel{a}{\longrightarrow} q_3' \stackrel{}{\Rightarrow} q_3 \stackrel{r\rangle}{\dashrightarrow}  q_4' \stackrel{}{\Rightarrow}q_4\enspace,\]
	where the return was substituted (by the OPM, not by a choice of ours) by a shift followed by a pop.
	\par
	It follows that we can simulate a weighted call by a weighted push, a weighted internal by a weighted push together with a pop and a weighted return by a weighted shift together with a pop. Therefore, we may indeed omit weights at pop transitions.
\begin{proof}[of Theorem \ref{thm:wNWArwOPA}]
	Given a weighted NWA $\mathcal{A}=(Q,I,F,(\delta_{\call},\delta_{\intt},\delta_{\ret}),$ $(\wt_\call, \wt_\intt, \wt_\ret))$ over $\Sigma$ and $\mathbb{K}$,
	we construct an rwOPA $\mathcal{B}=(Q',I',F',$ $(\delta_{\push},\delta_{\shift},\delta_{\pop}), (\wt_\push',\wt_\shift',\wt_\pop'))$ over $(\Sigma,M)$ and $\mathbb{K}$. We set $Q'=Q\cup (Q\times Q)$, $I'=I$, and $F'=F$.
    We define the relations $\delta_{\push}$, $\delta_{\shift}$, $\delta_{\pop}$, and the functions $\wt_\push'$, $\wt_\shift'$, and $\wt_\pop'$ as follows.
    \par
    We let $\delta_\push$ contain all triples $(q,a,r)$ with $(q,a,r)\in\delta_{\call}$, and all triples $(q,a,(q,r))$ with $(q,a,r)\in\delta_{\intt}$. We set $\wt_\push'(q,a,r) = \wt_\call(q,a,r)$ and $\wt_\push'(q,a,(q,r)) = \wt_\intt(q,a,r)$.
    Moreover, we let $\delta_\shift$ contain all triples $(q,a,(p,r))$ with $(q,p,a,r)\in\delta_{\ret}$ and set $\wt_\shift'(q,a,(p,r)) = \wt_\ret(q,p,a,r)$.
    Furthermore, we let $\delta_\pop$ contain all triples $((q,r),q,r)$ with $(q,a,r)\in\delta_{\intt}$, and all triples $((p,r),p,r)$ with $(q,p,a,r)\in\delta_{\ret}$, and set $\wt_\pop'((q,r),q,r)=\wt_\pop'((p,r),p,r)=1$.
	\par
	Then, a run analysis of $\mathcal{A}$ and $\mathcal{B}$ shows that  $\llbracket\mathcal{B}\rrbracket = \llbracket\mathcal{A}\rrbracket$.
	\qed
\end{proof}

Together with the result that OPA are strictly more expressive than VPAs \cite{Crespi-ReghizziM12},
this gives a complete picture of the expressive power of these three classes of weighted languages:
\[
	\textrm{wVPL} \subsetneq \textrm{rwOPL} \subsetneq \textrm{wOPL} \enspace.
\]
The following result shows that for commutative semirings the second part of this hierarchy collapses, i.e. restricted rwOPA are equally expressive as wOPA (and therefore can be seen as a kind of normal form in this case).

\begin{theorem}
	Let $\mathbb{K}$ be a commutative semiring and $(\Sigma,M)$ an OP alphabet. Let $\mathcal{A}$ be a wOPA. Then, there exists an rwOPA $\mathcal{B}$ with $\llbracket\mathcal{A}\rrbracket=\llbracket\mathcal{B}\rrbracket$.
\end{theorem}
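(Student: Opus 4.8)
The plan is to transform $\mathcal{A}$ into an equivalent rwOPA $\mathcal{B}$ by relocating the weight of each pop move onto the push move that created the stack symbol being popped. In any accepting run the stack starts and ends empty, so pushes and pops form a well-nested bijection: every pushed entry is removed by exactly one pop (possibly after several shifts, which preserve the state component stored in the entry). Recall that a pop uses a transition $(q,p,r)\in\delta_\pop$ in which $p$ is precisely the state stored in the popped entry, i.e.\ the state $\mathcal{A}$ was in just before the matching push; thus $p$ is already known at push time, whereas the current state $q$ and target state $r$ are not. Commutativity of $\mathbb{K}$ is exactly what makes the relocation legitimate: the weight of a $\mathcal{B}$-run will be the product of the same multiset of factors as the corresponding $\mathcal{A}$-run, only in a different order.

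First I would augment the state space so that a pushed entry can carry a prediction of its future pop. I take $Q' = Q \times \big((Q\times Q)\cup\{\star\}\big)$; a state $(q,(\alpha,\beta))$ means ``$\mathcal{A}$ is in state $q$, and when this state is pushed the resulting entry will be popped with current state $\alpha$ and target state $\beta$'', while the marker $\star$ means ``this state will not be pushed''. A push transition from $(q,(\alpha,\beta))$ reading $b$ is included whenever $(q,b,r)\in\delta_\push$ and $(\alpha,q,\beta)\in\delta_\pop$; it stores the entry carrying the full state $(q,(\alpha,\beta))$ and receives weight $\wt_\push(q,b,r)\cdot\wt_\pop(\alpha,q,\beta)$, so the predicted pop is paid for in advance. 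A shift or pop is enabled only from a state whose guess component is $\star$. The pop from $(q,\star)$ removing an entry that stores $(s,(\sigma,\tau))$ is allowed exactly when $(q,s,r)\in\delta_\pop$ and the prediction is confirmed, $\sigma=q$ and $\tau=r$; it receives weight $1$, so $\mathcal{B}$ is indeed restricted. Shifts keep $\wt_\shift$ and, crucially, preserve the stored entry (hence its guess). I set $I'=I\times(Q\times Q)$ (the first move is always a push, since $\#\lessdot a_1$) and $F'=F\times\{\star\}$.

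The step I expect to be the main obstacle is obtaining an \emph{exact} bijection between accepting runs of $\mathcal{A}$ and of $\mathcal{B}$, rather than a merely many-to-one simulation: over an arbitrary, non-idempotent semiring, spurious copies of a run would corrupt the sum defining $\llbracket\mathcal{B}\rrbracket$. This is precisely what the $\star$-discipline prevents. In a run every configuration is the source of at most one move, and its guess is thereby forced: a state pushed from must carry the true $(\mathrm{pre},\mathrm{post})$ pair of its matching pop (any other guess fails the verification at that pop, so the run cannot reach $\langle\bot,q_F,\#\rangle$ and is not accepting), while a state shifted or popped from must carry $\star$; the final state is pinned to $\star$ by $F'$, and the initial guess is pinned down by the first push. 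Hence each accepting run of $\mathcal{A}$ lifts to exactly one accepting run of $\mathcal{B}$, and conversely every accepting run of $\mathcal{B}$ projects, by deleting the guess components, to an accepting run of $\mathcal{A}$; the forcing argument also gives injectivity of this projection.

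It then remains to compare weights along this bijection. Each matched pair contributes the push factors $\wt_\push$, the shift factors $\wt_\shift$, and, via the confirmed predictions, exactly the pop factors $\wt_\pop$ of the original run (as the push ranges over all pushes, its matching pop ranges bijectively over all pops), while the pop moves of $\mathcal{B}$ contribute only $1$. Thus the two runs carry the same multiset of semiring factors, and by commutativity of $\mathbb{K}$ their products coincide. Summing over the bijection yields $\llbracket\mathcal{B}\rrbracket(w)=\llbracket\mathcal{A}\rrbracket(w)$ for every $w\in(\Sigma,M)^+$, as required.
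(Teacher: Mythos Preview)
Your proposal is correct and follows essentially the same approach as the paper: guess at each push the pair $(\alpha,\beta)$ of the matching pop, pay $\wt_\pop(\alpha,q,\beta)$ there, verify the guess at pop time, and use commutativity to reconcile the reordered product. The only cosmetic difference is bookkeeping: the paper uses $Q\times Q\times Q$ and carries the pending guess forward through shifts and pops until the next push (pinning the trailing guess to a fixed $(q,q)$ at the end), whereas you attach the guess locally to each state and use $\star$ for non-push sources---both devices serve the same purpose of making the lift of an $\mathcal{A}$-run unique.
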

\begin{proof}
Let $\mathcal{A}=(Q,I,F,\delta,\wt)$ be a wOPA over $(\Sigma,M)$ and $\mathbb{K}$. Note that for every pop transition of a wOPA, there exists exactly one push transition. We construct an rwOPA $\mathcal{B}$ over the state set $Q'=Q\times Q \times Q$ and with the same behavior as $\mathcal{A}$ with the following idea in mind. In the first state component $\mathcal{B}$ simulates $\mathcal{A}$. In the second and third state component of $Q'$ the automaton $\mathcal{B}$ preemptively guesses the states $q$ and $r$ of the pop transition $(q,p,r)$ of $\mathcal{A}$ which corresponds to the next push transition following after this configuration. This enables us to transfer the weight from the pop transition to the correct push transition.
	\par
	The detailed construction of $\mathcal{B}=(Q',I',F',\delta',\wt')$ over $(\Sigma,M)$ and $\mathbb{K}$ is the following. If $Q=\emptyset$, then $\llbracket\mathcal{A}\rrbracket\equiv0$ is trivially strictly regular. If $Q$ is nonempty, let $q\in Q$ be a fixed state. Then, we set $Q'=Q\times Q \times Q$, $I'=\{(q_1,q_2,q_3) \mid q_1 \in I,q_2,q_3 \in Q\}$, $F'=\{(q_1,q,q) \mid q_1 \in F\}$, and
	\begin{align*}
		\delta'_{\push}&=\{((q_1,q_2,q_3),a,(r_1,r_2,r_3)) \mid (q_1,a,r_1) \in \delta_{\push} \text{ and } (q_2,q_1,q_3) \in \delta_{\pop}\} \\
		\delta'_{\shift}&=\{((q_1,q_2,q_3),a,(r_1,q_2,q_3)) \mid (q_1,a,r_1) \in \delta_{\shift}\} \\
        \delta'_{\pop}&=\{((q_1,q_2,q_3),(p_1,q_1,r_1),(r_1,q_2,q_3)) \mid (q_1,p_1,r_1) \in \delta_{\pop} \} \enspace .
	\end{align*}
    Here, every push of $\mathcal{B}$ controls that the previously guessed $q_2$ and $q_3$ can be used by a pop transition of $\mathcal{A}$ going from $q_2$ to $q_3$ with $q_1$ on top of the stack. %
    Every pop controls that the symbols on top of the stack are exactly the ones used at this pop. Since the second and third state component are guessed for the next push, they are passed on whenever we read a shift or pop. The second and third component pushed at the first position of a word are guessed by an initial state. At the last push, which therefore has no following push and will propagate the second and third component to the end of the run, the automaton $\mathcal{B}$ has to guess the distinguished state used in the final states.
	\par
    Therefore, %
	$\mathcal{B}$ has exactly one accepting run (of the same length) for every accepting run of $\mathcal{A}$, and vice versa. 
    Finally, we define the transition weights as follows.
	\begin{align*}
		\wt'_{\push}((q_1,q_2,q_3),a,(r_1,r_2,r_3))&=\wt_{\push}(q_1,a,r_1) \cdot \wt_{\pop}(q_2,q_1,q_3) \\
		\wt'_{\shift}((q_1,q_2,q_3),a,(r_1,r_2,r_3))&=\wt_{\shift}(q_1,a,r_1)\\
		\wt'_{\pop}& \equiv 1 \enspace.
	\end{align*}	
	Then, %
	the runs of $\mathcal{A}$ simulated by $\mathcal{B}$ have exactly %
    the same weights but in a different ordering. %
	Since $\mathbb{K}$ is commutative, it follows that $\llbracket\mathcal{A}\rrbracket=\llbracket\mathcal{B}\rrbracket$.
	\qed
\end{proof}
In the following, we study closure properties of weighted OPA and restricted weighted OPA. As usual, we extend the operation $+$ and $\cdot$ to series $S,T:(\Sigma,M)^+ \rightarrow K$ by means of pointwise definitions as follows:
\begin{align*}
(S + T)(w) &= S(w) + T(w) \mbox{ for each } w \in (\Sigma,M)^+\\
(S \odot T)(w) &= S(w) \cdot T(w) \mbox{ for each } w \in (\Sigma,M)^+ \enspace .
\end{align*}
\begin{proposition}
	\label{prop:disjsums}
	The sum of two regular (resp. strictly regular) series over $(\Sigma,M)^+$ is again regular (resp. strictly regular).
\end{proposition}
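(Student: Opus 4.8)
The plan is to use the standard disjoint-union construction for weighted automata, adapted to the operator precedence setting. Suppose $S=\llbracket \mathcal{A}_1 \rrbracket$ and $T=\llbracket \mathcal{A}_2 \rrbracket$ for wOPA $\mathcal{A}_i=(Q_i,I_i,F_i,\delta^i,\wt^i)$ over the common OP alphabet $(\Sigma,M)$; after renaming states I may assume $Q_1 \cap Q_2 = \emptyset$. I would then build $\mathcal{C}=(Q_1\cup Q_2,\, I_1\cup I_2,\, F_1\cup F_2,\, \delta,\, \wt)$ by setting each transition relation $\delta_{op}:=\delta^1_{op}\cup\delta^2_{op}$ (for $op\in\{\shift,\push,\pop\}$) and letting $\wt_{op}$ agree with $\wt^i_{op}$ on $\delta^i_{op}$; this is well defined precisely because the two components are disjoint.

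The key step I need is that no accepting run of $\mathcal{C}$ ever mixes the two state sets: a run beginning in $I_1$ stays forever inside $Q_1$, and symmetrically for $Q_2$. I would prove this by induction on the length of the run, maintaining the invariant that the current state together with every state stored on the stack lies in $Q_1$. The base case is the initial configuration $\langle \bot, q_I, 1, w\# \rangle$ with $q_I\in I_1$ and empty stack. For the inductive step, the crucial observation is that the disjointness of $Q_1$ and $Q_2$ forces the choice of automaton: in a push or shift move from a state $q\in Q_1$, the triple $(q,b,r)$ lies in $\delta_{op}$ with first component $q\in Q_1$, hence necessarily in $\delta^1_{op}$, so that $r\in Q_1$ and the symbol $[b,q]$ pushed onto the stack again carries a $Q_1$-state; in a pop move the triple $(q,p,r)\in\delta_{\pop}$ again has first component $q\in Q_1$, so it comes from $\delta^1_{\pop}\subseteq Q_1^3$, which is consistent with the induction hypothesis that the popped stack state $p$ is in $Q_1$ and yields $r\in Q_1$.

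With this separation established, every accepting run of $\mathcal{C}$ is literally an accepting run of $\mathcal{A}_1$ or of $\mathcal{A}_2$, carrying the same weight, and conversely every accepting run of $\mathcal{A}_1$ or of $\mathcal{A}_2$ is an accepting run of $\mathcal{C}$. Since the final states of $\mathcal{C}$ are $F_1\cup F_2$ and the sets $\acc(\mathcal{A}_1,w)$ and $\acc(\mathcal{A}_2,w)$ are disjoint, summing the weights gives $\llbracket \mathcal{C} \rrbracket(w)=S(w)+T(w)$ for every $w$, so $S+T$ is regular. For the strictly regular case I would simply note that if both $\mathcal{A}_1$ and $\mathcal{A}_2$ satisfy $\wt_{\pop}\equiv 1$, then so does $\mathcal{C}$, whence $\mathcal{C}$ is an rwOPA and $S+T$ is strictly regular.

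I expect the only genuinely non-routine step to be the run-separation invariant, in particular checking that pop moves cannot cause a run to jump between the two automata through the stack; everything else is bookkeeping.
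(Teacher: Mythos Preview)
Your proposal is correct and follows exactly the same approach as the paper: a disjoint-union construction on the two wOPA, with the strictly regular case handled by observing that the pop-weight function of the union is again constantly $1$. The paper's proof is terser---it simply asserts that the standard disjoint union works and that $\llbracket \mathcal{C}\rrbracket=\llbracket \mathcal{A}\rrbracket+\llbracket \mathcal{B}\rrbracket$---whereas you spell out the run-separation invariant (including the stack argument for pop moves), but this is just added detail, not a different method.
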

\begin{proof}
	We use a standard disjoint union of two (r)wOPA accepting the given series to obtain a (r)wOPA for the sum as follows.
	\par
	Let $\mathcal{A}=(Q,I,F,\delta,\wt)$ and $\mathcal{B}=(Q',I',F',\delta',\wt')$ be two wOPA over $(\Sigma,M)$ and $\mathbb{K}$.
	We construct a wOPA $\mathcal{C}=(Q'',I'',F'',\delta'',\wt'')$ over $(\Sigma,M)$ and $\mathbb K$ by defining $Q''=Q \sqcup Q'$, $I''=I\cup I$, $F''=F \cup F'$, $\delta''=\delta \cup \delta'$. The weight function is defined by
	\[
		\wt''(t)=\left \{
		\begin{array}{ll}
			\wt(t) &,\text{ if } t \in \delta \\
			\wt'(t) &, \text{ if } t \in \delta' \\
		\end{array} \right .
	\enspace .\]
	Then, $\llbracket \mathcal{C} \rrbracket = \llbracket \mathcal{A} \rrbracket + \llbracket \mathcal{B} \rrbracket$. Furthermore, if $\mathcal{A}$ and $\mathcal{B}$ are restricted, i.e. $\wt\equiv 1$ and $\wt'\equiv 1$, it follow that $\wt''\equiv 1$, and therefore $\mathcal{C}$ is also restricted.
	\qed
\end{proof}
\begin{proposition} \label{prop:restrictedprod}
	Let $S:(\Sigma,M)^+ \rightarrow K$ be a
regular (resp. strictly regular) series and $L\subseteq (\Sigma,M)^+$ an OPL.
	Then, the series
	$(S \cap L)(w)=
		\left \{
			\begin{array}{ll}
				S(w) &,\text{ if } w \in L \\
				0 &,\text{ otherwise}
			\end{array}
		\right \}
	$
	is regular (resp. strictly regular).
	Furthermore, if $\mathbb{K}$ is commutative, then the product of two regular (resp. strictly regular) series over $(\Sigma,M)^+$ is again regular (resp. strictly regular).
\end{proposition}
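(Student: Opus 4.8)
The plan is to give, for both statements, a synchronized product construction of two operator precedence automata over the same OP alphabet $(\Sigma,M)$. The construction rests on the following key observation: on any compatible word $w$, the \emph{type} of each move (push, shift, or pop) and the alphabet symbol placed on the stack are dictated solely by $M$ and by the alphabet symbols currently on the stack together with the next input symbol, and are therefore \emph{independent of the states}. Consequently any two OPAs over $(\Sigma,M)$ run in lockstep on $w$: at each step both perform the same kind of move on the same alphabet symbol, differing only in their state annotations. This lets me build a product automaton over the state set $Q \times Q'$, whose stack ranges over $\Sigma \times (Q \times Q')$, by combining transitions of the same type: a push $((q_1,q_2),a,(r_1,r_2))$ whenever $(q_1,a,r_1)\in\delta_{\push}$ and $(q_2,a,r_2)\in\delta'_{\push}$, a shift analogously, and a pop $((q_1,q_2),(p_1,p_2),(r_1,r_2))$ whenever $(q_1,p_1,r_1)\in\delta_{\pop}$ and $(q_2,p_2,r_2)\in\delta'_{\pop}$. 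With initial states $I\times I'$ and final states $F\times F'$, the accepting runs of this product biject with pairs consisting of an accepting run of the first automaton and an accepting run of the second on the same input.

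For the first statement, I take a (r)wOPA $\mathcal{A}$ recognizing $S$ and, invoking the determinizability of OPA from \cite{LMPP15}, a \emph{deterministic} OPA $\mathcal{D}$ recognizing $L$. I form the product $\mathcal{C}$ just described and define its weights by copying those of $\mathcal{A}$ on the first coordinate, i.e. $\wt''_{op}((q_1,q_2),\ldots)=\wt_{op}(q_1,\ldots)$ for $op\in\{\push,\shift,\pop\}$, ignoring the (unweighted) second coordinate. Determinism of $\mathcal{D}$ guarantees that each $w\in L$ admits exactly one accepting $\mathcal{D}$-run and each $w\notin L$ admits none; hence the accepting runs of $\mathcal{C}$ on $w$ biject weight-preservingly with $\acc(\mathcal{A},w)$ when $w\in L$ and with the empty set otherwise, so $\llbracket\mathcal{C}\rrbracket = S\cap L$. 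Since the pop weights of $\mathcal{C}$ are exactly those of $\mathcal{A}$, restrictedness is inherited; note that no commutativity is needed here.

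For the second statement I take (r)wOPA $\mathcal{A}$ and $\mathcal{B}$ recognizing $S$ and $T$, form the same product, and now multiply the weights of the two combined transitions: $\wt''_{op}((q_1,q_2),\ldots)=\wt_{op}(q_1,\ldots)\cdot\wt'_{op}(q_2,\ldots)$. A synchronized run of $\mathcal{C}$ corresponding to a pair $(\rho,\sigma)$ of runs of length $m$ then carries weight $\prod_{i=1}^m(a_i\cdot b_i)$, where $a_i$ and $b_i$ are the step weights of $\rho$ and $\sigma$; using commutativity of $\mathbb{K}$ this reorders to $(\prod_i a_i)\cdot(\prod_i b_i)=\wt(\rho)\cdot\wt(\sigma)$. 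Summing over all accepting runs and applying distributivity yields $\llbracket\mathcal{C}\rrbracket(w)=\bigl(\sum_{\rho}\wt(\rho)\bigr)\cdot\bigl(\sum_{\sigma}\wt(\sigma)\bigr)=S(w)\cdot T(w)$. In the restricted case the pop weights multiply to $1\cdot 1=1$, so $\mathcal{C}$ is again restricted.

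The main obstacle is justifying the lockstep synchronization rigorously --- i.e. that the move types and the stack's alphabet contents evolve identically and state-independently for both automata, so that the product transitions are always simultaneously enabled. Once this is in place, the two statements differ only in how weights are combined: the intersection needs the determinization of $\mathcal{D}$ to avoid counting a word in $L$ with multiplicity, while the product needs the commutativity of $\mathbb{K}$ to reorder the interleaved step weights into the desired product $\wt(\rho)\cdot\wt(\sigma)$.
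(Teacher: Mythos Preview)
Your proposal is correct and follows essentially the same approach as the paper: a synchronized product construction, using a deterministic OPA for $L$ in the first part (with weights copied from $\mathcal{A}$) and multiplying the transition weights in the second part (relying on commutativity to reorder). Your explicit justification of the lockstep synchronization is in fact more detailed than the paper's, which simply notes that ``given a word $w$, the automata $\mathcal{A}$, $\mathcal{B}$, and $\mathcal{C}$ have to use pushes, shifts, and pops at the same positions.''
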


\begin{proof}
	We use a product construction of automata. 
	\par
	Let $\mathcal{A}=(Q,I,F,\delta,\wt)$ be a wOPA over $(\Sigma,M)$ and $\mathbb K$ with $\llbracket \mathcal{A} \rrbracket = S$ and let $\mathcal{B}=(Q',q'_0,F',\delta')$ be a deterministic OPA over $(\Sigma,M)$ with $L(\mathcal{B})=L$.
	We construct a wOPA $\mathcal{C}=(Q'',I'',F'',\delta'',\wt'')$ over $(\Sigma,M)$ and $\mathbb{K}$, with
	$\llbracket\mathcal{C}\rrbracket=(S \cap L)(w)=
		\left \{
			\begin{array}{ll}
			S(w) &,\text{ if } w \in L \\
			0 &,\text{ otherwise}
			\end{array}
		\right \}
	$, as follows.
	We define $Q''=Q \times Q'$, $I''=I \times \{q'_0\}$, $F''=F \times F'$, and %
	\begin{align*}
		\delta''_{\push}&=\{((q,q'),a,(r,r')) \mid (q,a,r) \in \delta_{\push} \text{ and } \delta'_{\push}(q',a)=r'\}\enspace ,\\
		\delta''_{\shift}&=\{((q,q'),a,(r,r')) \mid (q,a,r) \in \delta_{\shift} \text{ and } \delta'_{\shift}(q',a)=r'\}\enspace ,\\
		\delta''_{\pop}&=\{((q,q'),(p,p'),(r,r')) \mid (q,p,r) \in \delta_{\pop} \text{ and } \delta'_{\pop}(q',p')=r'\} \enspace .
	\end{align*}
	Then the weights of $\mathcal{C}$ are defined as
	\begin{align*}
		\wt''_{\push}((q,q'),a,(r,r'))= \wt_{\push}(q,a,r) \enspace ,\\
		\wt''_{\shift}((q,q'),a,(r,r'))= \wt_{\shift}(q,a,r) \enspace ,\\
		\wt''_{\pop}((q,q'),(p,p'),(r,r'))= \wt_{\pop}(q,p,r) \enspace .
	\end{align*}
	Note that %
    given a word $w$, the automata $\mathcal{A}$, $\mathcal{B}$, and $\mathcal{C}$ have to use pushes, shifts, and pops at the same positions. Hence, every accepting run of $\mathcal{C}$ on $w$ defines exactly one accepting run of $\mathcal{B}$ and exactly one accepting run of $\mathcal{A}$ on $w$ with matching weights, and vice versa.
	We obtain
	\begin{align*}
		\llbracket \mathcal{C} \rrbracket (w)
		&=\sum_{\rho \in \acc(\mathcal{C},w)}\wt(\rho)
		\\
		&=\sum_{\substack{
				\rho, \text{ such that } \\
				\rho_{\restriction Q} \in \acc(\mathcal{A},w)  \\
				\rho_{\restriction Q'} \in \acc(\mathcal{B},w)}}
		\wt(\rho) \\
		&= \left \{ \begin{array}{ll}
		\sum_{\rho \in \acc(\mathcal{A},w)}\wt(\rho)
		&,\text{ if the run of } \mathcal{B} \text{ on $w$ is accepting }  \\
		0 &, \text{ otherwise}
		\end{array} \right .\\
		&= (S \cap L) (w) \enspace .
	\end{align*}
	It follows that, $\llbracket \mathcal{C} \rrbracket=S \cap L$.
	\par
	For the second part of the proposition, %
	let $\mathcal{A}=(Q,I,F,\delta,\wt)$ and $\mathcal{B}=(Q',I',F',\delta',\wt')$ be two wOPA.
	We construct a wOPA $\mathcal{P}$ as $\mathcal{P}=(Q \times Q', I \times I', F \times F', \delta^\mathcal{P},\wt^\mathcal{P})$ where $\delta^\mathcal{P}=(\delta^\mathcal{P}_{\push}, \delta^\mathcal{P}_{\shift}, \delta^\mathcal{P}_{\pop})$ and set
\begin{align*}
		\delta^\mathcal{P}_{\push}&=\{((q,q'),a,(r,r')) \mid (q,a,r) \in \delta_{\push} \text{ and } (q',a,r') \in \delta'_{\push}\}\enspace ,\\
		\delta^\mathcal{P}_{\shift}&=\{((q,q'),a,(r,r')) \mid (q,a,r) \in \delta_{\shift} \text{ and } (q',a,r') \in \delta'_{\shift}\}\enspace ,\\
		\delta^\mathcal{P}_{\pop}&=\{((q,q'),(p,p'),(r,r')) \mid (q,p,r) \in \delta_{\pop} \text{ and } (q',p',r') \in  \delta'_{\pop}\} \enspace ,
\end{align*}
and
\begin{align*}
	\wt^\mathcal{P}_{\push}((q,q'),a,(r,r')) &= \wt'_{\push}(q,a,r) \cdot \wt''_{\push}(q',a,r')\enspace, \\
	\wt^\mathcal{P}_{\shift}((q,q'),a,(r,r')) &= \wt'_{\shift}(q,a,r) \cdot \wt''_{\shift}(q',a,r')\enspace, \\
	\wt^\mathcal{P}_{\pop}((q,q'),(p,p'),(r,r')) &= \wt'_{\pop}(q,p,r) \cdot \wt''_{\pop}(q',p',r')\enspace.
\end{align*}
It follows that $\llbracket \mathcal{P} \rrbracket=\llbracket \mathcal{A} \rrbracket \odot \llbracket \mathcal{B} \rrbracket$. Furthermore, if $\mathcal{A}$ and $\mathcal{B}$ are restricted, then so is $\mathcal{P}$.
	\qed
\end{proof}
Next, we show that regular series are closed under projections which preserve the OPM. For two OP alphabets $(\Sigma,M)$, $(\Gamma,M')$ and a mapping $h:\Sigma \rightarrow \Gamma$, %
we write $h:(\Sigma,M) \rightarrow (\Gamma,M')$ and say $h$ is \emph{OPM-preserving} if for all $\odot \in \{\lessdot, \doteq, \gtrdot\}$, we have $a \odot b$ if and only if $h(a)\odot h(b)$. %
We can extend such an $h$ to a function $h:(\Sigma,M)^+ \rightarrow (\Gamma,M')^+$ as follows. Given a word $w=(a_1a_2...a_n)\in (\Sigma,M)^+$, we define $h(w)%
=h(a_1a_2...a_n)
=h(a_1)h(a_2)...h(a_n)$.
Let $S:(\Sigma,M)^+ \rightarrow K$ be a series. Then, we define $h(S):(\Gamma,M')^+ \rightarrow K$ for each $\mathit{v}\in (\Gamma,M')^+$ by
\begin{align}
	\label{formula:hom} h(S)(\mathit{v})&=
		\sum_{\substack{ ~w \in (\Sigma,M)^+ \\ h(w)=\mathit{v} }}
			S(w) \enspace.
\end{align}
\begin{proposition} \label{prop:closurehom}
	Let $\mathbb{K}$ be a semiring, $S:(\Sigma,M)^+ \rightarrow K$ regular (resp. strictly regular), and $h:\Sigma \rightarrow \Gamma$ an OPM-preserving projection. Then, $h(S):(\Gamma,M')^+ \rightarrow K$ is regular (resp. strictly regular). %
\end{proposition}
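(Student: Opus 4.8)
The plan is to take a (restricted) wOPA $\mathcal{A} = (Q, I, F, \delta, \wt)$ with $\llbracket \mathcal{A} \rrbracket = S$ and build a (restricted) wOPA $\mathcal{B}$ over $(\Gamma, M')$ by relabeling every input-reading transition through $h$ while retaining its weight, and leaving the pop transitions (which read no input) untouched. Concretely, I would set $Q^{\mathcal{B}} = Q$, $I^{\mathcal{B}} = I$, $F^{\mathcal{B}} = F$, let $\delta^{\mathcal{B}}_{\push}$ (resp.\ $\delta^{\mathcal{B}}_{\shift}$) consist of all triples $(q, h(a), r)$ with $(q, a, r) \in \delta_{\push}$ (resp.\ $\delta_{\shift}$), and put $\delta^{\mathcal{B}}_{\pop} = \delta_{\pop}$. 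Since several letters of $\Sigma$ may collapse to the same letter of $\Gamma$, the natural weight of a relabeled transition is the sum of the weights of its preimages: $\wt^{\mathcal{B}}_{\push}(q, c, r) = \sum \wt_{\push}(q, a, r)$ over all $a \in h^{-1}(c)$ with $(q,a,r) \in \delta_{\push}$, and analogously for shifts, while pop weights stay as they are, so that if $\mathcal{A}$ is restricted then so is $\mathcal{B}$. Each such sum is finite because $\Sigma$, and hence every fibre $h^{-1}(c)$, is finite.

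The key structural observation, which I would isolate as a small lemma, is that because $h$ is OPM-preserving the entire precedence structure is transported faithfully. For a word $w = a_1 \cdots a_n$, the relations $a_i \odot a_{i+1}$ and the induced chain relation $\curvearrowright$ on $\#w\#$ depend only on the entries of the matrix, so they coincide with the corresponding relations on $\#h(w)\#$. Consequently $w$ is compatible with $M$ if and only if $h(w)$ is compatible with $M'$; in particular, for a fixed $v \in (\Gamma, M')^+$ every preimage $w$ with $h(w) = v$ automatically lies in $(\Sigma, M)^+$, and there are only finitely many such $w$. Moreover, the sequence of push, shift, and pop moves an automaton performs is dictated solely by these precedence relations, so $\mathcal{A}$ on any preimage $w$ and $\mathcal{B}$ on $v$ go through exactly the same pattern of moves; the only freedom left in a run is the choice of states.

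With this in hand the behavior computation is a distributivity argument. Fixing $v = c_1 \cdots c_n \in (\Gamma, M')^+$, an accepting run of $\mathcal{B}$ on $v$ is a valid state sequence over the fixed move pattern of $v$; expanding each relabeled transition weight back into its defining sum and using distributivity of the semiring, the product of weights along the run unfolds into a sum, over all choices of a preimage letter $a_i$ at each reading position, of the weight of the corresponding run of $\mathcal{A}$. Since a choice of preimage letters at all positions is exactly a preimage word $w$ with $h(w) = v$, summing over all accepting runs of $\mathcal{B}$ reorganizes into $\sum_{w : h(w) = v} \sum_{\rho \in \acc(\mathcal{A}, w)} \wt(\rho) = \sum_{w : h(w) = v} S(w) = h(S)(v)$, which is precisely formula~(\ref{formula:hom}). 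I expect the main obstacle to be making this reorganization precise: one has to set up the bijection between accepting runs of $\mathcal{B}$ on $v$ paired with a preimage choice, and accepting runs of $\mathcal{A}$ on the preimages $w$, and verify that the summation of transition weights genuinely commutes with the product along the run via distributivity. The rest of the construction is routine relabeling.
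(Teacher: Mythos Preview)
Your argument is correct but takes a different route from the paper. The paper, following an idea of Droste and Vogler, enlarges the state set to $Q \times \Sigma$ and has $\mathcal{B}$ nondeterministically guess the preimage letter and store it in its state; this yields a weight-preserving bijection between accepting runs of $\mathcal{A}$ on the preimages $w$ and accepting runs of $\mathcal{B}$ on $v = h(w)$, so no summation of transition weights and no distributivity argument are needed at all. Your construction instead keeps the state set $Q$ and collapses the preimage transitions by summing their weights, which is the classical technique for ordinary weighted word automata; its correctness hinges on the observation you make explicit, namely that the push/shift/pop move pattern is determined by $v$ alone, so that the product of sums along a fixed state sequence expands via (two-sided) semiring distributivity exactly into $\sum_{h(w)=v}\llbracket\mathcal{A}\rrbracket(w)$, even for non-commutative $\mathbb{K}$. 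Both constructions preserve restrictedness trivially, since pop transitions are untouched. Your version is more economical in states and arguably more direct; the paper's version trades the algebraic reorganization for an explicit run-level bijection, which can be convenient when the construction is reused elsewhere.
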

\begin{proof}
	We follow an idea of \cite{DV12} and its application in \cite{DPi14} and \cite{DD16}. 
	Let $\mathcal{A}=(Q,I,F,\delta,\wt)$ %
	be a wOPA over $(\Sigma,M)$ and $\mathbb{K}$ with $\llbracket \mathcal{A} \rrbracket = S$.
	The main idea is to remember the last symbol read in the next state to distinguish different runs of $\mathcal{A}$ which would otherwise coincide in $\mathcal{B}$.
	We construct the wOPA $\mathcal{B}=(Q',I',F',\delta',\wt')$ over $(\Sigma,M)$ and $\mathbb{K}$ as follows.
	We set $Q'=Q \times \Sigma$, $I'=I \times \{a_0\}$ for some fixed $a_0 \in \Sigma$, and $F'=F\times \Sigma$.
	We define the transition relations $\delta'=(\delta'_{\push}, \delta'_{\shift}, \delta'_{\pop})$ for every $b \in \Gamma$ and $(q,a), (q',a'), (q'',a'') \in Q'$, as
	\begin{align*}	
	\delta'_{\push}&=\{((q,a),b,(q',a')) \mid (q,a',q') \in \delta_{\push} \textrm{ and } b=h(a') \} \enspace , \\
	\delta'_{\shift}&=\{((q,a),b,(q',a')) \mid (q,a',q') \in \delta_{\shift} \textrm{ and } b=h(a') \} \enspace , \\
	\delta'_{\pop}&=\{((q,a),(q',a'),(q'',a)) \mid (q,q',q'') \in \delta_{\pop} \} \enspace .	
	\end{align*}
	Then, the weight functions are defined by
	\begin{align*}	
	\wt'_{\push}((q,a),h(a'),(q',a')) &= \wt_{\push} (q,a',q')\enspace,\\
	\wt'_{\shift}((q,a),h(a'),(q',a')) &= \wt_{\shift} (q,a',q')\enspace, \\
	\wt'_{\pop}((q,a),(q',a'),(q'',a'')) &= \wt_{\pop} (q,q',q'') \enspace.
	\end{align*}
	Analogously to \cite{DPi14} and \cite{DD16}, this implies that for every run $\rho$ of $\mathcal{A}$ on $w$, there exists exactly one run $\rho'$ of $\mathcal{B}$ on $v$ with $h(w)=v$ and $\wt(\rho)$=$\wt(\rho')$.
	One difference to previous works is that a pop of a wOPA is not consuming the symbol. Therefore, we have to make sure to not change the symbol, which we are currently remembering while processing a pop.
	\par
	It follows that
	$\llbracket \mathcal{A}'\rrbracket (v) = h(\llbracket \mathcal{A} \rrbracket)(v)$, so $h(S)= \llbracket \mathcal{A}' \rrbracket $ is regular. 
	Furthermore, if $\mathcal{A}$ is restricted, then so is $\mathcal{B}$.
	\qed
\end{proof}

	\section{A Nivat Theorem} \label{section:nivat}
In this section, we establish a connection between weighted
OPLs and strictly regular series. We show that strictly regular series are exactly those series which can be derived from a restricted weighted OPA with only one state, intersected with an unweighted OPL, and using an OPM-preserving projection of the alphabet. 
\par
Let $h:\Sigma' \rightarrow \Sigma$ be a map between two alphabets. %
Given an OP alphabet $(\Sigma,M)$, we define $h^{-1}(M)$ by setting $h^{-1}(M)_{a'b'} = M_{h(a')h(b')}$ for all $a',b' \in \Sigma'$.
As $h$ is OPM-preserving, for every series $S:(\Sigma,M)^+ \rightarrow K$, we get a series $h(S):(\Sigma',h^{-1}(M))^+ \rightarrow K$, using the sum over all pre-images as in formula (\ref{formula:hom}).
\par
Let $\mathcal{N}(\Sigma,M,\mathbb{K})$ comprise all series $S:(\Sigma,M)^+ \rightarrow K$ for which there exist an alphabet $\Sigma'$, %
a map $h:\Sigma' \rightarrow \Sigma$, and a one-state rwOPA $\mathcal{B}$ over $(\Sigma',h^{-1}(M))$ and $\mathbb{K}$ and an OPL $L$ %
over $(\Sigma',h^{-1}(M))$ such that $S=h(\llbracket\mathcal{B}\rrbracket \cap L)$.
\par
Now, we show that every strictly regular series can be decomposed into the above introduced fragments. %
\begin{proposition} \label{prop:RECtoN}
	Let $S:(\Sigma,M)^+\rightarrow K$ be a series. If $S$ is strictly regular, then $S$ is in
	$\mathcal{N}(A,B,\mathbb{K})$.
\end{proposition}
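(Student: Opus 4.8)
The plan is to prove this by a Nivat-style decomposition that pushes all the state information of a recognizing automaton into an enriched alphabet. Concretely, I would start from an rwOPA $\mathcal{A}=(Q,I,F,\delta,\wt)$ with $\llbracket\mathcal{A}\rrbracket=S$ and build an alphabet $\Sigma'$ whose letters are the symbol-reading transitions of $\mathcal{A}$, a one-state rwOPA $\mathcal{B}$ that merely records the weight of the transition named by the letter it reads, and an OPL $L$ accepting precisely those $\Sigma'$-words which spell out a genuine accepting run of $\mathcal{A}$. The decisive structural observation, which is exactly what makes a one-state weighting possible, is that $\mathcal{A}$ is restricted: since every pop transition carries weight $1$, the whole weight of a run is the product of the weights of its push and shift transitions, and those are in bijection with the positions of the input word. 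Hence each position can be labelled by the (weighted) transition that consumed it, while the pops---advancing no reading head and contributing no weight---need only be checked for well-formedness.

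For the construction I would set $\Sigma' = \{\langle\push,(q,a,r)\rangle \mid (q,a,r)\in\delta_{\push}\} \cup \{\langle\shift,(q,a,r)\rangle \mid (q,a,r)\in\delta_{\shift}\}$ and define $h\colon\Sigma'\to\Sigma$ by $h(\langle op,(q,a,r)\rangle)=a$. Then $h^{-1}(M)$ makes $h$ OPM-preserving by construction, and a word $v$ over $\Sigma'$ is compatible with $h^{-1}(M)$ iff $h(v)$ is compatible with $M$, so the structural (push/shift/pop) type of each position of $v$ agrees with that of $h(v)$. The automaton $\mathcal{B}$ has a single state $\star$, which is both initial and final; its push (resp.\ shift) transition reading $\langle\push,(q,a,r)\rangle$ (resp.\ $\langle\shift,(q,a,r)\rangle$) has weight $\wt_{\push}(q,a,r)$ (resp.\ $\wt_{\shift}(q,a,r)$), and its unique pop transition $(\star,\star,\star)$ has weight $1$. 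Thus $\mathcal{B}$ has at most one run on any $v$, and whenever it runs (that is, whenever the push/shift tags match the structural type) its behavior equals the product of the named transition weights.

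The heart of the argument is the OPL $L$, which I would realize by an unweighted OPA $\mathcal{C}$ with state set $Q$ and the same initial and final states as $\mathcal{A}$: on a letter naming $(q,a,r)$ the automaton $\mathcal{C}$ performs the corresponding push or shift, insisting that its current state be $q$ and moving to $r$ (thereby storing the pre-push state $q$ on the stack); on a pop it applies $\mathcal{A}$'s relation $\delta_{\pop}$, reading the matching stored state off the stack. Then $\mathcal{C}$ accepts $v$ exactly when the transitions named by the letters of $v$ chain together into an accepting run of $\mathcal{A}$ on $h(v)$, so $L:=L(\mathcal{C})$ is an OPL, and there is a weight-preserving bijection $\rho\mapsto v_\rho$ between accepting runs of $\mathcal{A}$ on $w$ and words $v\in L$ with $h(v)=w$, satisfying $\wt(\rho)=\llbracket\mathcal{B}\rrbracket(v_\rho)$. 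Summing over pre-images and using the definitions of $\cap$ and of $h(\cdot)$ from formula (\ref{formula:hom}) then gives
\begin{align*}
  h(\llbracket\mathcal{B}\rrbracket\cap L)(w)
  &= \sum_{\substack{v\in L\\ h(v)=w}}\llbracket\mathcal{B}\rrbracket(v)
  = \sum_{\rho\in\acc(\mathcal{A},w)}\wt(\rho)
  = S(w),
\end{align*}
so that $S=h(\llbracket\mathcal{B}\rrbracket\cap L)\in\mathcal{N}(\Sigma,M,\mathbb{K})$.

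The step I expect to require the most care is verifying that $\mathcal{C}$ simulates the pop behavior faithfully, precisely because OPA may be non-real-time: between two symbol-reading moves the automaton can execute a whole cascade of pops without advancing, and each popped state must coincide with the state held when the matching push occurred. I would therefore check that $\mathcal{C}$'s stack stores the pre-push state, that intervening shift moves leave this stored state untouched (overwriting only the top symbol, exactly as in the semantics of $\mathcal{A}$), and that this is precisely what $\mathcal{A}$'s pop transitions consume---so that omitting the pops from $\Sigma'$ and from $\mathcal{B}$ loses no information and the run correspondence is genuinely bijective. A minor accompanying point is to confirm that for $v\in L$ the push/shift tags necessarily agree with the structural move type fixed by $h^{-1}(M)$, which is what guarantees that $\mathcal{B}$ actually has its unique run on $v$ and contributes the intended product of weights.
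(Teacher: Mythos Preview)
Your proposal is correct and follows essentially the same approach as the paper: encode the symbol-reading transitions of the recognizing rwOPA into an extended alphabet, let a one-state rwOPA $\mathcal{B}$ output the transition weight named by each letter, and use an unweighted OPA $\mathcal{C}$ (with the original state set, initial and final states, and $\delta_{\pop}$) to cut out exactly those $\Sigma'$-words that spell an accepting run. The only cosmetic difference is that you tag each letter with its move type $\push/\shift$, whereas the paper takes $\Sigma'=Q\times\Sigma\times Q$ and lets the OPM $h^{-1}(M)$ decide the move type; your tagging is harmless (mismatched tags are rejected by both $\mathcal{B}$ and $\mathcal{C}$) and arguably makes the weight assignment in $\mathcal{B}$ cleaner when the same triple $(q,a,r)$ lies in both $\delta_{\push}$ and $\delta_{\shift}$ with different weights.
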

\begin{proof}
	We follow some ideas of \cite{DK15} and \cite{DP14}.
	\par
	Let $\mathcal{A}=(Q,I,F,\delta,\wt)$ be a rwOPA over $(\Sigma,M)$ and $\mathbb{K}$ with $ \llbracket\mathcal{A}\rrbracket = S$.
	We set $\Sigma' = Q \times \Sigma \times Q$ %
	as the extended alphabet. The intuition is that $\Sigma'$ consists of the push and the shift transitions of $\mathcal{A}$. %
	Let $h$ be the projection of $\Sigma'$ to $\Sigma$ and let $M'=h^{-1}(M)$. %
	\par
	Let $L\subseteq (\Sigma',M')^+$ be the language consisting of all words $w'$ over the extended alphabet such that $h(w')$ has an accepting run on $\mathcal{A}$ which uses at every position the push, resp. the shift transition defined by the symbol of $\Sigma'$ at this position. %
	\par
	We construct the unweighted OPA $\mathcal{A}'=(Q',I',F',\delta')$ over $(\Sigma',M')$, accepting $L$, as follows. We set $Q'=Q$, $I'=I$, $F'=F$, and define $\delta'$ as follows
	\begin{align*}
		\delta'_\push &= \set{(q,(q,a,p),p) \mid (q,a,p) \in \delta_\push}\enspace,\\
		\delta'_\shift &= \set{(q,(q,a,p),p) \mid (q,a,p) \in \delta_\shift} \enspace, \\ %
		\delta'_\pop &= \delta_\pop \enspace .
	\end{align*}
	Hence, $\mathcal{A'}$ has an accepting run on a word $w'\in (\Sigma',M')^+$ if and only if $\mathcal{A}$ has an accepting run on $h(w')$, using the push and shift transitions defined by $w'$.
	\par
	We construct the one-state rwOPA $\mathcal{B}=(Q'',I'',F'',\delta'',\wt'')$ %
	over $(\Sigma',M')$ and $\mathbb{K}$ as follows. Set $Q''=I''=F''=\{q\}$, $\delta''_\push=\delta''_\shift=\{(q,a',q) \mid a' \in \Sigma'\}$, $\delta''_\pop=\{(q,q,q)\}$, $\wt''_\push(q,a',q)=\wt_\push(a')$, $\wt''_\shift(q,a',q)=\wt_\shift(a')$, %
	for all $a' \in \Sigma'$, and $\wt''_\pop(q,q,q)=1$.
	\par
	Let $\rho$ be a run of $w=a_1...a_n \in (\Sigma,M)^+$ on $\mathcal{A}$ and $\rho'$ a run of $w'=a'_1...a'_n \in  (\Sigma',M')^+$ on $\mathcal{B}$. We denote with $\wt_{\mathcal{A}}(\rho,w,i)$, resp. $\wt_{\mathcal{B}}(\rho',w',i)$, the weight of the push or shift transition used by the run $\rho$, resp. $\rho'$, at position $i$. Since $\mathcal{A}$ and $\mathcal{B}$ are restricted, for all their runs $\rho$, $\rho'$, we have
	$\wt(\rho) = \prod^{|w|}_{i=1}\wt_{\mathcal{A}}(\rho,w,i)$, resp. $\wt(\rho') = \prod^{|w'|}_{i=1}\wt_{\mathcal{B}}(\rho',w',i)$. 
	Furthermore, following its definition, the rwOPA $\mathcal{B}$ has exactly one run $\rho$ for every word $w' \in (\Sigma',M')$ and for all $h(w')=w$ and for all $i \in \{1...n\}$, we have $\wt_{\mathcal{B}}(\rho',w',i)$ = $\wt_{\mathcal{A}}(\rho,w,i)$.
	It follows that
	\begin{align*}
		h(\llbracket\mathcal{B}\rrbracket \cap L)(w)
		&=\sum_{\substack{w' \in (\Sigma',M')^+ \\ h(w')=w }}
			(\llbracket\mathcal{B}\rrbracket \cap L)(w') \\
		&=\sum_{\substack{w' \in L(\mathcal{A'}) \\ h(w')=w }} %
			\llbracket\mathcal{B}\rrbracket(w')\\
		&=\sum_{\rho \in \acc(\mathcal{A},w)}
			\prod^{|w|}_{i=1}\wt_{\mathcal{A}}(\rho,w,i) \\
		&=\sum_{\rho \in \acc(\mathcal{A},w)}
			\wt(\rho) \\
		&= \llbracket\mathcal{A}\rrbracket (w) = S(w) \enspace . %
	\end{align*} 
	Hence, $S = h(\llbracket\mathcal{B}\rrbracket\cap L)$, thus $S \in \mathcal{N}(\Sigma,M,\mathbb{K})$.
\qed
\end{proof}
Using this proposition and closure properties of series, we get the following Nivat-Theorem for weighted operator precedence automata.
\begin{theorem}
	Let $\mathbb{K}$ be a semiring and $S:(\Sigma,M)^+\rightarrow K$ be a series.
	Then $S$ is strictly regular %
	if and only if $S\in \mathcal{N}(\Sigma,M,\mathbb{K})$.
\end{theorem}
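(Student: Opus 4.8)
The plan is to prove the two directions of the equivalence separately. The forward direction, that every strictly regular series $S$ lies in $\mathcal{N}(\Sigma,M,\mathbb{K})$, is exactly Proposition~\ref{prop:RECtoN}, which is already established. So the remaining work is the converse: every series in $\mathcal{N}(\Sigma,M,\mathbb{K})$ is strictly regular. This is where I would invoke the closure properties assembled earlier in the paper.

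For the converse, suppose $S\in\mathcal{N}(\Sigma,M,\mathbb{K})$, so that by definition there exist an alphabet $\Sigma'$, a map $h:\Sigma'\rightarrow\Sigma$, a one-state rwOPA $\mathcal{B}$ over $(\Sigma',h^{-1}(M))$ and $\mathbb{K}$, and an OPL $L$ over $(\Sigma',h^{-1}(M))$ such that $S=h(\llbracket\mathcal{B}\rrbracket\cap L)$. The strategy is to build $S$ up from these ingredients by applying the closure results in order. First, $\llbracket\mathcal{B}\rrbracket$ is strictly regular, since $\mathcal{B}$ is an rwOPA (a one-state rwOPA is in particular an rwOPA). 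Next, by Proposition~\ref{prop:restrictedprod}, the intersection $\llbracket\mathcal{B}\rrbracket\cap L$ of a strictly regular series with an OPL is again strictly regular. Finally, by Proposition~\ref{prop:closurehom}, applying the OPM-preserving projection $h$ to a strictly regular series yields a strictly regular series; hence $h(\llbracket\mathcal{B}\rrbracket\cap L)=S$ is strictly regular, as required.

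The one point that demands care, rather than being purely routine, is the bookkeeping of OP alphabets and the OPM-preserving condition. The series $\llbracket\mathcal{B}\rrbracket$ and the language $L$ live over $(\Sigma',h^{-1}(M))$, whereas the target series $S$ lives over $(\Sigma,M)$; I must check that $h:(\Sigma',h^{-1}(M))\rightarrow(\Sigma,M)$ is indeed OPM-preserving so that Proposition~\ref{prop:closurehom} applies. This is immediate from the definition $h^{-1}(M)_{a'b'}=M_{h(a')h(b')}$, which forces $a'\odot b'$ in $h^{-1}(M)$ exactly when $h(a')\odot h(b')$ in $M$ for each $\odot\in\{\lessdot,\doteq,\gtrdot\}$. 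I would also note that Proposition~\ref{prop:restrictedprod} is stated for $L$ and the series over the same OP alphabet, which here is $(\Sigma',h^{-1}(M))$, so the intersection step is legitimate. The main conceptual obstacle, if any, is simply ensuring that each closure lemma is applied on the correct alphabet and that restrictedness is preserved at every step; each of Propositions~\ref{prop:restrictedprod} and~\ref{prop:closurehom} explicitly records that strict regularity is inherited, so the chain closes cleanly and no genuinely new argument is needed beyond composing the established results.
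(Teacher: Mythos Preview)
Your proposal is correct and follows exactly the same approach as the paper: the forward direction is Proposition~\ref{prop:RECtoN}, and the converse applies Proposition~\ref{prop:restrictedprod} to obtain that $\llbracket\mathcal{B}\rrbracket\cap L$ is strictly regular, followed by Proposition~\ref{prop:closurehom} for the projection. Your additional verification that $h$ is OPM-preserving is a helpful sanity check the paper leaves implicit.
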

\begin{proof}
	The ``only if''-part of is immediate by Proposition \ref{prop:RECtoN}.
	\par
	For the converse, let %
	$\Sigma'$ be an alphabet, %
	$h:\Sigma' \rightarrow \Sigma$, $L\subseteq (\Sigma',h^{-1}(M))^+$ be an OPL,
	$\mathcal{B}$ a one-state rwOPA,	
	and $S=h(\llbracket\mathcal{B}\rrbracket \cap L)$.
	Then Proposition \ref{prop:restrictedprod} shows that
		$\llbracket\mathcal{B}\rrbracket \cap L$
	is strictly regular. Now, Proposition \ref{prop:closurehom} yields the result.
	\qed 
\end{proof}

	\section{Weighted MSO-Logic for OPL} \label{section:wMSO}

We use modified ideas from Droste and Gastin \cite{DG07}, also incorporating the distinction into an unweighted (boolean) and a weighted part by Bollig and Gastin~\cite{BG09}.

\begin{definition}
	We define the weighted logic $\MSO(\mathbb{K},(\Sigma,M))$, short  $\MSO(\mathbb{K})$, as
	\begin{align*}
	\beta&::=\Lab_a(x)~|~x\leq y~|~x\curvearrowright y~|~x \in X~|~\neg \beta~|~\beta \vee \beta~|	~\exists x.\beta~|~\exists X.\beta \\
	\varphi&::=\beta~|~
	k~|~\varphi \oplus \varphi~|~\varphi \otimes \varphi~|~%
	\textstyle\bigoplus_x\varphi~|~\bigoplus_X \varphi~|~\prod_x \varphi
	\end{align*}
	where $k \in \mathbb{K}$; %
	$x,y$ are first-order variables; and $X$ is a second order variable.
\end{definition}
We call $\beta$ boolean and $\varphi$ weighted formulas.
Let $w \in (\Sigma,M)^+$ and $\varphi \in \MSO(\mathbb{K})$. Following classical approaches for logics , we denote with $[w] = \{1,...,|w|\}$ the set of all positions of $w$. Let $\free(\varphi)$ be the set of all free variables in $\varphi$, and let $\mathcal{V}$ be a finite set of variables containing $\free(\varphi)$. A $(\mathcal{V}, w)$-\emph{assignment} $\sigma$ is a function assigning to every first-order variable of $\mathcal{V}$ an element of $[w]$ and to every second order variable a subset of $[w]$. We define %
$\sigma[x \rightarrow i]$ as the $(\mathcal{V}\cup \{ x\}, w)$-assignment mapping $x$ to $i$ and equaling $\sigma$ everywhere else.
The assignment $\sigma[X \rightarrow I]$ is defined analogously.
\par
Consider the extended alphabet $\Sigma_\mathcal{V}=\Sigma\times\{0,1\}^\mathcal{V}$ together with its natural OPM $M_\mathcal{V}$ defined such that for all $(a,s), (b,t) \in \Sigma_\mathcal{V}$ and all $\odot \in \{\lessdot, \doteq, \gtrdot\}$, we have $(a,s) \odot (b,t)$ if and only if $a\odot b$.
We represent the word $w$ together with the assignment $\sigma$ as a word $(w,\sigma)$ over $(\Sigma_\mathcal{V},M_\mathcal{V})$ such that $1$ denotes every position where $x$ resp. $X$ holds.
A word over $\Sigma_\mathcal{V}$ is called \emph{valid}, if every first-order variable is assigned to exactly one position. Being valid is a regular property which can be checked by an OPA.
\par
We define the \emph{semantics} of $\varphi \in \MSO(\mathbb{K})$
as a function $\llbracket \varphi \rrbracket_\mathcal{V}: (\Sigma_\mathcal{V},M)^+ \rightarrow K$
inductively for all valid $(w,\sigma) \in (\Sigma_\mathcal{V},M)^+$, 
as seen in Fig. \ref{figure:semantics}.
For not valid $(w,\sigma)$, we set $\llbracket \varphi \rrbracket_\mathcal{V}(w,\sigma)=0$.
We write $\llbracket \varphi \rrbracket$ for $\llbracket \varphi \rrbracket_{\free(\varphi)}$.
\par
\setlength{\intextsep}{11pt}
\begin{figure}[ht]
	\begin{tabular}{l l r l }
		$\llbracket \beta \rrbracket_\mathcal{V}(w,\sigma)$ &
		\multicolumn{3}{l}{
			$=\left \{ \begin{array}{ll}
				1 &,\text{ if }(w,\sigma) \models \beta \\ %
				0 &, \text{ otherwise}
					\end{array} \right .$
		}\\
		$\llbracket k \rrbracket_\mathcal{V}(w,\sigma)$ &
		\multicolumn{3}{l}{
			$=~k \quad \text{ for all } k \in \mathbb{K}$
		}\\
		$\llbracket \varphi \oplus \psi \rrbracket_\mathcal{V}(w,\sigma)$ &
		\multicolumn{3}{l}{
			$=\llbracket \varphi \rrbracket_\mathcal{V}(w,\sigma) + \llbracket \psi \rrbracket_\mathcal{V}(w,\sigma)$
		}
		\\
		$\llbracket \varphi \otimes \psi \rrbracket_\mathcal{V}(w,\sigma)$ &
		\multicolumn{3}{l}{
		$=\llbracket \varphi \rrbracket_\mathcal{V}(w,\sigma) \odot \llbracket \psi \rrbracket_\mathcal{V}(w,\sigma)$
		}
		\\
		$\llbracket \bigoplus_x\varphi \rrbracket_\mathcal{V}(w,\sigma)$ &
		\multicolumn{3}{l}{
			$=\sum\limits_{i \in |w|} \llbracket \varphi \rrbracket_{ \mathcal{V} \cup \{x\}}(w, \sigma[x \rightarrow i])$
		}\\
		$\llbracket \textstyle\bigoplus_X\varphi \rrbracket_\mathcal{V}(w,\sigma)$ &
		\multicolumn{3}{l}{
			$=\sum\limits_{I \subseteq |w|} \llbracket \varphi \rrbracket_{ \mathcal{V} \cup \{X\}}(w, \sigma[X \rightarrow I])$
		}\\
		$\llbracket \textstyle\prod_x \varphi \rrbracket_\mathcal{V}(w,\sigma) $&
		\multicolumn{3}{l}{	
			$=\prod\limits_{i \in |w|} \llbracket \varphi \rrbracket_{ \mathcal{V} \cup \{x\}}(w, \sigma[x \rightarrow i])$
		}
	\end{tabular}
	\caption{Semantics} \label{figure:semantics}
\end{figure}
We write $\llbracket \varphi \rrbracket$ for $\llbracket \varphi \rrbracket_{\free(\varphi)}$, so
$\llbracket \varphi \rrbracket : (\Sigma_{\free(\varphi)},M)^+ \rightarrow K$.
If $\varphi$ contains no free variables, $\varphi$ is a \emph{sentence} and
$\llbracket \varphi \rrbracket : (\Sigma,M)^+\rightarrow K$.
\begin{example} \label{Example:exWformula}
Let us go back to the automaton $\mathcal{A}_{\text{policy}}$ depicted in Figure \ref{figure:wOPA2}. The following boolean formula $\beta$ defines three subsets of string positions, $X_0, X_1, X_2$, representing, respectively, the string portions where unmatched calls are not penalized, namely $X_0, X_2$, and the portion where they are, namely $X_1$.
\begin{align*}
	\beta = %
	&\hphantom{{}\land{}} x \in X_0 \leftrightarrow \exists y \exists z ( y>x \land z>x \land \Lab_\$(y) \land \Lab_\$(z)) \\  
	&\land x \in X_1 \leftrightarrow \exists y \exists z
    	\left( 
    	\begin{array}{c}
    		y \le x \le z \land \Lab_\$(y) \land \Lab_\$(z) \\ 
    		\land (x \ne y \land x \ne z \to \neg \Lab_\$(x))
    	\end{array}
    	\right) \\  
	&\land x \in X_2 \leftrightarrow \exists y \exists z ( y<x \land z<x \land \Lab_\$(y) \land \Lab_\$(z)) \enspace.  
\end{align*}
Weight assignment is formalized by
\[\varphi_{0,2} = \neg ((x \in X_0 \lor x \in X_2) \land (\Lab_{\call}(x) \lor \Lab_{\ret}(x) \lor \Lab_{\intt}(x))) \oplus 0 \enspace,
\]  
which assigns weight $0$ to calls, returns, and ints outside portion $X_1$; and 
\begin{align*}
	\varphi_1 =
	&\hphantom{{}\otimes{}} (\neg (x \in X_1 \land \Lab_{\call}(x)) \oplus 1) \\
	&\otimes (\neg (x \in X_1 \land \Lab_{\ret}(x)) \oplus -1) \\
	&\otimes (\neg (x \in X_1 \land \Lab_{\intt}(x)) \oplus 0) \\
	&\otimes (\neg \Lab_\$(x) \oplus 0) \enspace,
\end{align*}
which assigns weights $1, -1, 0$ to calls, returns, and ints, respectively, within portion $X_1$.
\par
Then, the formula $\psi = \prod_x (\beta \otimes \varphi_{0,2} \otimes \varphi_1)$ defines the weight assigned by $\mathcal{A}_{\text{policy}}$ to an input string through a single nondeterministic run and finally $\chi = \bigoplus_{X_0}\bigoplus_{X_1}\bigoplus_{X_2} \psi$ defines the global weight of every string in an equivalent way as the one defined by $\mathcal{A}_{\text{policy}}$. 
\end{example}
\begin{lemma}
	\label{lemma:consistency}
	Let $\varphi \in \MSO(\mathbb{K})$ and let $\mathcal{V}$ be a finite set of variables with $\free(\varphi) \subseteq \mathcal{V}$. Then, $\llbracket \varphi \rrbracket_{\mathcal{V}}(w,\sigma)= \llbracket \varphi \rrbracket(w,\sigma{\restriction_{\free(\varphi)}})$ for each valid $(w,\sigma)\in (\Sigma_\mathcal{V},M)^+$. Furthermore, $\llbracket \varphi \rrbracket$ is regular (resp. strictly regular) iff $\llbracket \varphi \rrbracket_{\mathcal{V}}$ is regular (resp. strictly regular).
\end{lemma}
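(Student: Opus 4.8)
The plan is to prove the consistency lemma in two parts, exactly as stated. The first claim is a semantic consistency statement: enlarging the variable context $\mathcal{V}$ beyond $\free(\varphi)$ does not change the value of $\llbracket\varphi\rrbracket$ on any valid $(w,\sigma)$. I would prove this by structural induction on $\varphi$, simultaneously over all finite $\mathcal{V}\supseteq\free(\varphi)$. The base cases are $\beta$ (boolean formulas) and the constants $k$: for a boolean $\beta$, satisfaction $(w,\sigma)\models\beta$ depends only on the assignment of variables occurring free in $\beta$, so restricting $\sigma$ to $\free(\varphi)$ leaves the truth value, hence the $0/1$ weight, unchanged; for the constant $k$ the value is $k$ regardless of $\sigma$. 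The inductive steps for $\oplus$ and $\otimes$ are immediate, since both sides are defined pointwise and the induction hypothesis applies to each subformula on the same $(w,\sigma)$.

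The quantifier cases are where one must be slightly careful, and I expect the first-order and second-order sum quantifiers, together with the product quantifier $\prod_x$, to be the only genuinely non-routine parts of this first half. For $\bigoplus_x\varphi$ with bound variable $x$, I would note that $\free(\bigoplus_x\varphi)=\free(\varphi)\setminus\{x\}$, so after adding $x$ to the context the relevant free set for the subformula is handled correctly by the induction hypothesis applied with context $\mathcal{V}\cup\{x\}$; the summation ranges over all positions $i\in[w]$ independently of $\mathcal{V}$, so the two sums agree term by term. The subtlety is the standard one about variable capture and the validity convention: one must ensure that $\sigma[x\to i]$ is again valid and that the convention ``$\llbracket\cdot\rrbracket=0$ on non-valid words'' is compatible across contexts. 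The cases $\bigoplus_X$ and $\prod_x$ are analogous, the product case using only that the factor indexed by each position is unchanged by the context enlargement.

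For the second claim — that $\llbracket\varphi\rrbracket$ is regular (resp. strictly regular) iff $\llbracket\varphi\rrbracket_{\mathcal{V}}$ is — I would reduce it to the first claim combined with closure under OPM-preserving projection (Proposition \ref{prop:closurehom}) and intersection with an OPL (Proposition \ref{prop:restrictedprod}). The natural projection is $\pi:\Sigma_{\mathcal{V}}\to\Sigma_{\free(\varphi)}$ that forgets the components of the extra variables in $\mathcal{V}\setminus\free(\varphi)$; this is OPM-preserving because, by construction of $M_{\mathcal{V}}$, the precedence relation on $\Sigma_{\mathcal{V}}$ depends only on the $\Sigma$-component. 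The key point is that, on the \emph{valid} words, each $(w,\sigma')$ over $\Sigma_{\free(\varphi)}$ has, among its $\pi$-preimages, exactly one valid extension for each choice of assignment to the extra first-order variables of $\mathcal{V}$ and each subset for the extra second-order variables, and by the first part of the lemma all these preimages carry the same weight as $(w,\sigma')$; the non-valid preimages contribute $0$. I would therefore express $\llbracket\varphi\rrbracket$ as a projection of $\llbracket\varphi\rrbracket_{\mathcal{V}}$ intersected with the OPL of valid words, and conversely obtain $\llbracket\varphi\rrbracket_{\mathcal{V}}$ from $\llbracket\varphi\rrbracket$ by a projection-free relabelling combined with intersection with validity, so that the cited closure propositions transfer (strict) regularity in both directions. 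The main obstacle is bookkeeping the preimage count so that the projection sum in formula (\ref{formula:hom}) does not multiply the weight — this is exactly why the validity constraint, which pins down each extra first-order variable to a unique position, must be folded in as an intersection with the OPL of valid words before projecting.
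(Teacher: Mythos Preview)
Your overall plan matches the paper's, which simply defers to Proposition~3.3 of \cite{DG07} together with Proposition~\ref{prop:closurehom}; the structural induction for the first claim is the standard argument and is fine. The second claim, however, has a genuine gap in one direction.

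For the implication ``$\llbracket\varphi\rrbracket_{\mathcal{V}}$ (strictly) regular $\Rightarrow$ $\llbracket\varphi\rrbracket$ (strictly) regular'', intersecting with the OPL of \emph{valid} words before projecting does \emph{not} resolve the multiplicity problem you yourself identify. Validity forces each extra first-order variable to sit at \emph{some} single position, but not at a \emph{specific} one: there remain $|w|$ choices per extra first-order variable, hence $|w|^{k}$ valid $\pi$-preimages if $\mathcal{V}\setminus\free(\varphi)$ contains $k$ first-order variables. Worse, validity places no constraint whatsoever on second-order variables, so each extra second-order variable contributes a further factor $2^{|w|}$. Since by the first part all these preimages carry the same value $\llbracket\varphi\rrbracket(w,\sigma')$, the sum in formula~(\ref{formula:hom}) produces a (word-length-dependent) multiple of the desired value, not the value itself.

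The fix, which is what \cite{DG07} actually does, is to intersect $\llbracket\varphi\rrbracket_{\mathcal{V}}$ not with the full validity language but with the OPL $L_{0}\subseteq(\Sigma_{\mathcal{V}},M_{\mathcal{V}})^{+}$ in which every extra first-order variable is pinned to a \emph{fixed} position (say position~$1$) and every extra second-order variable is assigned $\emptyset$. This $L_{0}$ is recognised by an obvious deterministic OPA, so Proposition~\ref{prop:restrictedprod} applies, and now each valid $(w,\sigma')$ over $\Sigma_{\free(\varphi)}$ has \emph{exactly one} $\pi$-preimage in $L_{0}$. Then $\pi\bigl(\llbracket\varphi\rrbracket_{\mathcal{V}}\cap L_{0}\bigr)=\llbracket\varphi\rrbracket$ follows from the first part of the lemma, and Proposition~\ref{prop:closurehom} transfers (strict) regularity. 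Your converse direction via inverse relabelling plus intersection with validity is correct, though note that this pullback construction is not literally Proposition~\ref{prop:closurehom}; it is the easy companion closure (replace each transition on $(a,s)$ by transitions on all $(a,s')$ with $s'{\restriction_{\free(\varphi)}}=s$, same weight), which preserves restrictedness trivially.
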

\begin{proof}
	This is shown by means of %
	Proposition \ref{prop:closurehom} analogously to Proposition 3.3 of \cite{DG07}.
	\qed
\end{proof}

As shown by \cite{DG07} in the case of words, the full weighted logic is strictly more powerful than weighted automata. A similar example also applies here. Therefore, in the following, we restrict our logic in an appropriate way. The main idea for this is to allow only functions with finitely many different values (step functions) after a product quantification. Furthermore, in the non-commutative case, we either also restrict the application of $\otimes$ to step functions or we enforce all occurring weights (constants) of $\varphi \otimes \theta$ to commute.

\begin{definition}
	The \emph{set of almost boolean formulas} is the smallest set of all formulas of $\MSO(\mathbb{K})$ containing all constants $k \in \mathbb{K}$ and all boolean formulas which is closed under $\oplus$ and $\otimes$.
\end{definition}
The following propositions show that almost boolean formulas are describing precisely a certain form of rwOPA's behaviors, which we call \emph{OPL step functions}. %
We adapt ideas from \cite{DM12}.
\begin{definition}
	For $k \in \mathbb{K}$ %
    and a language $L\subseteq (\Sigma,M)^+$, we define $\mathbbm{1}_{L}:(\Sigma,M)^+ \rightarrow \mathbb{K}$, the \emph{characteristic series} of $L$, i.e. $\mathbbm{1}_{L}(w)=1$ if $w \in L$, and $k\mathbbm{1}_{L}(w)=0$ otherwise.
    We denote by $k\mathbbm{1}_{L}:(\Sigma,M)^+ \rightarrow \mathbb{K}$ the characteristic series of $L$ multiplied by $k$, i.e. $k\mathbbm{1}_{L}(w)=k$ if $w \in L$, and $k\mathbbm{1}_{L}(w)=0$ otherwise. 
	\par
	A series $S$ is called an \emph{OPL step function}, if it has a representation %
	\begin{align*}
      S=\sum_{i=1}^n k_i \mathbbm{1}_{L_i}\enspace ,
	\end{align*}
where $L_i$ are OPL
forming a partition of $(\Sigma,M)^+$ and $k_i \in \mathbb{K}$ for each $i \in \{1,...,n\}$; so $\llbracket\varphi\rrbracket(w)=k_i$ iff $w \in L_i$, for each $i \in \{1,...,n\}$.
\end{definition}
\begin{lemma}\label{lemma:stepClosure}
	The set of all OPL step functions is closed under $+$ and $\odot$.
\end{lemma}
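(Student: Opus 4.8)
The plan is to show closure under both operations by exhibiting, for each, an explicit common refinement of the underlying OPL partitions. Recall that an OPL step function is a sum $S=\sum_{i=1}^n k_i\mathbbm{1}_{L_i}$ where the $L_i$ form a partition of $(\Sigma,M)^+$ into OPL. The central fact I will lean on is that OPL are closed under Boolean operations (intersection, union, complement), which was established in \cite{Crespi-ReghizziM12} and underlies Theorem~\ref{thm:OPLMSO}; in particular, if $\{L_i\}$ and $\{L'_j\}$ are two partitions of $(\Sigma,M)^+$ into OPL, then the collection $\{L_i \cap L'_j\}_{i,j}$ is again a partition of $(\Sigma,M)^+$ into OPL (discarding any empty intersections).

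First I would take two OPL step functions $S=\sum_{i=1}^n k_i\mathbbm{1}_{L_i}$ and $T=\sum_{j=1}^m k'_j\mathbbm{1}_{L'_j}$. For the $+$ case, I observe that on the common refinement $L_i\cap L'_j$ the series $S+T$ takes the constant value $k_i+k'_j$, since any word $w$ lies in exactly one $L_i$ and exactly one $L'_j$. Hence
\begin{align*}
  S+T=\sum_{i=1}^n\sum_{j=1}^m (k_i+k'_j)\,\mathbbm{1}_{L_i\cap L'_j}\enspace,
\end{align*}
and since the $L_i\cap L'_j$ are OPL forming a partition of $(\Sigma,M)^+$, the right-hand side is exactly of the required form, so $S+T$ is an OPL step function.

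For the $\odot$ case the argument is structurally identical: on the block $L_i\cap L'_j$ the pointwise product $(S\odot T)(w)=S(w)\cdot T(w)$ equals the constant $k_i\cdot k'_j$, giving
\begin{align*}
  S\odot T=\sum_{i=1}^n\sum_{j=1}^m (k_i\cdot k'_j)\,\mathbbm{1}_{L_i\cap L'_j}\enspace,
\end{align*}
again a sum of constants times characteristic series of the OPL partition $\{L_i\cap L'_j\}$. I should note explicitly that this product closure does not require commutativity of $\mathbb{K}$, only the distributive and product-of-constants structure, and that the semiring axiom $0\cdot x=x\cdot 0=0$ handles the blocks where the characteristic series vanish.

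I do not expect any genuine obstacle here — the only thing that needs care is verifying that the refined family $\{L_i\cap L'_j\}$ is itself a legitimate OPL partition. This rests entirely on OPL being closed under intersection (from which one gets both that each block is an OPL and, together with the partition property of the originals, that the blocks are pairwise disjoint and cover $(\Sigma,M)^+$). Since that closure is an established property of OPL, the whole proof reduces to writing down the two common-refinement identities above. The mild subtlety worth a sentence in the writeup is that empty blocks $L_i\cap L'_j=\emptyset$ may be dropped without affecting the series, keeping the representation well-formed.
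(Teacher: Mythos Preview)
Your proposal is correct and follows essentially the same approach as the paper: the paper also passes to the common refinement $\{L_i\cap L'_j\}$, writes $S+S'=\sum_{i,j}(k_i+k'_j)\mathbbm{1}_{L_i\cap L'_j}$ and $S\odot S'=\sum_{i,j}(k_i\cdot k'_j)\mathbbm{1}_{L_i\cap L'_j}$, and concludes by noting that the intersections are OPL forming a partition of $(\Sigma,M)^+$. Your additional remarks on commutativity and empty blocks are fine but not needed for the argument.
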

\begin{proof}
	Let $S=\sum_{i=1}^k k_i \mathbbm{1}_{L_i}$ and $S'=\sum_{j=1}^\ell k'_j \mathbbm{1}_{L'_j}$ be OPL step functions. 
	Then the following holds
	\begin{align*}
		S+S'
		&=\sum_{i=1}^k \sum_{j=1}^\ell
			(d_i+d'_j)
			\mathbbm{1}_{L_i \cap L'_j}
		\enspace, \\
		S \odot S'
		&=\sum_{i=1}^k \sum_{j=1}^\ell
			(d_i \cdot d'_j)
			\mathbbm{1}_{L_i \cap L'_j} \enspace .
	\end{align*}
	Since $(L_i \cap L'_j)$ are also OPL and form a partition of $(\Sigma,M)^+$, it follows that $S + S'$ and $S  \odot S'$ are also OPL step functions.
\qed
\end{proof}
\begin{proposition} \label{prop:almboolrsf}
\begin{enumerate}[(a)]
	\item For every almost boolean formula $\varphi$, $\llbracket\varphi\rrbracket$ is an OPL step function.
    \item If $S$ is an OPL step function, then there exists an almost boolean formula $\varphi$ such that $ S =\llbracket \varphi \rrbracket$.
\end{enumerate}
\end{proposition}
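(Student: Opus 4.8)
The plan is to prove both directions of Proposition~\ref{prop:almboolrsf} by induction, using Lemma~\ref{lemma:stepClosure} as the engine for part~(a) and a direct formula translation for part~(b).

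\textbf{Part (a): almost boolean formulas yield OPL step functions.} I would proceed by structural induction on the almost boolean formula $\varphi$. For the base cases there are two possibilities. If $\varphi = \beta$ is a boolean formula, then by the semantics $\llbracket\beta\rrbracket = 1\cdot\mathbbm{1}_{L(\beta)} + 0\cdot\mathbbm{1}_{\overline{L(\beta)}}$, where $L(\beta)$ is the language defined by $\beta$; by Theorem~\ref{thm:OPLMSO} this $L(\beta)$ is an OPL, its complement is an OPL (OPL are closed under complement), and the two form a partition of $(\Sigma,M)^+$, so $\llbracket\beta\rrbracket$ is an OPL step function. If $\varphi = k$ is a constant, then $\llbracket k\rrbracket = k\cdot\mathbbm{1}_{(\Sigma,M)^+}$ is trivially an OPL step function with the one-block partition. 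For the inductive step, an almost boolean formula is built from these using $\oplus$ and $\otimes$; by the semantics in Fig.~\ref{figure:semantics}, $\llbracket\varphi\oplus\psi\rrbracket = \llbracket\varphi\rrbracket + \llbracket\psi\rrbracket$ and $\llbracket\varphi\otimes\psi\rrbracket = \llbracket\varphi\rrbracket\odot\llbracket\psi\rrbracket$. Since both operands are OPL step functions by the induction hypothesis, Lemma~\ref{lemma:stepClosure} immediately gives that the results are again OPL step functions. One subtlety to handle carefully is the free-variable set: the semantics are indexed by $\mathcal{V}$, so I would invoke Lemma~\ref{lemma:consistency} to align the variable sets of the two subformulas before adding or multiplying, ensuring the step-function representations live over a common extended alphabet $\Sigma_\mathcal{V}$.

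\textbf{Part (b): OPL step functions are definable by almost boolean formulas.} Here I would argue constructively. Suppose $S = \sum_{i=1}^n k_i\mathbbm{1}_{L_i}$ with the $L_i$ forming an OPL partition of $(\Sigma,M)^+$ and $k_i\in\mathbb{K}$. By Theorem~\ref{thm:OPLMSO}, each OPL $L_i$ is MSO-definable, so there is a boolean sentence $\beta_i$ with $L(\beta_i) = L_i$. I then set $\varphi = \bigoplus_{i=1}^n (k_i \otimes \beta_i)$, written out using the defined connectives as $(k_1\otimes\beta_1)\oplus\cdots\oplus(k_n\otimes\beta_n)$. Evaluating the semantics on any word $w$: since the $L_i$ partition $(\Sigma,M)^+$, exactly one $\beta_i$ is satisfied, so $\llbracket\beta_i\rrbracket(w)=1$ for that single index and $0$ for the others; hence $\llbracket k_i\otimes\beta_i\rrbracket(w) = k_i\cdot 1 = k_i$ for the matching block and $k_j\cdot 0 = 0$ elsewhere, and summing gives exactly $S(w)=k_i$ when $w\in L_i$. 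This $\varphi$ is almost boolean because it is built from constants and boolean formulas using only $\otimes$ and $\oplus$, which is precisely the closure defining almost boolean formulas.

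\textbf{Main obstacle.} The bookkeeping in part~(a) around the free-variable sets and the associated extended alphabets is the most delicate point: Lemma~\ref{lemma:stepClosure} is stated for series over a fixed $(\Sigma,M)^+$, whereas subformulas $\varphi$ and $\psi$ may have different free variables, so their semantics a priori live over different $\Sigma_\mathcal{V}$. I would resolve this by first passing to the common set $\mathcal{V}=\free(\varphi)\cup\free(\psi)$ via Lemma~\ref{lemma:consistency}, which guarantees the step-function property is preserved under this extension, and only then applying Lemma~\ref{lemma:stepClosure} over $(\Sigma_\mathcal{V},M_\mathcal{V})^+$. A secondary point worth verifying, used implicitly in part~(a), is that the class of OPL is closed under complement and intersection so that the partition blocks $L_i\cap L'_j$ and the complements $\overline{L(\beta)}$ remain OPL; this is exactly the closure property recalled in the introduction and underlying Lemma~\ref{lemma:stepClosure}.
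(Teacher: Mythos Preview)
Your proposal is correct and follows essentially the same approach as the paper: structural induction for part~(a) using Theorem~\ref{thm:OPLMSO} for the boolean base case, the trivial step function for constants, and Lemma~\ref{lemma:stepClosure} together with Lemma~\ref{lemma:consistency} for the inductive closure under $\oplus$ and $\otimes$; and for part~(b) the same direct construction $\bigoplus_i (k_i \otimes \beta_i)$ from MSO sentences $\beta_i$ defining the blocks $L_i$. Your explicit discussion of the free-variable alignment is exactly the ``lifting Lemma~\ref{lemma:consistency} to OPL step functions'' step the paper invokes.
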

\begin{proof}
(a) We show the first statement by structural induction on $\varphi$. If $\varphi$ is boolean, then $\llbracket\varphi\rrbracket = \mathbbm{1}_{L(\varphi)}$, were $L(\varphi)$ and $L(\neg\varphi)$ are OPL due to Theorem \ref{thm:OPLMSO}. Therefore,
$\llbracket\varphi\rrbracket = 1_K\mathbbm{1}_{L(\varphi)} + 0_K \mathbbm{1}_{L(\neg\varphi)}$ is an OPL step function. If $\varphi = k$, $k\in \mathbb{K}$, then $\llbracket k \rrbracket = k \mathbbm{1}_{(\Sigma,M)^+}$ is an OPL step function.
Let $\mathcal{V}=\free(\varphi_1) \cup \free(\varphi_2)$. By lifting Lemma \ref{lemma:consistency} to OPL step functions as in \cite{DP14} and by Lemma \ref{lemma:stepClosure}, we see that $\llbracket \varphi_1 \oplus \varphi_2 \rrbracket=\llbracket \varphi_1 \rrbracket_\mathcal{V} + \llbracket \varphi_2 \rrbracket_\mathcal{V}$ and $\llbracket \varphi_1 \otimes \varphi_2 \rrbracket=\llbracket \varphi_1 \rrbracket_\mathcal{V} \odot \llbracket \varphi_2 \rrbracket_\mathcal{V}$ are also OPL step functions.
\par
(b) Given an OPL step function $\llbracket\varphi\rrbracket=\sum_{i=1}^n k_i \mathbbm{1}_{L_i}$, we use Theorem \ref{thm:OPLMSO} to get $\varphi_i$ with $\llbracket\varphi_i \rrbracket= \mathbbm{1}_{L_i}$. Then, the second statement follows from setting $\varphi = \bigvee_i^n(k_i \wedge \varphi_i)$ and the fact that the OPL $(L_i)_{1\leq i \leq n}$ form a partition of $(\Sigma,M)^+$.
\qed
\end{proof}
\begin{proposition}\label{prop:stepTorwOPA}
	Let $S$ be an OPL step function. Then $S$ is strictly regular. 
\end{proposition}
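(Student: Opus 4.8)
The plan is to show directly that an OPL step function $S = \sum_{i=1}^n k_i \mathbbm{1}_{L_i}$ is strictly regular by building an rwOPA for it, combining the closure properties already established in the excerpt. The key observation is that $S$ is a finite sum of the simpler series $k_i \mathbbm{1}_{L_i}$, so by Proposition~\ref{prop:disjsums} (closure of strictly regular series under sums) it suffices to prove that each summand $k\mathbbm{1}_L$ is strictly regular, where $L$ is an OPL and $k \in \mathbb{K}$.

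First I would handle a single summand $k\mathbbm{1}_L$. The natural decomposition is $k\mathbbm{1}_L = (k\mathbbm{1}_{(\Sigma,M)^+}) \cap L$, where I intersect the constant series $k\mathbbm{1}_{(\Sigma,M)^+}$ with the OPL $L$. For this I invoke Proposition~\ref{prop:restrictedprod}, which states that intersecting a strictly regular series with an OPL yields a strictly regular series. So the remaining obligation is to show that the constant series $k\mathbbm{1}_{(\Sigma,M)^+}$, which assigns the weight $k$ to every compatible word, is itself strictly regular. This is witnessed by a one-state rwOPA: take $Q = \{q\} = I = F$, equip every push and shift transition with weight $k'$ and every pop transition with weight $1$, choosing the push/shift weights so that the accumulated product along any accepting run equals $k$. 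The cleanest choice is to put weight $k$ on the unique first push (or on a designated transition) and weight $1$ elsewhere; alternatively, since every word's run multiplies exactly the push and shift weights, one can set all of them to $k$ and all pops to $1$ only if the lengths are controlled — so the safe route is to use the partition structure and assign $k$ to a single guaranteed transition.

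The main subtlety, and the step I expect to require the most care, is engineering the one-state automaton so that the total weight is exactly $k$ rather than a power of $k$, because in a one-state rwOPA the multiplicities of pushes and shifts vary with the word. The robust fix is to route through Proposition~\ref{prop:RECtoN}-style reasoning or, more simply, to note that $k\mathbbm{1}_{(\Sigma,M)^+}$ is the behavior of the rwOPA over a slightly larger state set that performs one distinguished weighted move of weight $k$ (e.g.\ on entering from the initial state) and weight $1$ on all subsequent moves. Since pops already carry weight $1$ by the rwOPA restriction, only the first push or shift need carry weight $k$, and all others weight $1$; as $L((\Sigma,M)^+)$ forces exactly one ``first'' reading move, the product telescopes to $k$. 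Finally, applying Proposition~\ref{prop:disjsums} to sum the $n$ strictly regular series $k_i\mathbbm{1}_{L_i}$ yields that $S$ is strictly regular, completing the argument.
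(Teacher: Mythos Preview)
Your proposal is correct and follows essentially the same route as the paper: reduce to the summands $k_i\mathbbm{1}_{L_i}$ via Proposition~\ref{prop:disjsums}, write each as $\llbracket k_i\rrbracket \cap L_i$ and apply Proposition~\ref{prop:restrictedprod}, and recognize the constant series $\llbracket k_i\rrbracket$ by an rwOPA that puts weight $k_i$ on the initial move and $1$ elsewhere. The paper simply states this last step as ``a 2 state rwOPA recognizing the constant series $\llbracket k_i\rrbracket$'' without your detour through the (unworkable) one-state attempt; note also that the first move of any run is always a push (since $\#\lessdot a$ for all $a$), so your hedge ``first push or shift'' can be sharpened to ``first push''.
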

\begin{proof}
	Let $n \in \mathbb{N}$, $(L_i)_{1\leq i \leq n}$ be OPL forming a partition of $(\Sigma,M)^+$ and $k_i \in \mathbb{K}$ for each $i \in \{1,...,n\}$ such that
	\[
	 	S =\sum_{i=1}^n k_i \mathbbm{1}_{L_i} \enspace .
	\]
    Its easy to construct a 2 state rwOPA recognizing the constant series $\llbracket k_i \rrbracket$ which assigns the weight $k_i$ to every word. %
	Hence, $k_i\mathbbm{1}_{L_i} = \llbracket k_i \rrbracket \cap L_i$ is strictly regular by Proposition \ref{prop:restrictedprod}.
	Therefore, %
    by Proposition \ref{prop:disjsums}, $S$ is strictly regular.
	\qed
\end{proof}
\begin{definition}%
	Let $\varphi \in \MSO(\mathbb{K})$.
	We denote by $\const(\varphi)$ all weights of $\mathbb{K}$ occurring in $\varphi$ and we call $\varphi$
		\emph{$\otimes$-restricted} if for all subformulas $\psi \otimes \theta$ of $\varphi$ either
		$\psi$ is almost boolean or $\const(\psi)$ and $\const(\theta)$ commute elementwise. 
	We call $\varphi$
		\emph{$\prod$-restricted} if for all subformulas $\prod_x \psi$ of $\varphi$, %
		$\psi$ is almost boolean.
	We call $\varphi$ \emph{restricted} if it is both $\otimes$- and $\prod$-restricted. %
\end{definition}
In Example \ref{Example:exWformula}, the formula $\beta$ is boolean, the formulas $\phi$ are almost boolean, and $\psi$ and $\chi$ are restricted. Notice that $\psi$ and $\chi$ would be restricted even if $\mathbb{K}$ were not commutative.
\par
For use in Section \ref{section:char}, we note:
\begin{proposition} \label{prop:multiplyk}
Let $S:(\Sigma,M)^+\rightarrow K$ be a regular (resp. strictly regular) series and $k\in\mathbb{K}$. Then $\llbracket k \rrbracket \odot S$ is regular (resp. strictly regular).
\end{proposition}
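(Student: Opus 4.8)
The plan is to prove that multiplying a regular series by a constant $k \in \mathbb{K}$ preserves regularity (and strict regularity) by a direct automaton construction. Let $\mathcal{A}=(Q,I,F,\delta,\wt)$ be a (r)wOPA with $\llbracket\mathcal{A}\rrbracket = S$. The idea is to inject the scalar $k$ as a left multiplier onto exactly one transition of every accepting run, so that each run's weight becomes $k \cdot \wt(\rho)$, and then the behavior is $\sum_\rho k\cdot\wt(\rho) = k\cdot\sum_\rho\wt(\rho) = k\cdot S(w)$, which is $(\llbracket k\rrbracket \odot S)(w)$ since $\llbracket k\rrbracket(w)=k$ for all $w$.

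First I would handle where to place the factor $k$. The cleanest choice is the \emph{first} transition of every run. Since every nonempty word begins by reading its first symbol, every accepting run of an OPA starts with a push move (the initial stack is $\bot$, and against $\#$ on the stack we have $\# \lessdot a$ for every $a$, forcing a push). So I would duplicate the initial states: introduce fresh copies of the states in $I$, declare these copies as the new initial states, and give them outgoing push transitions mirroring those of the original initial states but with weight multiplied by $k$ on the left. Concretely, set $Q'=Q\sqcup\{\hat{q}\mid q\in I\}$, $I'=\{\hat q\mid q\in I\}$, $F'=F$, keep all original transitions, and add for each push $(q,a,r)\in\delta_\push$ with $q\in I$ a push $(\hat q,a,r)$ with $\wt'_\push(\hat q,a,r)=k\cdot\wt_\push(q,a,r)$. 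The copied states have no incoming transitions, so they are used exactly once, at the very start of a run; every accepting run of $\mathcal{B}$ thus corresponds bijectively to an accepting run of $\mathcal{A}$ with its weight left-multiplied by $k$.

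Then I would verify the bijection and the weight bookkeeping: runs of $\mathcal{B}$ are in weight-preserving-up-to-$k$ correspondence with runs of $\mathcal{A}$, using that only the first transition differs and it differs by a left factor of $k$. Summing over accepting runs and using distributivity of $+$ over $\cdot$ in $\mathbb{K}$ gives $\llbracket\mathcal{B}\rrbracket(w)=k\cdot\llbracket\mathcal{A}\rrbracket(w)=(\llbracket k\rrbracket\odot S)(w)$. Finally, strict regularity is preserved automatically: the construction modifies only push weights, so if $\wt_\pop\equiv 1$ then $\wt'_\pop\equiv 1$ as well, and $\mathcal{B}$ is an rwOPA whenever $\mathcal{A}$ is.

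I expect the only subtle point to be justifying that the scalar is applied exactly once per run, with no run missed and none double-counted. This rests on the structural fact that every accepting run opens with a push against $\#$; I would state this observation explicitly (it follows from $\#\lessdot a$ and the definition of the push move on the bottom-of-stack symbol) so that the fresh initial copies are guaranteed to be traversed precisely once. Everything else is a routine run-correspondence argument of the same flavor as the product construction in Proposition~\ref{prop:restrictedprod}. One could alternatively obtain this result without a new construction by noting $\llbracket k\rrbracket$ is itself strictly regular (a one- or two-state rwOPA assigning $k$ to each word) and invoking the product closure of Proposition~\ref{prop:restrictedprod}; however, that route requires commutativity for the product of two general series, so the direct single-transition injection above is preferable precisely because it works over an arbitrary semiring.
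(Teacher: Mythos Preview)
Your proposal is correct and follows essentially the same approach as the paper: duplicate the initial states into fresh copies with no incoming transitions, make these the new initial states, and for each original push out of an initial state add a corresponding push from its copy with weight left-multiplied by $k$; since every run begins with a push (as $\#\lessdot a$), the factor $k$ is applied exactly once, and only push weights are touched so strict regularity is preserved. Your discussion of the subtle point and of why the product-closure route needs commutativity is also accurate.
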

\begin{proof}
Let $\mathcal{A}=(Q,I,F,\delta,\wt)$ be an (r)wOPA such that $\llbracket \mathcal{A} \rrbracket= S$. Then we construct an rwOPA $\mathcal{B}=(Q',I',F,\delta',\wt')$ as follows.
	\par
    We set $Q\cup I'$ and $I'=\{q'_I \mid q_I \in I\}$.
	The new transition relations $\delta'$ and weight functions $\wt'$ consists of all transitions of $\mathcal{A}$ with their respective weights and the following %
    additional transitions:
    For every push transition $(q_I,a,q)$ of $\delta_\push$, %
    we add %
    a push transition $(q'_I,a,q)$ to $\delta'_\push$ with %
    $\wt'_\push(q_I',a,q)=k\cdot\wt_\push(q_I,a,q)$.
    \par
    Note that every run of an (w)OPA has to start with a push transition.
	Therefore, the two automata have the same respective runs, but $\mathcal{B}$ is exactly once in a state $q'_I \in I$. %
	 This together with the weight assignment ensures that $\mathcal{B}$ uses the same weights as $\mathcal{A}$ except at the very first transition of every run which is multiplied by $k$ from the left. %
    In particular, we do not change the weight of any pop transition.
	It follows that $\llbracket\mathcal{B}\rrbracket=\llbracket k \rrbracket\odot S$. Also, if $\mathcal{A}$ is restricted, so is $\mathcal{B}$.
\qed
\end{proof}
	\section{Characterization of Regular Series} \label{section:char}
\begin{lemma}[Closure under weighted disjunction]
	\label{lemma:closureplus}
	Let $\varphi$ and $\psi$ be two formulas of $\MSO(\mathbb{K})$ such that $\llbracket \varphi \rrbracket$ and $\llbracket \psi \rrbracket$ are regular (resp. strictly regular). Then, $\llbracket \varphi \oplus \psi \rrbracket$ is regular (resp. strictly regular).
\end{lemma}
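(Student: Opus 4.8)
The plan is to mirror the semantics of $\oplus$ directly at the level of automata, exactly as the definition $\llbracket \varphi \oplus \psi \rrbracket_\mathcal{V}(w,\sigma) = \llbracket \varphi \rrbracket_\mathcal{V}(w,\sigma) + \llbracket \psi \rrbracket_\mathcal{V}(w,\sigma)$ suggests. First I would fix a finite set of variables $\mathcal{V} \supseteq \free(\varphi) \cup \free(\psi)$, so that both semantics functions live over the same extended alphabet $(\Sigma_\mathcal{V}, M_\mathcal{V})$. By the first part of Lemma~\ref{lemma:consistency}, the semantics of each formula relative to $\mathcal{V}$ agrees with its semantics relative to its own free variables on the relevant valid words, and by the second part, $\llbracket \varphi \rrbracket_\mathcal{V}$ and $\llbracket \psi \rrbracket_\mathcal{V}$ are regular (resp. strictly regular) precisely when $\llbracket \varphi \rrbracket$ and $\llbracket \psi \rrbracket$ are. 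Thus the hypothesis transfers to the common-alphabet versions.

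The core of the argument is then purely a closure-under-sum statement. Since $\llbracket \varphi \oplus \psi \rrbracket_\mathcal{V} = \llbracket \varphi \rrbracket_\mathcal{V} + \llbracket \psi \rrbracket_\mathcal{V}$ as series over $(\Sigma_\mathcal{V}, M_\mathcal{V})^+$, and both summands are regular (resp. strictly regular) by the previous paragraph, Proposition~\ref{prop:disjsums} immediately yields that the pointwise sum is regular (resp. strictly regular). The disjoint-union construction there preserves restrictedness, which is what keeps the ``strictly regular'' case intact. Finally, I would invoke Lemma~\ref{lemma:consistency} once more in the reverse direction to pass from $\llbracket \varphi \oplus \psi \rrbracket_\mathcal{V}$ back to $\llbracket \varphi \oplus \psi \rrbracket = \llbracket \varphi \oplus \psi \rrbracket_{\free(\varphi \oplus \psi)}$, concluding that the latter is regular (resp. strictly regular) as required.

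The only genuine subtlety, and the step I would be most careful about, is the alphabet bookkeeping: $\llbracket \varphi \rrbracket$ and $\llbracket \psi \rrbracket$ are a priori series over different extended alphabets $\Sigma_{\free(\varphi)}$ and $\Sigma_{\free(\psi)}$, so one cannot add them directly. Everything hinges on Lemma~\ref{lemma:consistency} to lift both to the shared alphabet $\Sigma_\mathcal{V}$ before applying the sum construction, and to descend afterwards. Since $M_\mathcal{V}$ is defined so that precedence relations ignore the variable-annotation component, the OP structure is unaffected by this enlargement, so Proposition~\ref{prop:disjsums} applies verbatim over $(\Sigma_\mathcal{V}, M_\mathcal{V})$. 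Beyond this, the proof is essentially a one-line combination of the consistency lemma and the closure-under-sum proposition, and I expect no combinatorial obstacle.
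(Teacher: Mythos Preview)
Your proposal is correct and follows exactly the paper's approach: set $\mathcal{V}=\free(\varphi)\cup\free(\psi)$, use Lemma~\ref{lemma:consistency} to work over the common alphabet, and then apply Proposition~\ref{prop:disjsums}. The paper's proof is a one-liner doing precisely this; your only addition is the (correct but unnecessary) final descent step, which disappears if you take $\mathcal{V}=\free(\varphi)\cup\free(\psi)=\free(\varphi\oplus\psi)$ exactly.
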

\begin{proof}
	We put $\mathcal{V}=\free(\varphi) \cup \free(\psi)$. Then, $\llbracket \varphi \oplus \psi \rrbracket= \llbracket \varphi \rrbracket_\mathcal{V}+ \llbracket \psi \rrbracket_\mathcal{V}$ is regular (resp. strictly regular) by Lemma \ref{lemma:consistency} and Proposition \ref{prop:disjsums}.
	\qed
\end{proof}

\begin{proposition}[Closure under restricted weighted conjunction]
	\label{prop:closuredot} %
	Let $\psi \otimes \theta$ be a subformula of a $\otimes$-restricted formula $\varphi$ of $\MSO(\mathbb{K})$ such that $\llbracket \psi \rrbracket$ and $\llbracket \theta \rrbracket$ are regular (resp. strictly regular). Then, $\llbracket \psi \otimes \theta \rrbracket$ is regular (resp. strictly regular).
\end{proposition}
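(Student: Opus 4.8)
The plan is to follow the definition of $\otimes$-restriction and split into its two cases. First, by Lemma~\ref{lemma:consistency} I may pass to the common variable set $\mathcal{V}=\free(\psi)\cup\free(\theta)$ without affecting (strict) regularity, so that $\llbracket\psi\otimes\theta\rrbracket=\llbracket\psi\rrbracket_\mathcal{V}\odot\llbracket\theta\rrbracket_\mathcal{V}$ is the pointwise (Hadamard) product of two (strictly) regular series. It then suffices to show that this product is again (strictly) regular under the hypothesis of $\otimes$-restriction, and the two cases of that hypothesis are handled quite differently.

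In the first case $\psi$ is almost boolean, so by Proposition~\ref{prop:almboolrsf}(a) its semantics is an OPL step function $\llbracket\psi\rrbracket=\sum_{i=1}^n k_i\mathbbm{1}_{L_i}$, where the $L_i$ are OPL forming a partition of $(\Sigma,M)^+$. Since each $k_i$ multiplies from the left and the $L_i$ partition the domain, a direct evaluation of the Hadamard product gives
\begin{align*}
\llbracket\psi\otimes\theta\rrbracket=\sum_{i=1}^n \llbracket k_i\rrbracket\odot(\llbracket\theta\rrbracket\cap L_i)\enspace.
\end{align*}
Here $\llbracket\theta\rrbracket\cap L_i$ is (strictly) regular by Proposition~\ref{prop:restrictedprod} (intersection with an OPL), left-multiplication by the constant $k_i$ preserves (strict) regularity by Proposition~\ref{prop:multiplyk}, and the finite sum is (strictly) regular by Proposition~\ref{prop:disjsums}. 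This case needs no commutativity, since the order $k_i\cdot\llbracket\theta\rrbracket(w)$ is respected throughout.

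In the second case $\const(\psi)$ and $\const(\theta)$ commute elementwise, and I reuse the product construction of Proposition~\ref{prop:restrictedprod}. The only place where that construction invoked full commutativity was in reordering the interleaved run weight $\prod_t(\wt^{\mathcal{A}}_t\cdot\wt^{\mathcal{B}}_t)$ into the separated product $(\prod_t\wt^{\mathcal{A}}_t)\cdot(\prod_t\wt^{\mathcal{B}}_t)=\llbracket\mathcal{A}\rrbracket(w)\cdot\llbracket\mathcal{B}\rrbracket(w)$. To justify this reordering here, I maintain as an invariant that the automaton recognizing $\llbracket\psi\rrbracket$ (resp.\ $\llbracket\theta\rrbracket$) can be taken with all transition weights lying in the multiplicative submonoid of $\mathbb{K}$ generated by $\const(\psi)\cup\{0,1\}$ (resp.\ $\const(\theta)\cup\{0,1\}$); this is preserved by every closure construction used to build automata from formulas, since each of them only forms products of pre-existing weights with constants (Propositions~\ref{prop:disjsums}, \ref{prop:restrictedprod}, \ref{prop:closurehom}, \ref{prop:multiplyk}, and the base automata of Proposition~\ref{prop:stepTorwOPA}). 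As $0$ and $1$ are central and commutativity of two elements is inherited by their products, every weight of $\mathcal{A}$ then commutes with every weight of $\mathcal{B}$, so the reordering goes through pair by pair. The product construction preserves the restricted case, hence (strict) regularity is preserved.

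The main obstacle is exactly this commutation argument in the second case: the bare hypothesis that $\llbracket\psi\rrbracket$ and $\llbracket\theta\rrbracket$ are regular does not by itself locate the transition weights, so the proof must carry the weight-containment invariant through the inductive construction of automata from formulas in order to turn elementwise commutation of the constants into pairwise commutation of the automata's transition weights. The first case, by contrast, is essentially a bookkeeping argument built directly from the already established closure properties.
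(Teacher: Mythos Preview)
Your Case~1 is exactly the paper's argument: write $\llbracket\psi\rrbracket$ as an OPL step function and reduce to Propositions~\ref{prop:multiplyk}, \ref{prop:restrictedprod}, and~\ref{prop:disjsums}. Nothing to add there.

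Your Case~2 is actually \emph{more careful} than the paper's. The paper simply invokes the second part of Proposition~\ref{prop:restrictedprod}, but that statement is formulated for commutative $\mathbb{K}$, whereas here the hypothesis is only that $\const(\psi)$ and $\const(\theta)$ commute elementwise. You correctly identify that the product construction still works provided every transition weight of the automaton for $\psi$ commutes with every transition weight of the automaton for $\theta$, and that this must be tracked as an invariant through the inductive construction of Proposition~\ref{prop:FormulaToRec}. That is the right idea, and it genuinely fills a gap the paper leaves implicit.

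One small correction to your invariant: ``multiplicative submonoid generated by $\const(\psi)\cup\{0,1\}$'' is not quite enough. In the constructions you must also cover Proposition~\ref{prop:closureforall} (the $\prod_x$ case), which you omit from your list; there the transition weights are the values $k_j$ of an OPL step function for an almost boolean formula, and those values can involve \emph{sums} of products of constants (via the $\oplus$ clause in Proposition~\ref{prop:almboolrsf}). So the invariant should say that the transition weights lie in the \emph{subsemiring} generated by $\const(\psi)\cup\{0,1\}$. This is harmless for your commutation argument, since if $a,a'$ each commute with $b$ then so does $a+a'$; you should just state that commutation is inherited by both products and sums, not only products.
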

\begin{proof}%
	Since $\varphi$ is $\otimes$-restricted, either $\psi$ is almost boolean or the constants of both formulas commute.
	\par
	\textbf{Case 1:} Let us assume $\psi$ is almost boolean. Then, we can write $\llbracket\psi\rrbracket$ as OPL step function, i.e., $\llbracket\psi\rrbracket =\sum_{i=1}^n k_i \mathbbm{1}_{L_i}$, where $L_i$ are OPL.
    So, %
    the series $\llbracket \psi \otimes \theta \rrbracket$ equals a sum of series of the form $(\llbracket k_i \otimes \theta\rrbracket \cap L_i)$.
    Then, by Proposition \ref{prop:multiplyk}, $\llbracket k_i \otimes \theta\rrbracket$ is a regular (resp. strictly regular) series. 
	Therefore, $(\llbracket k_i \otimes \theta\rrbracket \cap L_i)$ is regular (resp. strictly regular) by Proposition \ref{prop:restrictedprod}. Hence, $\llbracket \psi \otimes \theta \rrbracket$ is (strictly) regular by Proposition \ref{prop:disjsums}.
	\par
	\textbf{Case 2:} Let us assume that the constants of $\psi$ and $\theta$ commute. Then, %
	the second part of Proposition \ref{prop:restrictedprod} yields the claim.
	\qed
\end{proof}

\begin{lemma}[Closure under $\sum_x,~\sum_X$]
	\label{lemma:closureexists}
	Let $\varphi$ be a formula of $\MSO(\mathbb{K})$ such that $\llbracket \varphi \rrbracket$ is regular (resp. strictly regular). Then, $\llbracket \sum_x \varphi \rrbracket$ and $\llbracket \sum_X \varphi \rrbracket$ are regular (resp. strictly regular). 
\end{lemma}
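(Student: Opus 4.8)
The plan is to reduce the closure under $\bigoplus_x$ and $\bigoplus_X$ to the projection closure already established in Proposition~\ref{prop:closurehom}. The key observation is that existential quantification over a variable corresponds exactly to an OPM-preserving projection which forgets the information about that variable. First I would set $\mathcal{V}=\free(\varphi)$ and, for the first-order case, consider the variable $x$; the semantics $\llbracket \bigoplus_x \varphi \rrbracket_\mathcal{V}(w,\sigma)=\sum_{i\in[w]}\llbracket \varphi \rrbracket_{\mathcal{V}\cup\{x\}}(w,\sigma[x\to i])$ is precisely a sum over all ways of placing the bit for $x$, which is the defining feature of the projection. Concretely, I would take the extended alphabet $\Sigma_{\mathcal{V}\cup\{x\}}=\Sigma\times\{0,1\}^{\mathcal{V}\cup\{x\}}$ and the projection $h:\Sigma_{\mathcal{V}\cup\{x\}}\rightarrow\Sigma_{\mathcal{V}}$ that deletes the $x$-component of the label. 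By the definition of $M_\mathcal{V}$ (which never depends on the $\{0,1\}^{\mathcal{V}}$ part), this $h$ is OPM-preserving, so Proposition~\ref{prop:closurehom} applies.

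The main subtlety is that the projection $h$ as defined sums over \emph{all} preimages, including invalid ones where $x$ is assigned to zero positions or to more than one position, whereas $\bigoplus_x$ should sum only over the valid assignments $\sigma[x\to i]$ with exactly one marked position. I would handle this exactly as in the classical Droste--Gastin treatment: since $\llbracket \varphi \rrbracket_{\mathcal{V}\cup\{x\}}$ is, by the convention stated in the excerpt, set to $0$ on every non-valid word $(w,\sigma)$, the invalid preimages contribute $0$ to the sum and can be ignored. Thus for valid $(w,\sigma)$ over $\Sigma_\mathcal{V}$ we obtain $h(\llbracket\varphi\rrbracket_{\mathcal{V}\cup\{x\}})(w,\sigma)=\sum_{i\in[w]}\llbracket\varphi\rrbracket_{\mathcal{V}\cup\{x\}}(w,\sigma[x\to i])=\llbracket\bigoplus_x\varphi\rrbracket_\mathcal{V}(w,\sigma)$, and the two sides also agree (both being $0$) on non-valid words. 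Hence $\llbracket\bigoplus_x\varphi\rrbracket_\mathcal{V}=h(\llbracket\varphi\rrbracket_{\mathcal{V}\cup\{x\}})$.

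Now the argument closes quickly. By hypothesis $\llbracket\varphi\rrbracket$ is regular (resp. strictly regular), so by Lemma~\ref{lemma:consistency} the consistent semantics $\llbracket\varphi\rrbracket_{\mathcal{V}\cup\{x\}}$ is regular (resp. strictly regular) as well. Applying Proposition~\ref{prop:closurehom} to the OPM-preserving projection $h$ shows that $h(\llbracket\varphi\rrbracket_{\mathcal{V}\cup\{x\}})=\llbracket\bigoplus_x\varphi\rrbracket_\mathcal{V}$ is regular (resp. strictly regular), and a final appeal to Lemma~\ref{lemma:consistency} transfers this back to $\llbracket\bigoplus_x\varphi\rrbracket=\llbracket\bigoplus_x\varphi\rrbracket_{\free(\bigoplus_x\varphi)}$. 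The second-order case $\bigoplus_X\varphi$ is entirely analogous: the projection $h$ deletes the $X$-component instead, every subset $I\subseteq[w]$ arises as a distinct preimage, and here no validity issue even arises since any assignment of a bit string to $X$ is legitimate. The step I expect to be the only real obstacle is the bookkeeping around validity and the interplay with $\free$, which is why invoking Lemma~\ref{lemma:consistency} at both ends is essential; once the projection is set up correctly, the regularity transfer is immediate.
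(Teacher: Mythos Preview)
Your proposal is correct and follows essentially the same route as the paper: reduce $\bigoplus_x$ and $\bigoplus_X$ to an OPM-preserving projection that forgets the relevant variable, invoke Lemma~\ref{lemma:consistency} to pass to the larger variable set, and apply Proposition~\ref{prop:closurehom}. The only cosmetic difference is that the paper takes $\mathcal{V}=\free(\bigoplus_\mathcal{X}\varphi)$ rather than $\mathcal{V}=\free(\varphi)$; with your choice, when $x\in\free(\varphi)$ one has $\mathcal{V}\cup\{x\}=\mathcal{V}$ and the target of $h$ should really be $\Sigma_{\mathcal{V}\setminus\{x\}}$, but your final appeal to Lemma~\ref{lemma:consistency} absorbs this harmless bookkeeping slip.
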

\begin{proof} [Compare \cite{DG07}]	%
	Let $\mathcal{X} \in \{x, X\}$ and $\mathcal{V}=\free(\sum_\mathcal{X} \varphi)$. We define $\pi:(\Sigma_{\mathcal{V} \cup \{\mathcal{X}\} }, M)^+ \rightarrow (\Sigma_\mathcal{V},M)^+$ by $\pi(w,\sigma)=(w,\sigma {\restriction_{\mathcal{V}}})$ for any $(w,\sigma) \in (\Sigma_{\mathcal{V} \cup \{\mathcal{X}\} },M)^+$. Then, for $(w, \gamma) \in (\Sigma_\mathcal{V},M)^+$, the following holds
	\begin{align*}	%
	\llbracket {\textstyle\sum_X} \varphi \rrbracket(w,\gamma) &= \sum_{I \subseteq \{1,...,|w|\}} \llbracket \varphi \rrbracket_{\mathcal{V} \cup \{X\} }(w,\gamma[X \rightarrow I]) \\
	&= \sum_{(w,\sigma)\in \pi^{-1}(w,\gamma) } \llbracket \varphi \rrbracket_{\mathcal{V} \cup \{ X \} } (w,\sigma) \\
	&= \pi(\llbracket \varphi \rrbracket_{\mathcal{V} \cup \{X \} } ) (w, \gamma)\enspace.
	\end{align*}
	Analogously, we show that
	$\llbracket \sum_x \varphi \rrbracket(w,\gamma)=\pi(\llbracket \varphi \rrbracket_{\mathcal{V} \cup \{x \} } ) (w, \gamma)$ for all $(w, \gamma) \in (\Sigma_\mathcal{V},M)^+$.
	By Lemma \ref{lemma:consistency}, $\llbracket \varphi \rrbracket_{\mathcal{V} \cup \{\mathcal{X} \} }$ is regular because $\free(\varphi) \subseteq \mathcal{V} \cup \{\mathcal{X} \}$. Then, $\llbracket \sum_\mathcal{X} \varphi \rrbracket$ is regular by %
	Proposition \ref{prop:closurehom}.
	\qed
\end{proof}
\begin{proposition} [Closure under restricted $\prod_x$]
	\label{prop:closureforall}
	Let $\varphi$ be an almost boolean formula of $\MSO(\mathbb{K})$.
	Then, $\llbracket \prod_x \varphi \rrbracket$ is strictly regular.
\end{proposition}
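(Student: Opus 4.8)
The plan is to show that the product quantifier $\prod_x$ applied to an almost boolean formula produces a strictly regular series by reducing it to the step-function machinery already developed. By Proposition~\ref{prop:almboolrsf}(a), since $\varphi$ is almost boolean, $\llbracket \varphi \rrbracket$ is an OPL step function. The key idea is that the product $\prod_x$ over positions corresponds to running through the word position by position, multiplying the value that $\varphi$ assigns at each position; because $\varphi$ takes only finitely many values $k_1,\dots,k_n$ (indexed by a partition into OPL $L_1,\dots,L_n$), at each position the contributed factor is determined by which $L_i$ the word-with-marked-position falls into. I would therefore aim to construct an rwOPA directly, or alternatively express $\llbracket \prod_x \varphi \rrbracket$ via the closure operations already proved.

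First I would set $\mathcal{V} = \free(\prod_x \varphi)$ and note $\free(\varphi) \subseteq \mathcal{V} \cup \{x\}$. By Proposition~\ref{prop:almboolrsf}(a), over the extended alphabet $\Sigma_{\mathcal{V}\cup\{x\}}$ the series $\llbracket \varphi \rrbracket_{\mathcal{V}\cup\{x\}}$ is an OPL step function, say $\sum_{i=1}^n k_i \mathbbm{1}_{L_i}$ with the $L_i$ forming a partition of $(\Sigma_{\mathcal{V}\cup\{x\}}, M)^+$ into OPL. For a fixed valid $(w,\sigma)$ over $\Sigma_\mathcal{V}$, the semantics gives
\[
  \llbracket {\textstyle\prod_x}\varphi \rrbracket_\mathcal{V}(w,\sigma)
  = \prod_{p \in [w]} \llbracket \varphi \rrbracket_{\mathcal{V}\cup\{x\}}(w,\sigma[x\to p])
  = \prod_{p \in [w]} k_{i(p)}\enspace,
\]
where $i(p)$ is the unique index with $(w,\sigma[x\to p]) \in L_{i(p)}$. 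The plan is to build a single deterministic rwOPA that, while reading $w$, recognizes at each position $p$ which block $L_{i(p)}$ is active and accumulates the factor $k_{i(p)}$ as the weight of the transition consuming that position. Since each $L_i$ is an OPL, I would take a deterministic OPA $\mathcal{C}$ recognizing (a suitable product of) the $L_i$ that lets me read off $i(p)$ from the local move at position $p$; the weight $k_{i(p)}$ is then placed on the push or shift transition that reads position $p$, with pop weights set to $1$.

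The main obstacle is the mismatch between the product over positions and the structure of OPA runs, which is exactly the non-real-time issue emphasized earlier in the paper: an OPA may perform several pop moves without advancing its head, so positions are consumed only by push and shift moves, never by pops. This is precisely why the restriction to $\prod_x$ over \emph{almost boolean} (hence step-function) formulas, together with the rwOPA convention $\wt_\pop \equiv 1$, makes the argument go through: placing each factor $k_{i(p)}$ on the push/shift move reading position $p$ ensures every position contributes exactly one factor and the intervening pops contribute the neutral weight $1$, so the product of transition weights along the run equals $\prod_{p\in[w]} k_{i(p)}$. The remaining point is to check that the OPA can determine $i(p)$ from bounded information; this follows because membership of $(w,\sigma[x\to p])$ in each OPL $L_i$ is decidable by a deterministic OPA, and a product construction over the finitely many $L_i$ gives a single deterministic OPA whose move at each position reveals the active block, letting me assign the correct weight $k_{i(p)}$ to that move. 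Finally, by Lemma~\ref{lemma:consistency} the value over $\mathcal{V}$ agrees with the value over $\free(\prod_x\varphi)$, so $\llbracket \prod_x \varphi \rrbracket$ is strictly regular.
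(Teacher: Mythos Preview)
Your high-level strategy coincides with the paper's: write $\llbracket\varphi\rrbracket$ as an OPL step function $\sum_i k_i\mathbbm{1}_{L_i}$, and then build an rwOPA that at each position $p$ contributes the factor $k_{i(p)}$ on the push/shift move consuming $p$, leaving pop weights trivial. Where your proposal has a genuine gap is the sentence claiming that ``a product construction over the finitely many $L_i$ gives a single deterministic OPA whose move at each position reveals the active block.'' This does not follow. The $L_i$ live over $\Sigma_{\mathcal V\cup\{x\}}$, not over $\Sigma_{\mathcal V}$; a product of deterministic OPAs for the $L_i$ still reads a word with one fixed $x$-position and tells you which $L_i$ that single word lies in. It does not, and in general cannot, tell you at the moment you read position $p$ which $L_i$ contains $(w,\sigma[x\to p])$, because that membership may depend on positions to the right of $p$. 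So there is no reason the index $i(p)$ is determined by ``bounded information'' available at the move reading $p$.

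The paper resolves exactly this point by a guess-and-verify detour through the unweighted MSO characterization. It extends the alphabet to $\tilde\Sigma=\Sigma\times\{1,\dots,m\}$, encoding at each position a guessed index $g(p)$, and defines $\tilde L=\{(w,g,\sigma)\mid \forall p:\,(w,\sigma[x\to p])\in L_{g(p)}\}$. The crucial (and nontrivial) step is that $\tilde L$ is MSO-definable over $(\tilde\Sigma_{\mathcal V},\tilde M_{\mathcal V})$, hence by Theorem~\ref{thm:OPLMSO} there is a deterministic OPA $\tilde{\mathcal A}$ for $\tilde L$. One then drops the $g$-component from the input, assigning weight $k_j$ to the push/shift transition that in $\tilde{\mathcal A}$ read $(a,j,s)$. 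The resulting wOPA over $\Sigma_{\mathcal V}$ is nondeterministic (it guesses $g$), but since the $L_i$ partition the space there is exactly one $g$ with $(w,g,\sigma)\in\tilde L$, hence exactly one accepting run, and its weight is $\prod_p k_{i(p)}$. Your outline needs this extended-alphabet-plus-MSO step (or an equivalent argument) in place of the unsupported product-construction claim.
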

\begin{proof}
	We use ideas of \cite{DG07} and the extensions in \cite{DPi14} and \cite{DD16} with the following intuition. %
	\par
	In the first part, we write $\llbracket \varphi \rrbracket$ as OPL step function and encode the information to which language $(w,\sigma [x \rightarrow i])$ belongs in a specially extended language $\tilde{L}$. Then we construct an MSO-formula for this language. Therefore, by Theorem \ref{thm:OPLMSO}, we get a deterministic OPA recognizing $\tilde{L}$. %
	In the second part, we add the weights $k_i$ to this automaton and return to our original alphabet.
	\par
	More detailed, let $\varphi \in \MSO(\mathbb{K},(\Sigma,M))$. We define $\mathcal{V}=\free(\prod x. \varphi)$ and $\mathcal{W}=\free(\varphi) \cup \{x\}$. We consider the extended alphabets $\Sigma_\mathcal{V}$ and $\Sigma_\mathcal{W}$ together with their natural OPMs $M_\mathcal{V}$ and $M_\mathcal{W}$.
	By Proposition \ref{prop:almboolrsf} and
    lifting Lemma \ref{lemma:consistency} to OPL step functions, $\llbracket \varphi \rrbracket$ is an OPL step function. Let $\llbracket \varphi \rrbracket = \sum^m_{j=1} k_j \mathbbm{1}_{L_j} $ where $L_j$ is an OPL over $(\Sigma_{\mathcal{W}},M_\mathcal{W})$ for all $j \in \{1,...,m\}$ and $(L_j)$ is a partition of $(\Sigma_{\mathcal{W}},M_\mathcal{W})^+$. 
	By the semantics of the product quantifier, we get
	\begin{align}
	\notag %
	\llbracket {\textstyle\prod_x} \varphi \rrbracket (w,\sigma)
	&= \prod_{i \in [w]}(\llbracket \varphi \rrbracket_{\mathcal{W}} (w,\sigma [x \rightarrow i])) \\
	\notag	&= \prod_{i \in [w]}(k_{g(i)}), \\
	\label{formeld3}
	\text{where} \quad\quad\quad\quad%
	g(i)&=	\left\{\begin{array}{ll}
	1 &, \text{ if } (w,\sigma[x\rightarrow i]) \in L_1 \\
	... &~ \\
	m &, \text{ if } (w,\sigma[x\rightarrow i]) \in L_m \end{array}\right . ,~\text{for all } i \in [w]\enspace. \hspace{0,4cm} %
	\end{align}
	Now, in the first part, we encode the information to which language $(w,\sigma [x \rightarrow i])$ belongs in a specially extended language $\tilde{L}$ and construct %
	an MSO-formula for this language.
	We define the extended alphabet $\tilde{\Sigma}=\Sigma\times \{1,...,n\}$, together with its natural OPM $\tilde{M}$ which only refers to $\Sigma$, so:
	\begin{align*}(\tilde{\Sigma}_\mathcal{V},\tilde{M}_\mathcal{V})^+=\{(w,g,\sigma)~|~(w,\sigma) \in (\Sigma_\mathcal{V},M_V) \text{ and } g \in \{1,...,m \}^{[w]} \}\enspace. \end{align*} %
	We define the languages $\tilde{L}, \tilde{L}_j, \tilde{L}'_j \subseteq (\tilde{\Sigma}_\mathcal{V},\tilde{M}_\mathcal{V})^+$ as follows:
	\begin{align*}
	\notag \tilde{L}=	&
	\Set{(w,g,\sigma)|
		\begin{aligned}
		&(w,\sigma) \in (\tilde{\Sigma}_\mathcal{V},\tilde{M}_\mathcal{V})^+ \text{ is valid and }\\
		&\text{for all } i \in [w],~j \in \{1,...,m \}: %
		~g(i)=j \Rightarrow (w,\sigma [x \rightarrow i]) \in L_j
		\end{aligned}
	}
	\enspace, \\
	\notag \tilde{L}_j=	&
	\Set{(w,g,\sigma)|
		\begin{aligned}
		&(w,\sigma) \in (\tilde{\Sigma}_\mathcal{V},\tilde{M}_\mathcal{V})^+ \text{ is valid and} \\ 
		& \text{for all } i \in [w]: %
		~g(i)=j \Rightarrow (w,\sigma [x \rightarrow i]) \in L_j
		\end{aligned}
	}	
	\enspace, \\
	\tilde{L}_j'=	&
	\Set{(w,g,\sigma)|
		\text{for all } i \in [w]: %
		~g(i)=j \Rightarrow (w,\sigma [x \rightarrow i]) \in L_j
	}
	\enspace.
	\end{align*}
	Then, $ \tilde{L}=\bigcap^m_{j=1} \tilde{L}_j$. Hence, in order to show that $\tilde L$ is an OPL, it suffices to show that each $\tilde L_j$ is an OPL. %
	By a standard procedure, compare \cite{DG07}, we obtain a formula $\tilde\varphi_j \in \MSO(\tilde\Sigma_\mathcal{V},\tilde{M}_\mathcal{V})$ with $L(\tilde\varphi_j)=\tilde L'_j$.
	Therefore, by Theorem \ref{thm:OPLMSO}, $\tilde{L}_j'$ is an OPL. It is straightforward to define an OPA accepting $\tilde{N}_\mathcal{V}$, the language of all valid words. %
	By closure under intersection, %
	$\tilde{L}_j=\tilde{L}_j' \cap \tilde{N}_\mathcal{V}$ is also an OPL and so is $\tilde L$. Hence, there exists a deterministic OPA $\mathcal{\tilde{A}}=(\Sigma,q_0,F,\tilde\delta)$ recognizing $\tilde{L}$.
	\par
	In the second part, we add weights to $\mathcal{\tilde{A}}$ as follows.
	We construct the wOPA $\mathcal{A}=(Q,I,F,\delta,\wt)$ over $(\Sigma_\mathcal{V},M_\mathcal{V})$ and $\mathbb{K}$ by adding to every transition of $\mathcal{\tilde A}$ with $g(i)=j$ the weight $k_j$. %
	\par
	That is, we keep the states, the initial state, and the accepting states, and for $\delta=(\delta_{\push},\delta_{\shift},\delta_{\pop})$ and all $q,q',p \in Q$ and $(a,j,s) \in \tilde \Sigma_\mathcal{V}$, we define
	\begin{align*}
	\delta_{\push/\shift}(q,(a,s),q')&=\left\{\begin{array}{ll}	k_j &,\text{ if } (q,(a,j,s),q') \in \tilde\delta_{\push/\shift} \\	
	0 &,\text{ otherwise} \end{array}\right . \enspace .
	\end{align*} %
	Since $\mathcal{\tilde A}$ is deterministic, for every $(w,g,\sigma) \in \tilde L$, there exists exactly one accepted run $\tilde r$ of $\mathcal{\tilde A}$. On the other hand, for every $(w,g,\sigma) \notin \tilde L$, there is no accepted run of $\mathcal{\tilde A}$. Since $(L_j)$ is a partition of $(\Sigma_{\mathcal{W}},M_\mathcal{W})^+$, for every $(w,\sigma) \in ( \Sigma_\mathcal{V},M_\mathcal{V})$, there exists exactly one $g$ with $(w,g,\sigma) \in \tilde L$. %
	Thus, every $(w,\sigma) \in (\Sigma_\mathcal{V},M_\mathcal{V})$ has exactly one run $r$ of $\mathcal{A}$ %
	determined by the run $\tilde r$ of $(w,g,\sigma)$ of $\mathcal{\tilde A}$. %
	We denote with $\wt_\mathcal{A}(r,(w,\sigma),i)$ the weight used by the run $r$ on $(w,\sigma)$ over $\mathcal{A}$ at position $i$, which is always the weight of the push or shift transition used at this position. Then by definition of $\mathcal{A}$ and $\tilde L$, the following holds for all $i \in [w]$
	\begin{align*}
	g(i)=j \Rightarrow \wt_\mathcal{A}(r,(w,\sigma),i)=k_j \wedge (w,\sigma[x \rightarrow i]) \in L_j\enspace.
	\end{align*}
	By formula (\ref{formeld3}), we obtain 
	\begin{align*}
	\llbracket \varphi \rrbracket_\mathcal{W} (w,\sigma [x \rightarrow i])=k_j=\wt_\mathcal{A}(r,(w,\sigma),i)\enspace.
	\end{align*}
	Hence, for the behavior of the automaton $\mathcal{A}$ the following holds
	\begin{align*}
	\notag	\llbracket \mathcal{A} \rrbracket (w,\sigma) &=	\sum_{r' \in \acc(\mathcal{A},w)} \wt(r')\\ %
	\notag		&=\prod^{|w|}_{i=1}\wt_\mathcal{A}(r,(w,\sigma),i) \\
	\notag		&=\prod^{|w|}_{i=1}\llbracket \varphi \rrbracket_\mathcal{W} (w,\sigma [x \rightarrow i]) \\
	&=\llbracket \textstyle \prod_x \varphi \rrbracket (w,\sigma)\enspace. %
	\end{align*}
	Thus, $\mathcal{A}$ %
	recognizes $\llbracket \prod_x \varphi \rrbracket$.
	\qed
\end{proof}
The following proposition is a summary of the previous results.
\begin{proposition} \label{prop:FormulaToRec}
	For every restricted
	$\MSO(\mathbb{K})$-sentence $\varphi$, 
	there exists an rwOPA $\mathcal{A}$
	with $\llbracket \mathcal{A} \rrbracket = \llbracket \varphi \rrbracket$.
\end{proposition}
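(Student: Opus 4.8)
The plan is to prove the statement by a straightforward structural induction on the restricted formula $\varphi$, assembling the closure results established above. Since a sentence may well be built from subformulas that have free variables, I would first strengthen the assertion to the following inductive claim: for \emph{every} restricted formula $\varphi$ of $\MSO(\mathbb{K})$ (not only sentences), the series $\llbracket\varphi\rrbracket$ is strictly regular. The proposition then follows by specializing to $\free(\varphi)=\emptyset$, in which case the resulting rwOPA has behavior $\llbracket\mathcal{A}\rrbracket:(\Sigma,M)^+\to K$ as required. Before starting the induction I would record the simple but essential observation that restrictedness is hereditary: every subformula of a $\otimes$-restricted (resp.\ $\prod$-restricted) formula is itself $\otimes$-restricted (resp.\ $\prod$-restricted), since both conditions are universally quantified over \emph{all} subformulas. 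This guarantees that the induction hypothesis applies to the immediate subformulas and, more importantly, that the side conditions of the $\otimes$- and $\prod_x$-closure results are automatically satisfied.

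For the base cases, if $\varphi=\beta$ is boolean then $L(\beta)$ is an OPL by Theorem \ref{thm:OPLMSO}, so $\llbracket\beta\rrbracket=\mathbbm{1}_{L(\beta)}$ is an OPL step function and hence strictly regular by Proposition \ref{prop:stepTorwOPA}; if $\varphi=k$ is a constant, then $\llbracket k\rrbracket=k\mathbbm{1}_{(\Sigma,M)^+}$ is likewise an OPL step function, hence strictly regular. For the inductive step I would dispatch each production of the weighted grammar using exactly one of the previously proved closure results: the case $\varphi_1\oplus\varphi_2$ by Lemma \ref{lemma:closureplus}; the cases $\bigoplus_x\varphi$ and $\bigoplus_X\varphi$ by Lemma \ref{lemma:closureexists}; the case $\psi\otimes\theta$ by Proposition \ref{prop:closuredot}, applicable precisely because $\varphi$ is $\otimes$-restricted and therefore its subformula $\psi\otimes\theta$ meets the side condition (either $\psi$ is almost boolean or $\const(\psi)$ and $\const(\theta)$ commute); and the case $\prod_x\psi$ by Proposition \ref{prop:closureforall}, applicable precisely because $\varphi$ is $\prod$-restricted and therefore $\psi$ is almost boolean. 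In each case the induction hypothesis furnishes strict regularity of the immediate subformulas, and each cited result preserves strict regularity, so strict regularity of $\llbracket\varphi\rrbracket$ follows.

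The genuine mathematical content of this proof resides entirely in the closure lemmas already established, so the work here is essentially organizational. Accordingly, I expect the only point requiring care to be the bookkeeping around the side conditions: one must make the hereditary restrictedness observation explicit so that Proposition \ref{prop:closuredot} and Proposition \ref{prop:closureforall} are legitimately invoked on subformulas, and one must track that every step yields the \emph{strictly} regular alternative rather than merely regular. The latter is exactly why the $\prod_x$-case is built to output an rwOPA and why each of the cited closure results carries the ``(resp.\ strictly regular)'' clause; keeping this alternative aligned across all cases is the main thing to verify.
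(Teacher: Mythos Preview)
Your proposal is correct and follows essentially the same structural-induction scheme as the paper, invoking the identical closure results (Lemmas~\ref{lemma:closureplus}, \ref{lemma:closureexists} and Propositions~\ref{prop:closuredot}, \ref{prop:closureforall}) at the same points. The only cosmetic difference is that the paper bundles all almost boolean formulas into a single base case via Proposition~\ref{prop:almboolrsf} and then Proposition~\ref{prop:stepTorwOPA}, whereas you take only boolean formulas and constants as base cases and let the $\oplus$/$\otimes$ inductive steps absorb the remaining almost boolean combinations; both organizations are equivalent.
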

\begin{proof}
We use structural induction on $\varphi$.
    If $\varphi$ is an almost boolean formula, then by Proposition \ref{prop:almboolrsf} $\llbracket\varphi\rrbracket$ is an OPL step function. By Proposition \ref{prop:stepTorwOPA} every OPL step function is strictly regular. %
	\par
	Closure under $\oplus$ is dealt with by Lemma \ref{lemma:closureplus}, closure under $\otimes$ by Proposition \ref{prop:closuredot}. %
	The sum quantifications $\sum_x$ and $\sum_X$ are dealt with by Lemma \ref{lemma:closureexists}.
	Since $\varphi$ is restricted, we know that for every subformula $\bigotimes_x \psi$, the formula $\psi$ is an almost boolean formula. Therefore, we can apply Proposition \ref{prop:closureforall} to maintain recognizability of our formula in this case. %
\end{proof}
The next proposition shows that the converse also holds.
\begin{proposition} \label{prop:RecToFormula}
	For every rwOPA $\mathcal{A}$, there exists a restricted
	$\MSO(\mathbb{K})$-sentence $\varphi$ with $\llbracket \mathcal{A} \rrbracket = \llbracket \varphi \rrbracket$.
	If $\mathbb{K}$ is commutative, then for every wOPA $\mathcal{A}$, there exists a restricted $\MSO(\mathbb{K})$-sentence $\varphi$
	with $\llbracket \mathcal{A} \rrbracket = \llbracket \varphi \rrbracket$.
\end{proposition}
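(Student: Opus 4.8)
The plan is to adapt the automaton-to-logic direction of the Büchi--Elgot--Trakhtenbrot theorem, as carried out for words in \cite{DG07} and for nested words in \cite{DPi14,DD16}, to the operator precedence setting, and to reduce the commutative wOPA case to the rwOPA case. For the second assertion, given a wOPA $\mathcal{A}$ over a commutative $\mathbb{K}$, the earlier theorem establishing that wOPA and rwOPA coincide over commutative semirings yields an rwOPA $\mathcal{B}$ with $\llbracket\mathcal{B}\rrbracket=\llbracket\mathcal{A}\rrbracket$; applying the first assertion to $\mathcal{B}$ then produces the desired restricted sentence. Hence it suffices to treat an rwOPA $\mathcal{A}=(Q,I,F,\delta,\wt)$, and the target is a restricted sentence of the shape
\[
  \varphi \;=\; \bigoplus_{\bar X}\bigl(\beta_{\mathrm{run}} \otimes \textstyle\prod_x \gamma\bigr),
\]
where $\bar X$ abbreviates an iterated second-order sum over a finite tuple of variables encoding a run, $\beta_{\mathrm{run}}$ is a boolean formula asserting that $\bar X$ encodes an accepting run, and $\gamma$ is an almost boolean formula returning the transition weight consumed at the current position.

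The key step is a faithful encoding of runs by position labellings. Since $\mathcal{A}$ performs exactly one push or shift per input position, I would record at each position $i$ the before-state and after-state of that move through second-order variables $(X_{(p,r)})_{(p,r)\in Q\times Q}$, with $i\in X_{(p,r)}$ iff the push/shift at $i$ leads from $p$ to $r$. The delicate point is the pops: they consume no input and several of them may be executed while the head sits at one position, so they are not in bijection with positions. I would exploit that pops \emph{are} in bijection with push-positions --- each stack entry is created by a unique push and later removed by a unique pop, the intervening shifts only relabelling it --- and that the chain relation $\curvearrowright$ records exactly this matching. Thus I would add one further family of variables recording, for each push-position, the after-state of the pop that eventually removes its entry. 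With this data the whole run, including all pop moves and their before-states, is reconstructed by sequencing the moves left to right, and the correspondence between valid labellings and accepting runs becomes a \emph{bijection}; this bijectivity is precisely what makes $\bigoplus_{\bar X}$ compute $\sum_{\rho\in\acc(\mathcal{A},w)}$ without over- or under-counting.

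For $\beta_{\mathrm{run}}$ I would not spell out the consistency conditions by hand, but instead observe that over the extended alphabet $\Sigma\times Q\times Q\times Q$ carrying these annotations (with OPM depending only on the $\Sigma$-component) the set of legal run-encodings is accepted by an unweighted OPA which re-runs $\mathcal{A}$, stores each push-position's pop-annotation on the stack, and checks on every move that the annotation agrees with the transition actually taken; by Theorem~\ref{thm:OPLMSO} this OPL is MSO-definable, and the standard translation between extended-alphabet symbols and variable assignments turns the resulting sentence into a boolean $\beta_{\mathrm{run}}\in\MSO(\Sigma,M)$, which also contains the validity (functionality) check on $\bar X$. Since $\mathcal{A}$ is restricted, every pop contributes $1$, so $\wt(\rho)$ is the product over positions of the push/shift weights, and I would put
\[
  \gamma=\bigoplus_{p,r\in Q,\,a\in\Sigma}\Bigl[\bigl((x\in X_{(p,r)})\wedge\Lab_a(x)\wedge\beta_{\mathrm{push}}(x)\bigr)\otimes\wt_{\push}(p,a,r)\Bigr]\oplus\Bigl[\bigl((x\in X_{(p,r)})\wedge\Lab_a(x)\wedge\beta_{\mathrm{shift}}(x)\bigr)\otimes\wt_{\shift}(p,a,r)\Bigr],
\]
where $\beta_{\mathrm{push}},\beta_{\mathrm{shift}}$ are the $\curvearrowright$-definable boolean formulas distinguishing push- from shift-positions. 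As the $X_{(p,r)}$, the labels, and push versus shift are mutually exclusive and exhaustive, exactly one summand is active at each position and $\gamma$ returns the correct weight. Finally I would verify restrictedness: $\gamma$ is almost boolean, so $\prod_x\gamma$ is $\prod$-restricted; and $\beta_{\mathrm{run}}$ is boolean, hence almost boolean, so $\beta_{\mathrm{run}}\otimes\prod_x\gamma$ is $\otimes$-restricted \emph{irrespective of commutativity} --- which is exactly why the rwOPA case needs no hypothesis on $\mathbb{K}$.

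The step I expect to be the main obstacle is precisely this run-encoding under the mismatch between pops and input positions: capturing pop moves faithfully by attaching their data to the matched push-positions, proving that valid labellings biject with accepting runs, and expressing the induced pop-to-push matching --- together with the consistency of each annotated before-state with the state actually reached after the preceding pops --- purely through $\curvearrowright$ and the labels. Once this encoding and the appeal to Theorem~\ref{thm:OPLMSO} are in place, the weight bookkeeping and the checks of $\otimes$- and $\prod$-restrictedness are routine.
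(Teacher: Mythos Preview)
Your proposal is correct and follows the same overall architecture as the paper: encode runs of $\mathcal{A}$ by second-order position labellings, express ``being a valid accepting encoding'' by a boolean formula, attach the transition weights via a $\prod_x$ over an almost boolean body, and close with $\bigoplus$ over the encoding variables. Two points of your route differ from the paper's and are worth noting.

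First, for the commutative case the paper does \emph{not} reduce to rwOPA; it keeps the pop weights in the formula by introducing, in addition to $X^{\push}_{p,a,q}$ and $X^{\shift}_{p,a,q}$, variables $X^{\pop}_{p,q,r}$ attached to the push-position whose stack entry is eventually removed by the pop $(p,q,r)$, and it multiplies the pop weight at that position inside the same $\prod_x$. This is where commutativity is invoked in the paper: the pop weight is charged at a position different from where the pop actually occurs, so the factors are permuted. Your reduction via the earlier wOPA$\,\to\,$rwOPA theorem is cleaner and avoids carrying pop weights through the formula at all.

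Second, for the boolean part $\beta_{\mathrm{run}}$ the paper spells out the consistency conditions explicitly, reusing the predicates $\Tree$, $\Succ_q$, $\Next_r$, $Q_i$ and $\Tree_{p,q}$ from \cite{LMPP15} to express when a push/shift/pop with the annotated states is actually taken. Your shortcut---build an OPA over the extended alphabet that re-runs $\mathcal{A}$ and checks the annotations, then invoke Theorem~\ref{thm:OPLMSO}---is equally valid and saves the bookkeeping, at the cost of not exhibiting the formula. Either way the bijection between labellings and accepting runs, and the observation that in an rwOPA the push/shift weights occur in left-to-right position order (so $\prod_x\gamma$ reproduces $\wt(\rho)$ even non-commutatively), are the decisive points, and you identify both correctly.
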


\begin{proof}
	The rationale adopted to build formula $\varphi$ from $\mathcal{A}$ integrates the approach followed in \cite{DG07,DPi14} with the one of \cite{LMPP15}
	On the one hand we need second order variables suitable to ``carry'' weights; on the other hand, unlike previous non-OP cases which are managed through real-time automata, an OPA can perform several transitions while remaining in the same position. Thus, we introduce the following second order variables: 
	$X_{p,a,q}^{\push}$ represents the set of positions where $\mathcal{A}$ performs a push move from state $p$, reading symbol $a$ and reaching state $q$; $X_{p,a,q}^{\shift}$ has the same meaning as $X_{p,a,q}^{\push}$ for a shift operation; $X_{p,q,r}^{\pop}$ represents the set of positions of the symbol that is on top of the stack when $\mathcal{A}$ performs a pop transition from state $p$, with $q$ on top of the stack, reaching $r$. 

\begin{figure} 
\begin{adjustbox}{max totalsize={\textwidth},center}
		\begin{tikzpicture}[flush/.style={double, >=stealth, thin, rounded corners}]
		\matrix (m) [matrix of nodes]
		{
			& & $$ & &  & & & & & &  & & & &  & & & & & & \\
			$\circ$ & & & & $X^{\pop}_{3,1,3}$ & & & & & & & & & & & & & & & & \\
			
			& & & &  & & $$ & &  & & & & & & & & & & & & \\
			& & & & $\circ$ & & & & $X^{\pop}_{3,1,3}$ & & & & & & & & & & & & \\
			
			& & & & & & & &  & & & & & $$ & & & & &  & & \\
			& & & & & & & & $\circ$ & & & & & & & & & & $X^{\pop}_{3,0,3}$ & & \\
			
			& & & & & & & & & &  & & $$ & &  & & & & & & \\
			& & & & & & & & & & $\circ$ & & & & $X^{\pop}_{3,3,3}$ & & & & & & \\
			& $$ &  & &  & $$ &  & & & &  & $$ &  & &  & $$ &  & & & & & & \\
			
			$\circ$ &  & $X^{\pop}_{1,0,1}$ & & $\circ$ &  & $X^{\pop}_{1,0,1}$ & & & & $\circ$ &  & $X^{\pop}_{3,2,3}$ & & $\circ$ &  & $X^{\pop}_{3,2,3}$ & & & & & & \\
			
			& & $X^{\push}_{0,n,1}$ & & $X^{\push}_{1,+,0}$ & & $X^{\push}_{0,n,1}$ & & $X^{\push}_{1,\times,0}$ & & $X^{\push}_{0,(,2}$ & & $X^{\push}_{2,n,3}$ & & $X^{\push}_{3,+,2}$ & & $X^{\push}_{2,n,3}$ & & $X^{\shift}_{3,),3}$ & & %
            \\
			
			\# & & $n$ & $\ \ $ & $+$ & & $n$ & $\ \ $ & $\times$ & $\ \ $ & $($ & & $n$ & $\ \ $ & $+$ & & $n$ & $\ \ $ & $)$ & $\ \ $ & \# \\
			0 & & 1 &  & 2  & & 3 &  & 4 &  & 5  &  & 6  &  & 7 &  & 8 &  & 9  &   & 10 \\
		};
		
		\draw[-] (m-10-1)   to [out=65, in=145]  (m-10-3);
		\draw[-] (m-10-5)   to [out=65, in=145]  (m-10-7);
		\draw[-] (m-10-11)  to [out=65, in=145] (m-10-13);
		\draw[-] (m-10-15)  to [out=65, in=145] (m-10-17);
		
		\draw[-] (m-2-1)  to [out=50, in=145]  (m-2-5); 
		\draw[-] (m-8-11) to [out=50, in=145]  (m-8-15);
		
		\draw[-] (m-4-5)  to [out=40, in=145] (m-4-9); 
		
		\draw[-] (m-6-9)  to [out=30, in=155] (m-6-19);
		\end{tikzpicture}
        \end{adjustbox}
		\caption{The string of Figure~\ref{fig:exp} with the second order variables evidenced for the automaton of Figure~\ref{fig:exprAut}. The symbol $\circ$  marks the positions of the symbols that precede the push corresponding to the bound pop transition.}
    \label{fig:log}
	\end{figure}
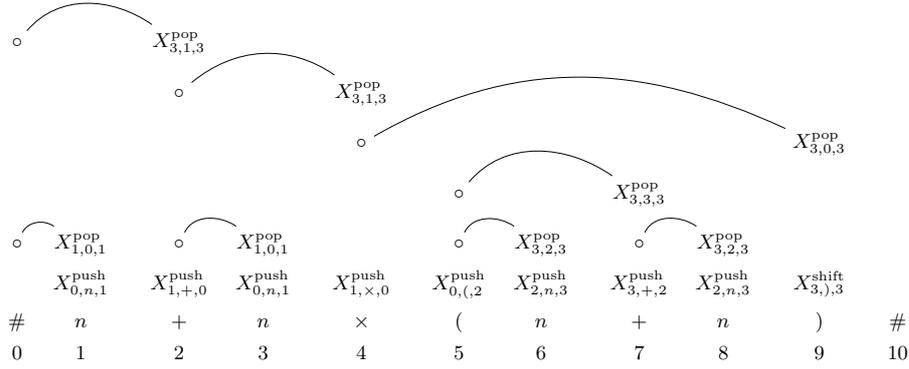
	\par
	Let $\mathcal{V}$ consist of all $X_{p,a,q}^{\push}$, $X_{p,a,q}^{\shift}$, and $X_{p,q,r}^{\pop}$ such that $a \in \Sigma$, $p,q,r \in Q$ and $(p,a,q) \in \delta_\push$ resp. $\delta_\shift$, resp. $(p,q,r) \in \delta_\pop$.
    Since $\Sigma$ and $Q$ are finite,
    there is an enumeration $\bar{X}=(X_1,..,X_m)$ of all variables of $\mathcal{V}$. %
	We denote by $\bar{X}^{\push}$, $\bar{X}^{\shift}$, and  $\bar{X}^{\pop}$ enumerations over only the respective set of second order variables.
	\par
	We use the following usual abbreviations for unweighted formulas of $\MSO$:
	\begin{align*}
		(\beta \wedge \varphi) &= \neg(\neg \beta \vee \neg \varphi),\\
		(\beta \rightarrow \varphi) &= (\neg \beta \vee \varphi),\\
		(\beta \leftrightarrow \varphi) &= (\beta \rightarrow \varphi) \wedge (\varphi \rightarrow \beta),\\
		(\forall x.\varphi )&= \neg (\exists x. \neg \varphi),\\
		(y=x) &= (x \le y) \wedge (y \le x), \\
		(y=x+1) &=(x \leq y) \wedge \neg(y \leq x) \wedge \forall z.(z \leq x \vee y\leq z), \\
		\min(x) &= \forall y. ( x \leq y), \\
		\max(x) &= \forall y. ( y \leq x), \\
	\end{align*}
	Additionally, we use the shortcuts $\Tree(x,z,v,y)$, $\Next_i(x,y)$, $Q_i(x,y)$, and $\Tree_{p,q}(x,z,v,y)$, originally defined in \cite{LMPP15}, reported and adapted here for convenience:
    \[
		x \circ y  :=  
        	\bigvee_{a,b \in \Sigma, M_{a,b} = \circ} \Lab_a(x) \land \Lab_b(y), 
			\text{ for }  
			\circ \in \{\lessdot, \doteq, \gtrdot \}
            \]
\[
		\Tree (x,z,v,y)  :=  
			x \avv y \land
			\left(
				\begin{array}{c}
					( x+1 = z \ \lor\  x \avv z )
                    	\land
                    	\neg \exists t ( z < t < y \land x \avv t ) 
					\\ \land \\
					( v+1 = y \ \lor\ v \avv y )
                    	\land
                    	\neg \exists t ( x < t < v \land t \avv y ) 
				\end{array}
			\right)
          \]
      In other words, Tree holds among the four positions $(x, z, v, y)$ iff, at the time when a pop transition is executed:
$x$ (resp. $y$) is the rightmost leaf at the left (resp. the leftmost at the right) of the subtree whose scanning (and construction if used as a parser) is completed by the OPA through the current transition;
$z$ and $y$ are the leftmost and rightmost terminal characters of the right hand side of the grammar production that is reduced by the pop transition of the OPA~\cite{LMPP15}.
For instance, with reference to Figures~\ref{fig:exp} and \ref{fig:log}, $\Tree(5,7,7,9)$ and $\Tree(4,5,9,10)$ hold.
          \[
		\Succ_q (x, y)  :=  
        	(x+1 = y) \land
        	\bigvee_{p \in Q, a \in \Sigma}
            (
        		x \in X^{\push}_{p,a,q}
            	\lor x \in X^{\shift}_{p,a,q}
                \lor \min(x)	%
            )
            \]
 I.e., $y$ is the position adjacent to $x$, $\Lab_a(y)$ and, while reading $a$, the OPA reaches state $q$, either through a push or through a shift move.         

\[
        \Next_r (x,y)  := 
		\exists z \exists v .  
		\left(\Tree(x,z,v,y) \land 
		\bigvee_{p,q \in Q}
		v \in X^{\pop}_{p,q,r}\right)
        \]
      I.e., $\Next_r (x,y)$ holds when a pop move reduces a subtree enclosed between positions $x$ and $y$ reaching state $r$.
		\[
		Q_i(x,y)  := 
        	\Succ_i (x,y) \lor \Next_i (x,y)
        \]
        Finally, 
        \[
        \Tree_{i,j} (x,z,v,y)  :=  
        	\Tree(x,z,v,y) \land Q_i(v,y) \land Q_j(x,z) %
            \]
refines the predicate $\Tree$ by making explicit that $i$ and $j$ are, respectively, the current state and the state on top of the stack when the pop move is executed.

	We now define the unweighted formula $\psi$ to characterize all accepted runs
	\begin{align*}\psi&=Partition(\bar{X}^{\push},\bar{X}^{\shift})\wedge Unique(\bar{X}^{\pop}) %
	\wedge InitFinal \\
	&~~\wedge
	Trans_{\push} \wedge Trans_{\shift} \wedge Trans_{\pop} \enspace.\end{align*}
	Here, the subformula $Partition$ will enforce the push and shift sets to be (together) a partition of all positions. 
    $InitFinal$ controls the initial and the acceptance condition and $Trans_{op}$ the transitions of the run together with the labels.
	\begin{align*}
	Parti&tion(X_1,...,X_n)=\forall x. \bigvee^n_{i=1}\big[(x \in X_i) \wedge \bigwedge_{i \neq j} \neg (x \in X_j)\big] \enspace, \\
	Un&ique(X^{\pop}_1,..,X^{\pop}_n)=\forall x. \bigwedge_{i \neq j} \neg (x \in X_i^\pop \wedge x \in X_j^\pop) \enspace, \\
	InitFinal &=\exists x \exists y \exists x' \exists y'. %
		\big [
			\min(x)
			\wedge
			\max(y)
            \wedge
            x + 1 = x'
            \wedge
            y' + 1 = y
		\\ &\hspace{2cm}	\wedge
			\bigvee_{\substack{i \in I,\, q \in Q \\ a\in \Sigma}} %
				x' \in X^{\push}_{i,a,q} %
		\\ &\hspace{2cm}	\wedge
			\bigvee_{\substack{f \in F,\, q \in Q \\ a\in \Sigma}}
				(y' \in X^{\push}_{q,a,f} %
				\vee
				y' \in X^{\shift}_{q,a,f}) %
		\\ &\hspace{2cm}	%
        	\wedge			%
			\bigvee_{f\in F}
            	(\Next_f(x,y)
				\wedge
				\bigwedge_{j\neq f}\neg\Next_j(x,y))
		\big ] \enspace,
    \end{align*}
    \begin{align*}
	Trans_{\push} &= \forall x. \bigwedge_{p,q \in Q, a\in \Sigma}
    \big(
		x\in X^{\push}_{p,a,q} \rightarrow
        \big[
			\Lab_a(x)		%
			\wedge
			\exists z.
            (
				z \lessdot x
				\wedge
                Q_p(z,x)	%
			)
		\big]
    \big)
	\\
	Trans_{\shift} &= \forall x. \bigwedge_{p,q \in Q, a\in \Sigma}
    \big(
		x\in X^{\shift}_{p,a,q} \rightarrow
        \big[
			\Lab_a(x)		%
			\wedge
			\exists z.
            (
				z \doteq x
				\wedge
                Q_p(z,x)	%
			)
		\big]
    \big)
		\enspace.
	\end{align*}
    I.e., if $x \in X^{\push}_{p,a,q}$ (resp. $X^{\shift}$) %
    the formula holds in a run where,
    reading character $a$ in position $x$, the automaton performs a push (resp. a shift) reaching state $q$ from $p$; 
    this may occur when $z \lessdot x$ (resp., $z \doteq x$) is immediately adjacent to $x$ or after a subtree between positions $z$ and $x$ has been built. 
Notice that the converse too of the above implications holds, due to the fact that the whole set of string positions is partitioned into the two disjoint sets $X^\push$, $X^\shift$.
\[
   	Trans_{\pop} = \forall v. \bigwedge_{p,q \in Q}
   	\big(
	   	\big[
		   	\bigvee_{r\in Q} v\in X^{\pop}_{p,q,r}
		\big]
		\leftrightarrow
	   	\big[
		   	\exists x\exists y\exists z.
		   	(
			   	\Tree_{p,q}(x,z,v,y)
			)
	   	\big]
   	\big)
\]

Thus, with arguments similar to \cite{LMPP15} it can be shown that the sentences satisfying $\psi$ are exactly those recognized by the unweighted OPA subjacent to $\mathcal{A}$. %
\par
For an unweighted formula $\beta$ and two weights $k_1$ and $k_2$, we define the following shortcut for an almost boolean weighted formula:
\[
	\myIf \beta \myThen k_1 \myElse k_2 = (\beta \otimes k_1) \oplus (\neg \beta \otimes k_2) \enspace .
\] %
Now, we add weights to $\psi$ by defining the following restricted weighted formula
	\begin{align*}
		\theta=\psi\otimes
		{\textstyle\prod_x}
		\mathop{\otimes}\limits_{p,q \in Q}
		\big (
			&\mathop{\otimes}\limits_{a \in \Sigma}
                (\myIf x\in X^{\push}_{p,a,q} \myThen\wt_{\push}(p,a,q)\myElse 1)\\
			\otimes &\mathop{\otimes}\limits_{a \in \Sigma}
                (\myIf x\in X^{\shift}_{p,a,q} \myThen\wt_{\shift}(p,a,q)\myElse 1) \\
			\otimes &\mathop{\otimes}\limits_{r \in Q}
                (\myIf x\in X^{\pop}_{p,q,r} \myThen\wt_{\pop}(p,q,r)\myElse 1)
		\big )
		\enspace. 
	\end{align*}
    Here, the second part of $\theta$ multiplies up all weights of the encountered transitions. 
	This is the crucial part where we either need that $\mathbb{K}$ is commutative %
    or all pop weights are trivial %
    because the product quantifier of $\theta$ assigns the pop weight at a different position than the occurrence of the respective pop transition in the automaton. Using only one product quantifier (weighted universal quantifier) this is unavoidable, since the number of pops at a given position is only bounded by the word length.
	\par
	Since the subformulas $x\in X ^{()}_{()} \otimes \wt(...)$ of $\theta$ are almost boolean, the subformula $\prod_x (...)$ of $\theta$ is $\prod$-restricted. Furthermore, $\psi$ is boolean and so $\theta$ is $\otimes$-restricted. Thus, $\theta$ is a restricted formula. \par
	Finally, we define
	\begin{align*}\varphi = \textstyle\bigoplus_{X_1}\bigoplus_{X_2}...\bigoplus_{X_m} \theta \enspace.\end{align*}
	This implies $\llbracket \varphi \rrbracket (w) = \llbracket \mathcal{A} \rrbracket (w)$, for all $w \in (\Sigma,M)^+$.
	Therefore, $\varphi$ is our required sentence with $\llbracket \mathcal{A} \rrbracket = \llbracket \varphi \rrbracket$.
	\qed
\end{proof}

The following theorem summarizes the main results of this section.
\begin{theorem}
	\label{main}
	Let $\mathbb{K}$ be a %
	semiring and $S: (\Sigma,M)^+\rightarrow K$ a series.
	\begin{enumerate}%
		\item 	The following are equivalent:
		\begin{enumerate}[(i)]
			\item $S=\llbracket \mathcal{A} \rrbracket$ for some restricted wOPA.
			\item $S=\llbracket \varphi \rrbracket$ for some restricted sentence $\varphi$ of $\MSO(\mathbb{K})$.
		\end{enumerate}
		\item 	Let $\mathbb{K}$ be commutative. Then, the following are equivalent:
		\begin{enumerate}[(i)]
			\item $S=\llbracket \mathcal{A} \rrbracket$ for some wOPA.
			\item $S=\llbracket \varphi \rrbracket$ for some restricted sentence $\varphi$ of $\MSO(\mathbb{K})$.
		\end{enumerate}
	\end{enumerate}
\end{theorem}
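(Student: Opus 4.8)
The plan is to derive Theorem~\ref{main} directly from the two preceding propositions, since all of the constructive work has already been carried out there. Concretely, I would establish each required equivalence by citing the appropriate proposition, and dispatch the single remaining direction by a trivial inclusion of automaton classes. There are four implications to verify: the two directions of part~1 (arbitrary $\mathbb{K}$) and the two directions of part~2 (commutative $\mathbb{K}$).

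For part~1, the implication (ii)~$\Rightarrow$~(i) is exactly Proposition~\ref{prop:FormulaToRec}: every restricted $\MSO(\mathbb{K})$-sentence $\varphi$ satisfies $\llbracket\varphi\rrbracket = \llbracket\mathcal{A}\rrbracket$ for some rwOPA $\mathcal{A}$. Conversely, (i)~$\Rightarrow$~(ii) is the first assertion of Proposition~\ref{prop:RecToFormula}, which produces, for every rwOPA $\mathcal{A}$, a restricted sentence $\varphi$ with $\llbracket\varphi\rrbracket = \llbracket\mathcal{A}\rrbracket$. Together these settle part~1 for arbitrary semirings.

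For part~2, assume $\mathbb{K}$ is commutative. The implication (i)~$\Rightarrow$~(ii) is the second assertion of Proposition~\ref{prop:RecToFormula}, which covers general wOPA in the commutative case. For the converse (ii)~$\Rightarrow$~(i), I would observe that Proposition~\ref{prop:FormulaToRec} already yields an rwOPA recognizing $\llbracket\varphi\rrbracket$, and that every rwOPA is in particular a wOPA, namely one whose pop weights are constantly $1$; hence $\llbracket\varphi\rrbracket$ is recognized by a wOPA.

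Since every step is a direct appeal to an earlier result, I do not expect any genuine obstacle at this stage; the theorem is essentially a bookkeeping summary. The only point worth emphasizing is the asymmetry between the two parts: over arbitrary semirings the automaton side must be restricted to rwOPA in order to match the restricted logic, whereas over commutative semirings the earlier collapse of wOPA to rwOPA lets the full automaton model coincide with the same restricted logic. The substantive difficulty lives entirely in Proposition~\ref{prop:closureforall} (handling $\prod_x$ via OPL step functions and a deterministic OPA for the encoded language) and in the run-encoding of Proposition~\ref{prop:RecToFormula}, both of which are already proved.
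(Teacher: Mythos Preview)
Your proposal is correct and matches the paper's approach: the paper presents Theorem~\ref{main} explicitly as a summary of the preceding results and gives no separate proof, so the intended argument is exactly the one you outline, namely invoking Proposition~\ref{prop:FormulaToRec} for (ii)$\Rightarrow$(i) and Proposition~\ref{prop:RecToFormula} for (i)$\Rightarrow$(ii), together with the trivial inclusion of rwOPA in wOPA for part~2.
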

Theorem \ref{main} documents a further step in the path of generalizing a series of results beyond the barrier of regular and structured --or visible-- CFLs. Up to a few years ago, major properties of regular languages, such as closure w.r.t. all main language operations, 
decidability results, logic characterization, and, in this case, weighted language versions, could be extended to several classes of structured CFLs, among which the VPL one certainly obtained much attention. OPLs further generalize the above results not only in terms of strict inclusion, but mainly because they are not visible, in the sense explained in the introduction, nor are they necessarily real-time: this allows them to cover important applications that could not be adequately modeled through more restricted classes.

Theorem \ref{main} also shows that the typical logical characterization of weighted languages does not generalize in the same way to the whole class wOPL: for non-rwOPL we need the extra hypothesis that $\mathbb{K}$ be commutative. This is due to the fact that pop transitions are applied in the reverse order than that of positions to which they refer (position $v$ in formula $Trans_{\pop}$). 
Notice, however, that rwOPL do not forbid unbounded pop sequences; thus, they too include languages that are neither real-time nor visible. This remark naturally raises new intriguing questions which we will briefly address in the conclusion.
	\section{Conclusion}\label{section:concl}
    We introduced and investigated weighted operator precedence automata and a corresponding weighted MSO logic.
    In our main results we show, for any semiring, that wOPA without pop weights and a restricted weighted MSO logic have the same expressive power; furthermore, these behaviors can also be described as homomorphic images of the behaviors of particularly simple wOPA reduced to arbitrary unweighted OPA.
    If the semiring is commutative, these results apply also to wOPA with arbitrary pop weights.

This raises the problems to find, for arbitrary semirings and for wOPA with pop weights, both an expressively equivalent weighted MSO logic and a Nivat-type result. In \cite{DV06}, very similar problems arose for weighted automata on unranked trees and weighted MSO logic.
	In \cite{DHV15}, the authors showed that with another definition of the behavior of weighted unranked tree automata, an equivalence result for the restricted weighted MSO logic could be derived.
Is there another definition of the behavior of wOPA (with pop weights) making them expressively equivalent to our restricted weighted MSO logic?

In \cite{LMPP15}, operator precedence languages of infinite words were investigated and shown to be practically important. Therefore, the problem arises to develop a theory of wOPA on infinite words.
	In order to define their infinitary quantitative behaviors, one could try to use valuation monoids as in \cite{DM12}.

Finally, a new investigation field can be opened by exploiting the natural suitability of OPL towards parallel elaboration \cite{BarenghiEtAl2015}. Computing weights, in fact, can be seen as a special case of semantic elaboration which can be performed hand-in-hand with parsing. In this case too, we can expect different challenges depending on whether the weight semiring is commutative or not and/or weights are attached to pop transitions too, which would be the natural way to follow the traditional semantic evaluation through synthesized attributes \cite{knuth1968}.
\bibliography{DDMP17bib}{}

\end{document}